\newtheorem{theorem}{Theorem}[section]
\newtheorem{lemma}[theorem]{Lemma}
\newtheorem{corollary}[theorem]{Corollary}
\newtheorem{fact}[theorem]{Fact}
\newtheorem{claim}{Claim}
\theoremstyle{definition}
\newtheorem{definition}{Definition}[section]
\newenvironment{proofof}[1]{\vspace{0.1in}\noindent{\em Proof of #1.}}{\hfill\qed}
\newenvironment{numberedtheorem}[1]{%
\begin{theorem}}{\end{theorem}\addtocounter{theorem}{-1}}
\newenvironment{numberedlemma}[1]{%
\begin{lemma}}{\end{lemma}\addtocounter{theorem}{-1}}
\newcommand{\half}{\frac12}
\newcommand{\eps}{\varepsilon}
\newcommand{\argmax}{\operatorname{arg\,max}}
\newcommand{\argmin}{\operatorname{arg\,min}}
\newcommand{\union}{\cup}
\newcommand{\Union}{\bigcup}
\newcommand{\intersect}{\cap}
\newcommand{\Land}{\mathord{\adjustbox{valign=B,totalheight=.7\baselineskip}{$\bigwedge$}}}
\newcommand{\prob}[2][]{\textrm{\bf Pr}\ifthenelse{\not\equal{}{#1}}{_{#1}}{}\!\left[#2\right]}
\newcommand{\expect}[2][]{\textrm{\bf E}\ifthenelse{\not\equal{}{#1}}{_{#1}}{}\!\left[#2\right]}
\newcommand{\alloc}[1][]{X\ifthenelse{\not\equal{}{#1}}{_{#1}}{}}
\newcommand{\paymt}[1][]{P\ifthenelse{\not\equal{}{#1}}{_{#1}}{}}
\newcommand{\price}[1][]{p\ifthenelse{\not\equal{}{#1}}{_{#1}}{}}
\newcommand{\prices}{\mathbf{p}}
\newcommand{\feas}{{\mathcal{F}}}
\newcommand{\sw}{{\normalfont\textsc{SW}}}
\newcommand{\rev}{{\normalfont\textsc{Rev}}}
\newcommand{\util}{{\normalfont\textsc{Util}}}
\newcommand{\opt}{{\normalfont\textsc{Opt}}}
\newcommand{\costs}{\vec{c}}
\newcommand{\cti}[1][ti]{c_{#1}}
\newcommand{\capt}{B_t}
\newcommand{\alg}{\mathscr{A}}
\newcommand{\coloralg}{\mathscr{B}}
\newcommand{\items}{T}
\newcommand{\tlayer}{J}
\newcommand{\vj}[1][j]{v_{#1}}
\newcommand{\vmax}{v_{\max}}
\newcommand{\vmin}{v_{\min}}
\newcommand{\intj}[1][j]{{I_{#1}}}
\newcommand{\Jvi}{J_{\vali}}
\newcommand{\larmj}[1][j]{{{#1}^1}}
\newcommand{\rarmj}[1][j]{{{#1}^2}}
\newcommand{\armj}[1][j]{{{#1}^a}}
\newcommand{\peakj}[1][j]{{\Lambda_{#1}}}
\newcommand{\contrib}{\gamma}
\newcommand{\phat}{\hat{p}}
\newcommand{\val}{v}
\newcommand{\valj}[1][j]{\val_{#1}}
\newcommand{\vali}[1][i]{\val_{#1}}
\newcommand{\disti}[1][i]{F_{#1}}
\NewDocumentCommand{\soldtk}{ O{t} O{k} }{{\textrm{SOLD}(#1,#2)}}
\NewDocumentCommand{\soldtkj}{ O{t} O{k} }{{\textrm{SOLD}_j(#1,#2)}}
\newcommand{\indicate}{\mathbbm{1}}
\newcommand{\ind}{\mathbbm{1}}
\newcommand{\pricek}{{\price[k]}}
\newcommand{\overbar}[1]{\mkern 1.5mu\overline{\mkern-1.5mu#1\mkern-1.5mu}\mkern 1.5mu}
\newcommand{\fopt}{{\normalfont\textsc{FracOpt}}}
\newcommand{\fval}{{\normalfont\textsc{FracVal}}}
\newcommand{\fwt}{{\normalfont\textsc{FracWt}}}
\newcommand{\Int}{{\mathbbm{I}}}
\newcommand{\heavy}{{\mathcal{H}}}
\newcommand{\light}{{\mathcal{L}}}
\newcommand{\Gtilde}{{\tilde{G}}}
\newcommand{\layer}[1][r]{Y_{#1}}
\newcommand{\Bmax}{{B_{\max}}}
\newcommand{\xr}[1][r]{{\tilde{x}^{(#1)}}}
\newcommand{\xpr}[1][r]{{x'^{(#1)}}}
\newcommand{\qj}{{q_{j}}}
\newcommand{\qvi}[1][i]{{q_{v_{#1}}}}
\newcommand{\xj}[1][j]{{x_{#1}}}
\newcommand{\densj}[1][j]{\rho_{#1}}
\newcommand{\fvt}[1][t]{\textsc{val}_{#1}}
\newcommand{\fvI}[1][I]{\textsc{val}\left(#1\right)}
\newcommand{\lmax}{{\ell_{\max}}}
\newcommand{\offset}{{t_0}}
\newcommand{\fwe}[1][t]{\textsc{w}_{#1}}
\def\Z{\ensuremath{\mathbb{Z}}}
\begin{document}
\title{Pricing for Online Resource Allocation: Intervals and
  Paths}
\author{Shuchi Chawla
\thanks{
    {University of Wisconsin-Madison}. \tt{\{shuchi, bmiller, yifengt\}@cs.wisc.edu.}}
\and J. Benjamin Miller$^\dagger$
\and Yifeng Teng$^\dagger$
}
\date{}

\maketitle
\thispagestyle{empty}

\begin{abstract}
We present pricing mechanisms for several online resource allocation
problems which obtain tight or nearly tight approximations to social
welfare. In our settings, buyers arrive online and purchase bundles of
items; buyers' values for the bundles are drawn from known
distributions. This problem is closely related to the so-called
prophet-inequality of \citet{KS-78} and its extensions in recent
literature. Motivated by applications to cloud economics, we consider
two kinds of buyer preferences. In the first, items correspond to
different units of time at which a resource is available; the items
are arranged in a total order and buyers desire {\em intervals} of
items. The second corresponds to bandwidth allocation over a tree
network; the items are edges in the network and buyers desire {\em
paths}.


Because buyers' preferences have complementarities in the settings we
consider, recent constant-factor approximations via item prices do not
apply, and indeed strong negative results are known. We develop
{\em static, anonymous bundle pricing} mechanisms.


For the interval preferences setting, we show that static, anonymous
bundle pricings achieve a sublogarithmic competitive ratio, which is
optimal (within constant factors) over the class of all online
allocation algorithms, truthful or not. For the path preferences
setting, we obtain a nearly-tight logarithmic competitive ratio. Both
of these results exhibit an exponential improvement over item pricings
for these settings. Our results extend to settings where the seller
has multiple copies of each item, with the competitive ratio
decreasing linearly with supply. Such a gradual tradeoff between
supply and the competitive ratio for welfare was previously known only
for the single item prophet inequality.

\end{abstract}
\newpage

\setcounter{page}{1}

\section{Introduction}

Consider an online resource allocation setting in which a seller offers
multiple items for sale and buyers with preferences over bundles of items
arrive over time. We desire an incentive-compatible mechanism for allocating
items to buyers that maximizes social welfare---the total value obtained by
all the buyers from the allocation. In the offline setting, this is easily
achieved by the VCG mechanism. In the online context, however, we would like
the mechanism to allocate items and charge payments as buyers arrive, without
waiting for future arrivals.  What would such a mechanism look like, and can it
obtain close to the optimal social welfare?

We consider two settings for online resource allocation motivated by
applications in cloud economics. In the {\em interval preferences} setting, a
cloud provider has multiple copies of a single resource available to allocate
over time. Customers have jobs that require renting the resource for some
amount of time. If we think of each time unit as a different item,
customers desire {\em intervals} of items. Different intervals,
corresponding to scheduling a job at different times or renting the resource
for different lengths of time, may bring different values to the customer. This
model closely follows the framework described in \cite{babaioff2017era}. Our
second setting, the {\em path preferences setting}, models bandwidth allocation
over a communication network. Each customer is a source-sink pair in the
network that wishes to communicate, and assigns values to paths in the network
between the source and the sink. The items are the edges in the network. We
focus on the special case where the network is a tree.\footnote{Observe that
  the interval preferences setting is a special case of the path preferences
  setting.} See \cite{Jalaparti2016} for further motivation behind this
setting.

It is easy to observe that because of the online nature of the problem
no algorithm for online resource allocation, truthful or otherwise,
can obtain optimal social welfare: the competitive ratio is at least
$2$ even when there is only a single item for sale and the seller
knows both the order of arrival of buyers as well as their value
distributions.\footnote{Suppose there is one item and two buyers. The
  buyer that arrives first has value $1$ for the item; the second
  buyer has value $1/\eps$ with probability $\eps$, and $0$ otherwise,
  for some small $\eps>0$. The optimal allocation is to give the item
  to the first buyer if the second buyer has zero value for it, and
  otherwise give it to the second buyer. This achieves social welfare
  of $2-\eps$ in expectation over the buyers' values. However, any
  online algorithm produces an allocation with expected welfare at
  most~$1$.}  In a remarkable result, \citet{FGL15} showed that this
gap of 2 is the worst possible over a large class of buyer
preferences: a particularly simple and natural incentive-compatible
mechanism, namely {\bf posted item pricing}, achieves a
$2$-approximation to the optimal social welfare in those settings.

Posted pricing is perhaps the most ubiquitous real-world mechanism for
allocating goods to consumers. Supermarkets are a familiar example:
the store determines prices for items, which may be sold individually
or packaged into bundles. Customers arrive in arbitrary order and
purchase the items they most desire at the advertised prices, unless
they're sold out. Many other domains have a similar sequential posted
pricing format, from airfares to online retail to concert tickets.
\citeauthor{FGL15}'s results apply to settings where buyers' values
over bundles of items are fractionally subadditive, a.k.a. XOS. In
these settings, the seller determines a price for each item based on
his knowledge of the buyers' value distributions. These prices
are anonymous and static, meaning that the same prices are offered to
each customer and remain unchanged until supply runs out.

However, the settings we consider exhibit complementarity in buyers' values:
buyers may require certain minimal bundles of items to satisfy their
requirements; anything less brings them zero value. For these settings,
\citet{FGL15} show that anonymous item pricings cannot achieve a competitive
ratio better than linear in the degree of complementarity (that is, the size of
the minimal desired bundle of items). See Footnote~\ref{note:item-pricing-L}
for an example. Is a better competitive ratio possible? Can it be achieved
through a truthful mechanism as simple as static, anonymous item pricings?

\vspace{0.15in}

\parbox[c]{0.95\textwidth}{{\bf We show that near-optimal
    competitive ratios can be achieved for the interval and path
    preferences settings via a static, anonymous bundle pricing
    mechanism. }}

\vspace{0.15in}

Our mechanism is a {\bf posted bundle pricing}: the seller partitions
items into bundles, and prices each bundle based on his knowledge of
the distribution of buyers' preferences. Customers arrive in arbitrary
order and purchase the bundles they most desire at the advertised
prices, unless they're sold out. As in \citeauthor{FGL15}'s work, our
bundle pricings are static and anonymous.

We now elaborate on our results and techniques.

\subsection{Our results}

Recall that in the interval preferences setting, items are arranged in
a total order and buyers desire intervals of items. We assume that
each buyer's value function is drawn independently from some arbitrary
but known distribution over possible value vectors. The seller's
computational problem is essentially a stochastic online interval
packing. Let $L$ denote the length of the longest interval that may be
desired by some buyer. \citet{FGL15} show that no item pricing can
achieve a $o(L)$ competitive ratio in this
setting. \citet{im2011secretary} previously showed that in fact no
online algorithm can achieve a competitive ratio of
$o\left(\frac{\log L}{\log\log L}\right)$. Our first main result matches this
lower bound to within constant factors.

\begin{theorem}
\label{thm:unit-cap-ub}
For the interval preferences setting, there exists a static, anonymous bundle
pricing with competitive ratio $O\left(\frac{\log L}{\log\log
    L}\right)$ for social welfare.
\end{theorem}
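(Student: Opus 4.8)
The plan is to construct the bundle pricing by geometric bucketing along two axes---interval length and value scale---and then to argue that with constant probability the revenue-plus-residual-welfare argument standard for prophet inequalities goes through on each bucket, with only a $\Theta(\log L/\log\log L)$ loss from the number of length buckets. First I would fix a parameter $k = \Theta(\log L/\log\log L)$ and partition the ground set $[T]$ into blocks, where at ``level $r$'' the items are grouped into consecutive bundles of length roughly $L^{r/k}$ for $r = 0,1,\dots,k$; a buyer desiring an interval $I$ of length $\ell$ will be steered to the coarsest level whose bundle size is at most $\ell$, i.e.\ level $r(\ell)$ with $L^{r/k} \le \ell < L^{(r+1)/k}$, so that $I$ overlaps at most a constant ($\le L^{1/k}+1 = O(\log L/\log\log L)$, actually we want this to be $O(1)$, so set bundle length $\le \ell/2$) number of level-$r(\ell)$ bundles. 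The point of the geometric spacing is that the total ``budget'' of bundles a buyer consumes at her assigned level is $O(1)$ while the number of distinct levels is only $k+1 = O(\log L/\log\log L)$.

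Next I would set prices. For a fixed level $r$, restrict attention to the sub-instance consisting only of buyers assigned to that level; on this sub-instance each buyer wants a union of $O(1)$ consecutive level-$r$ bundles, so after merging we have essentially a combinatorial auction where the ``atoms'' are level-$r$ bundles and each buyer wants a contiguous run of $O(1)$ atoms. The key structural claim I would prove is that on such an instance there is a static anonymous price per atom (equivalently per level-$r$ bundle) achieving an $O(1)$-competitive ratio for the welfare obtainable from level-$r$ buyers --- this is the ``intervals of bounded length'' base case, and it should follow from a balanced-prices / Feldman--Gravin--Lucier-style argument once the interval length is a constant, since bounded-length intervals behave like bounded-degree complements. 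Then the final price of any concrete bundle $B$ in the mechanism is obtained by summing the per-atom prices of the atoms it contains (so the pricing really is a single static anonymous menu over all bundles across all levels, with each item appearing in nested bundles of every length $L^{r/k}$ and priced additively within a level).

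The accounting step ties it together. Let $\opt$ be the optimal offline welfare and split it as $\opt = \sum_r \opt_r$ where $\opt_r$ is the contribution of buyers the optimum would (notionally) serve whose desired interval has length in $[L^{r/k}, L^{(r+1)/k})$; pick the level $r^\star$ maximizing $\opt_{r^\star}$, so $\opt_{r^\star} \ge \opt/(k+1) = \opt \cdot \Omega(\log\log L/\log L)$. Now run the base-case argument \emph{only} for level $r^\star$: set the per-atom prices at every level so that, at level $r^\star$, they are the $O(1)$-competitive prices for the level-$r^\star$ sub-instance scaled appropriately, and set prices at all other levels high enough (e.g.\ infinite, or simply not offered) so no buyer buys at the wrong level. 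Then by the standard prophet-inequality argument --- either a buyer in the $\opt_{r^\star}$ set buys something (contributing to revenue, which lower-bounds welfare) or the bundle she wanted is still available and she captures its residual value against the posted price --- the expected welfare is $\Omega(1)\cdot \opt_{r^\star} = \Omega(\log\log L/\log L)\cdot \opt$, giving competitive ratio $O(\log L / \log\log L)$.

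The main obstacle I expect is the base case: proving that for interval instances with \emph{constant} maximum length $c$ there is a static anonymous bundle (here, per-atom) pricing that is $O(1)$-competitive. One cannot just use item prices --- even length-$2$ intervals already force a price-coordination problem --- so the argument has to exploit that each buyer touches only $O(1)$ atoms and that atoms are linearly ordered; I anticipate a balanced-threshold argument where the price of each atom is set to a constant fraction of the expected optimal welfare ``routed through'' that atom, together with a charging argument that handles the contention between overlapping constant-length runs (an atom is claimed by at most $c$ potential runs, so a union bound / constant-degree conflict-graph argument caps the loss). A secondary subtlety is ensuring the whole construction yields one genuinely static, anonymous menu over bundles (not a per-level or adaptive object): this is handled by the observation above that restricting to level $r^\star$ and pricing everything else out of reach is itself a legal static anonymous bundle pricing, and by noting that the choice of $r^\star$ depends only on the known distributions, not on realized values.
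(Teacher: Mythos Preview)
Your plan has a genuine gap at the parameter arithmetic, and the gap is exactly the place where the paper's real idea enters.

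You set $k=\Theta(\log L/\log\log L)$ and at level $r$ use bundles of length $L^{r/k}$, so a buyer with interval length $\ell\in[L^{r/k},L^{(r+1)/k})$ touches at most $L^{1/k}+O(1)$ atoms.  But $L^{1/k}=2^{(\log L)/k}=2^{\log\log L}=\log L$, not $O(\log L/\log\log L)$ and certainly not $O(1)$.  Your parenthetical fix ``set bundle length $\le\ell/2$'' forces dyadic bucketing, which means $k=\Theta(\log L)$ levels, not $\Theta(\log L/\log\log L)$.  In other words, you cannot simultaneously have $O(\log L/\log\log L)$ levels and $O(1)$ atoms per buyer; one of the two must be $\Theta(\log L)$.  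With dyadic bucketing and your ``pick the best level $r^\star$'' step, you recover only the easy $O(\log L)$ bound the paper mentions as folklore.

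The base case you rely on---an $O(1)$-competitive static anonymous pricing when every buyer wants a run of at most $c$ atoms---is not established for $c$ growing with $L$, and with $c=\Theta(\log L)$ it is essentially the original problem at a smaller length scale.  (Item pricing is $\Theta(c)$-competitive here, and a balanced-threshold argument over $c$-length runs does not obviously beat that.)  So the recursion you sketch does not close.

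What the paper does instead is work with the full dyadic hierarchy ($\log L$ length scales) but avoid losing a $\log L$ factor by a heavy/light dichotomy on the fractional weight in each length--value bucket.  Buckets with weight $\ge 1/\beta$ (``heavy'') already capture an $\Omega(1/\beta)$ fraction of the value of their interval, so they can be turned into a unit allocation directly with $O(\beta)$ loss.  For the remaining ``light'' buckets, because each has weight $<1/\beta$, one can merge $\log\beta$ consecutive dyadic scales into a single partition while keeping total weight $\le 1$; this collapses the $\log L$ scales into $\log L/\log\beta$ super-levels, and choosing the best super-level loses only $O(\log L/\log\beta)=O(\beta)$.  Setting $\beta\log\beta=\log L$ balances the two cases and gives $\beta=O(\log L/\log\log L)$.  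This heavy/light step is precisely the missing idea in your proposal; without it, the bucketing-and-pick-the-best approach is stuck at $O(\log L)$.
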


The interval preferences setting was studied recently by
\citet{CDH+17}, who also designed a static {\em item} pricing
mechanism for the problem. \citeauthor{CDH+17} showed that when the
item supply is large, specifically, when the seller has
$\tilde\Omega(L^6/\eps^3)$ copies of each item for some $\eps>0$,
static anonymous item pricings achieve a $1-\eps$ approximation to
social welfare. \citeauthor{CDH+17}'s result suggests that there is a
tradeoff between item supply and the performance of posted pricings in
this setting. Our second main result maps out this tradeoff exactly
(to within constant factors) for {\em bundle} pricing. We show that
the approximation factor decreases inversely with item supply, and is
a constant when supply is $\Omega(\log L)$.  In other words, to
achieve a constant approximation via bundle pricing, we require an
exponentially smaller bound on the item supply relative to that
required by \citeauthor{CDH+17} for a $1-\eps$
approximation. Furthermore, this tradeoff is tight to within constant
factors. 

\begin{theorem}
\label{cor:large-cap-ub}
For the interval preferences setting, if every item has at least $B>0$
copies available, then a static, anonymous bundle pricing achieves a
competitive ratio of
\[
    O\left(\frac1B\frac{\log L}{\log\log L - \log B}\right)
\] 
when $B<\log L$, and $O(1)$ otherwise.
\end{theorem}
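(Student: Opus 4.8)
The plan is to derive Theorem~\ref{cor:large-cap-ub} by revisiting the bundle pricing behind Theorem~\ref{thm:unit-cap-ub} and re-tuning its single granularity parameter so that the extra supply $B$ is absorbed directly into the number of price scales. Recall that the $B=1$ mechanism is built on a hierarchical family of bundles with a branching factor $b$: scale $i$ consists of the length-$b^i$ intervals of a fixed partition of the item line (together with $O(1)$ shifted copies to handle boundary alignment), there are $k+1=O(\log_b L)$ scales, each realized buyer wanting an interval of length $\ell$ is charged to the \emph{canonical} scale $i(\ell)=\min\{i:b^i\ge\ell\}$ and to a scale-$i(\ell)$ bundle containing its interval, and a balanced-prices argument shows that each scale recovers an $\Omega(1/k)$ fraction of the optimum restricted to the buyers charged to it, for an overall ratio of $O(k)$; with $B=1$ the construction is balanced at $b=\Theta(\log L)$, giving $k=\Theta(\log L/\log\log L)$. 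I would re-run the entire construction with branching factor
\[
 b \;:=\; \bigl\lceil (\log L / B)^{B}\bigr\rceil ,
\]
which is well-defined and at least $2$ exactly when $B<\log L$; then $k=O(\log_b L)=O\!\bigl(\tfrac1B\cdot\tfrac{\log L}{\log\log L-\log B}\bigr)$, matching the claimed bound. In the complementary regime $B\ge\log L$ one instead takes the dyadic family, where having at least $\log L$ copies of each item renders cross-scale blocking negligible and the ratio becomes $O(1)$. It therefore suffices to show that, with $B$ copies of every item, the hierarchical bundle pricing at this coarser granularity is still $O(k)$-competitive.

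The mechanism and its accounting are unchanged in form: fix prices for all bundles in the family, let each arriving buyer purchase whichever offered bundle maximizes its surplus among those not yet sold out (fewer than $B$ copies sold), and bound welfare from below by expected revenue plus expected buyer surplus via the same balanced-thresholds decomposition as in Theorem~\ref{thm:unit-cap-ub}. For a buyer charged to scale $i$: either its canonical bundle still has an unsold copy, in which case its realized surplus is at least its surplus at that bundle, which by the choice of prices is an $\Omega(1/k)$ fraction of its contribution to the scale-$i$ optimum; or all $B$ copies of that bundle have been sold, in which case an $\Omega(1/k)$ fraction of its contribution is charged against revenue already collected. Summing over the $O(k)$ scales and over buyers yields the $O(k)$ bound, provided the per-scale guarantee survives the coarsening.

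The crux is exactly that per-scale guarantee. When $B=1$, selling one copy of a scale-$i$ bundle blocks \emph{every} overlapping buyer, so the per-scale rounding loss — a buyer wanting length just above $b^{i-1}$ is served a length-$b^i$ bundle — forces $b$ to be only polylogarithmic; with $B$ copies a scale-$i$ bundle keeps serving buyers until $B$ of them have been accommodated, so within any length-$b^i$ window the pricing extracts, up to a constant, the top $B$ values among scale-$i$ buyers in that window, which is within a constant factor of what \emph{any} feasible allocation using $B$ copies can extract from those buyers in that window — the factor-$b$ blowup in bundle length no longer translates into a factor-$b$ loss because it has been folded into the identity $\log b = B\log(\log L/B)$, i.e., the coarsening is paid for once, globally, by shrinking $k$, rather than scale by scale. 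Turning this into a rigorous per-scale bound is the main obstacle: it amounts to a multi-unit, balanced-threshold prophet argument for the $B$-copy single-bundle subproblem, uniform over all windows and scales, with prices calibrated to the scale-restricted optima now under $B$-unit supply. The remaining ingredients — the $O(1)$ shifted partitions guaranteeing that every desired interval lies inside some bundle at its canonical scale, and the check $B<\log L$ that keeps $b\ge 2$ while the dyadic argument covers $B\ge\log L$ — are routine.
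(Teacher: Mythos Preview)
Your plan rests on a mischaracterization of the unit-capacity argument: the paper's $B=1$ proof (Section~\ref{sec:unit-cap-ub}) does not use a $b$-ary hierarchy with $b=\Theta(\log L)$; it uses a dyadic hierarchy throughout, together with a heavy/light classification of job groups, and the parameter $\beta$ enters only through how many consecutive dyadic scales get merged in the light-weight case. There is no single ``granularity parameter $b$'' to retune. Even taken on its own terms your $B=1$ sketch is quantitatively off: in a $b$-ary hierarchy a length-$b^i$ bundle contains jobs as short as $b^{i-1}$, so its total fractional weight can be $\Theta(b)$, and turning such a bundle into a unit allocation costs a factor of $b$; combined with selecting one of $k=\log_b L$ scales this gives $O(bk)$, not the $O(k)$ you assert. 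Your extension to $B>1$ then adds a second gap. You propose to offer $B$ copies of every bundle at every scale, but bundles at distinct scales covering a common item draw on the \emph{same} pool of $B$ copies. The event ``all $B$ copies of my canonical bundle are gone'' can therefore be caused entirely by sales at other scales, priced for those scales, and your revenue charge does not go through. Since your number of scales $k$ can far exceed $B$ (e.g.\ $B=2$, $L$ large), you also cannot simply dedicate one copy of supply to each scale.

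The paper's actual proof is a clean black-box reduction and looks nothing like what you propose. It partitions jobs by length into $k=\tfrac12\min\{B,\log L\}$ groups, group $i$ holding jobs with lengths in $(\alpha^{i-1},\alpha^i]$ for $\alpha=L^{1/k}$, and partitions the supply so that group $i$ receives $B_{it}=\bigl\lceil\tfrac12\sum_{j\in U_i:t\in I_j}x^*_j\bigr\rceil$ copies of item $t$; the check $\sum_i B_{it}\le B_t$ needs only $k\le B_t/2$. Each group is then a standalone interval-preferences instance with length ratio at most $\alpha$, to which Theorem~\ref{thm:arbit-capacity} applies verbatim to produce a fractional unit allocation losing only $O(\log\alpha/\log\log\alpha)$. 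The union of these unit allocations, fed into Lemma~\ref{lem:FGL}, yields the pricing; substituting $\log\alpha=\tfrac{2}{B}\log L$ gives the stated bound when $B<\log L$, and $\alpha=4$ gives $O(1)$ otherwise. The point is that the extra supply is spent on making the groups \emph{disjoint} (so there is no cross-scale interference to analyze), not on multi-unit prophet arguments within a scale.
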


\begin{theorem}
\label{thm:lb}
For the interval preferences setting, if every item has at least $B>0$
copies available, no online algorithm can obtain a competitive ratio
of $o\left(\frac1B\frac{\log L}{\log\log L}\right)$ for social
welfare.
\end{theorem}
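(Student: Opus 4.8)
The plan is to construct a hard instance for any online algorithm, building on the classical $\Omega\!\left(\frac{\log L}{\log\log L}\right)$ lower bound of \citet{im2011secretary} for the unit-supply case and showing that extra supply only buys a linear improvement. The instance will be organized into $\Theta\!\left(\frac{\log L}{\log\log L}\right)$ ``scales'': at scale $r$, we consider buyers desiring intervals of length $\ell_r$, where the lengths grow geometrically-in-a-suitable-sense (e.g. $\ell_{r+1} \approx \ell_r \cdot \frac{\log L}{\log\log L}$), with $\ell_0 = \Theta(B)$ and $\ell_{\max} = L$. Each longer interval is partitioned into many shorter ones at the next scale down, and buyers arrive in order from the longest (coarsest) scale to the shortest (finest). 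The key stochastic device, as in the secretary/interval lower bounds, is that a buyer at scale $r$ realizes a nonzero (large) value only with small probability, so that in expectation the optimal welfare is achieved by serving the finest scale that happens to ``light up,'' but an online algorithm, having to commit before seeing later arrivals, cannot tell whether to accept a coarse buyer or hold out for a finer one.

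The first step is to formalize the recursive instance and compute $\opt$: by a direct inductive argument over scales, the expected optimal welfare is (up to constants) the same for every scale, say normalized to $1$, because the probability that a scale-$r$ interval is ``claimed'' by value is tuned against the number of sub-intervals it contains so that the contributions telescope. The second step is the online lower bound: condition on the realized values and argue that any online algorithm, at the moment a scale-$r$ buyer with positive value arrives, has already irrevocably allocated (or not) the underlying items; a charging/potential argument then shows the algorithm's expected welfare at each scale is only an $O(1/B)$ fraction of that scale's contribution when supply is $B$ --- the supply $B$ lets the algorithm ``hedge'' across at most $B$ of the competing finer requests, but no more. Summing the loss over all $\Theta\!\left(\frac{\log L}{\log\log L}\right)$ scales and dividing by the per-scale optimum yields the claimed $\Omega\!\left(\frac1B\frac{\log L}{\log\log L}\right)$ bound. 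One should also check the degenerate regime $B \ge \log L$ is not claimed (consistent with Theorem~\ref{cor:large-cap-ub}), so the parameters need $\ell_0 = \Theta(B)$ and the number of scales to stay $\Theta\!\left(\frac{\log L}{\log\log L}\right)$; a short calculation confirms $\frac{L}{B}$ still admits that many geometric steps of ratio $\frac{\log L}{\log\log L}$.

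The main obstacle is the second step: turning the ``online algorithm cannot hedge beyond its supply'' intuition into a clean per-scale quantitative loss bound. The subtlety is that an online algorithm may mix behavior across scales --- accepting some coarse buyers and rejecting others based on internal randomness or on which sub-intervals have already been consumed --- so the charging argument must hold against adaptive, randomized strategies. I expect to handle this by a minimax/averaging argument: fix the algorithm, then show that for a random scale $r$ and a random realization, the expected number of items the algorithm has ``wasted'' (committed to a coarse buyer when a finer buyer later wanted a sub-interval, or left idle) is at least a constant fraction of the scale's items unless $B$ is large, giving the $1/B$ factor. A secondary technical point is ensuring the value distributions are genuinely product distributions across buyers (as required by the model) rather than correlated; this is arranged by making each buyer's ``on'' event independent, with the recursive structure living in the combinatorial interval layout rather than in the randomness. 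Finally, since interval preferences are a special case of path preferences (per the footnote), this same construction will also serve in the path setting, though only the interval statement is needed here.
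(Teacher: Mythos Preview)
Your approach is quite different from the paper's, and the ``main obstacle'' you flag is real and unresolved in your proposal. The paper does \emph{not} construct a new hard instance at all; it gives a short black-box reduction to the unit-capacity lower bound. Specifically, suppose some online algorithm $\alg$ for the capacity-$B$ setting had competitive ratio $\alpha = o\!\left(\tfrac{\log L}{B\log\log L}\right)$. The paper converts $\alg$ into a unit-capacity algorithm $\alg'$ as follows: run $\alg$; feed every interval it accepts into an \emph{online interval coloring} algorithm (Kierstead~\cite{kierstead1981extremal}), which uses at most $3B-2$ colors whenever the maximum overlap is $\le B$; pick one color $c$ uniformly at random in advance, and have $\alg'$ actually accept only those intervals that receive color $c$. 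Since a single color class is a feasible unit-capacity schedule, $\alg'$ is valid, and in expectation it captures a $\tfrac{1}{3B-2}$ fraction of $\alg$'s welfare. Thus $\alg'$ would have competitive ratio $(3B-2)\alpha = o\!\left(\tfrac{\log L}{\log\log L}\right)$, contradicting the unit-capacity lower bound of \citet{im2011secretary}.

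Compared to your plan, this buys a great deal: it avoids re-opening the delicate per-scale charging against adaptive randomized algorithms, and it needs nothing about the internal structure of the hard instance---the Im et al.\ bound is used purely as a black box. Your direct-construction route might be made to work, but the step you identify as the obstacle (showing the per-scale gain is $O(1/B)$ against an algorithm that adaptively mixes across scales) is exactly the nontrivial part, and your ``minimax/averaging'' sketch does not yet supply the argument. The paper's reduction via online coloring sidesteps this entirely with a two-paragraph proof.
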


We then turn to the path preferences setting, which appears to be
considerably harder than the special case of interval
preferences. 
In this setting buyers are single-minded in that they desire a
particular path, although their value for this path
is unknown to the seller. As before let $L$ denote the length of the
longest bundle/path that any buyer desires. We first observe that no
online algorithm can achieve a subpolynomial in $L$ competitive ratio
for this setting (see Theorem~\ref{thm:treelowerbound}). We therefore
explore competitive ratios as functions of $H$, the ratio of the
maximum possible value to the minimum possible non-zero value. In
terms of $H$, the construction of \citet{im2011secretary} provides a
lower bound of $\Omega(\log H/\log\log H)$ on the competitive ratio of
any online algorithm. We nearly match this lower bound:

\begin{theorem}
\label{thm:trees-ub}
For the path preferences setting, there exists a static, anonymous bundle
pricing with competitive ratio $O\left(\log H\right)$ for social welfare.
\end{theorem}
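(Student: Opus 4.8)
The plan is to run the ``balanced prices'' argument for online allocation (in the style of \citet{FGL15}), carried out for bundle pricing and for the geometry of paths in a rooted tree, together with a reduction to a single scale of values. First, bound the benchmark: since buyer $j$ is single-minded with a fixed path $\pi_j$ and random value $v_j\sim F_j$, the expected offline optimum is at most the value of the fractional relaxation that picks $x_j:\mathrm{supp}(F_j)\to[0,1]$ maximizing $\sum_j\E_{v\sim F_j}[v\,x_j(v)]$ subject to $\sum_{j:e\in\pi_j}\E[x_j(v)]\le 1$ for every edge $e$; write $\opt$ for this quantity and $x^*$ for the optimal fractional allocation, to be compared against.

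Next, partition the range of nonzero values into geometric scales of multiplicative width $\poly(L)$. There are $O(\log H/\log L)$ such scales and $\sum_k\opt_k=\opt$, where $\opt_k$ is the contribution of buyers whose value lies in scale $k$, so some scale $k^*$ satisfies $\opt_{k^*}=\Omega(\tfrac{\log L}{\log H})\opt$. The aim is then a single static anonymous bundle pricing, calibrated to scale $k^*$, whose expected welfare is $\Omega(\tfrac{1}{\log L})\opt_{k^*}$: the cheapest offered bundle is priced high enough that buyers below scale $k^*$ purchase nothing (so cannot block), while every scale-$k^*$ buyer can afford the bundles covering its path, so that the two logarithmic losses multiply to $O(\log H)$.

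The pricing exploits two structural facts about trees: every path is the union of at most two vertical (ancestor--descendant) subpaths meeting at its lca, and a vertical subpath lies inside a single root-to-leaf line. Root the tree and, for a constant $b$, impose a hierarchy whose level-$i$ cells are the canonical descending segments of length $b^i$ (with a random offset of the hierarchy, so that no path aligns badly with the cell boundaries); a vertical subpath of length $\ell$ then sits inside $O(1)$ cells at level $\Theta(\log_b\ell)$, so each $\pi_j$ is covered by $O(1)$ offered bundles. Prices grow geometrically with the level; the calibration must simultaneously make these $O(1)$ bundles affordable to every scale-$k^*$ buyer and keep the price of a level-$i$ bundle comparable to the total value a buyer can destroy by consuming it (it is a non-branching segment, so it meets only $O(b^i)$ pairwise edge-disjoint paths). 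Each single-minded buyer, on arrival, purchases the cheapest still-available family of $O(1)$ bundles covering $\pi_j$ whenever its value exceeds their combined price.

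The analysis is the bundle form of the ``prices are balanced'' lemma: the expected welfare of the mechanism is at least its expected revenue plus $\sum_j$ of buyer $j$'s expected residual utility, where the residual utility is what $j$ would obtain from the bundles covering $\pi_j$ conditioned on all of them being unsold. If the expected revenue already exceeds $\opt_{k^*}/O(\log L)$ we are done; otherwise the expected total price of sold bundles is small, and a union bound over the $O(\log_b L)$ levels shows that for all but a small fraction of the scale-$k^*$ mass of $x^*$ the $O(1)$ covering bundles are jointly available, so the residual-utility term recovers an $\Omega(1/\log L)$ fraction of $\opt_{k^*}$. Combined with the scale-selection step this gives expected welfare $\Omega(\opt/\log H)$, hence a competitive ratio $O(\log H)$. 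I expect the main obstacle to be exactly the price calibration: the bundle prices must be low enough that an unblocked scale-$k^*$ buyer actually completes its purchase, yet high enough, and sufficiently length-sensitive, that a buyer consuming a long segment pays in proportion to the future value it removes---this ``self-financing'' property rests on a path being only two non-branching pieces, which is precisely where trees are used and general graphs (where strong lower bounds hold) are not.
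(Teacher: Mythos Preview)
There is a genuine gap. Your length-based hierarchy of ``cells'' does not yield a valid bundle pricing on a tree: two descending segments of the same length that start at the same depth but lie on different root-to-leaf lines share their top edges, so the cells overlap and do not partition the edge set. If instead you take cells to be subtree slices (which are disjoint), a cell is no longer a non-branching segment and may contain arbitrarily many edges, so your ``meets only $O(b^i)$ pairwise edge-disjoint paths'' bound on blocked value fails. Even restricted to a single root-to-leaf line, the calibration you flag as the main obstacle is in fact unsatisfiable: pricing a level-$i$ segment at $\Theta(b^i V_{\max})$ (to cover the $O(b^i)$ blocked jobs, each worth up to $V_{\max}$) makes it unaffordable to a buyer of value near $V_{\min}$, while any smaller price lets a long-path buyer block far more value than it pays; with value classes of width $\poly(L)$ there is no middle ground. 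The paper itself remarks that even when all paths have equal length a constant-factor pricing is not known, so the $O(\log L)$-per-class loss you rely on is not available via a length hierarchy.

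The paper's route differs in both places. It uses value classes of width $2$ (hence $O(\log H)$ of them, not $O(\log H/\log L)$) and within a class obtains an $O(1)$ ratio by a pricing that ignores path length entirely. After rooting the tree and splitting each path into its two monotone arms, the only contentious edges of an arm are its \emph{peak} edges (those nearest the root). Within a layer, the price of a monotone subpath is $\max\{V_{\min},\sum_e \hat p_e\}$ with $\hat p_e=\tfrac14\sum_{j:\,e\in\Lambda_j} y_j v_j$, where $\Lambda_j$ is $j$'s peak set and $y$ is a feasible \emph{layered} fractional allocation having weight at most $1$ on every edge in each layer. The crucial charging step is not length-based: if $j$ is blocked on a non-peak edge $e$, then $e$ is a peak edge of its blocker $j'$, and every job that charges $j'$ through $e$ contains the parent edge of $e$ in the same layer, so their total fractional weight is at most $1$; hence each sold job absorbs at most $2V_{\max}=O(V_{\min})$ of blocked value. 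Arbitrary capacities (which your constraint $\sum_{j:e\in\pi_j}\E[x_j]\le 1$ ignores) are handled first by peeling the fractional solution into unit-weight layers.
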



As for the interval preferences setting, we obtain a linear tradeoff
between item supply and the performance of bundle pricing for the path
preferences setting.
\begin{theorem}
\label{cor:tree-large-cap-ub}
For the path preferences setting, if every edge has at least $B>0$
copies available, then a static, anonymous bundle pricing achieves a
competitive ratio of
\(
    O\left(\frac1B\log H\right)
\)
for welfare.
\end{theorem}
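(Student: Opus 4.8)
The plan is to follow the blueprint of the interval analogue, Theorem~\ref{cor:large-cap-ub}: revisit the bundle pricing behind Theorem~\ref{thm:trees-ub} and show that each extra copy of every edge lets the mechanism absorb one more value scale's worth of resource contention, so that $B$ copies shrink the competitive ratio by a factor of $\Theta(B)$. Recall that the $O(\log H)$ loss in Theorem~\ref{thm:trees-ub} arises from grouping buyers into $\Theta(\log H)$ value scales by powers of two and paying a constant factor per scale (a single copy of an edge, priced for one scale, may be bought by a lower-scale buyer and thereby made unavailable to later buyers of the intended scale); on an instance in which only $m$ distinct scales occur, that same analysis yields a competitive ratio of $O(m)$, and this refined form is what we exploit.

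Concretely, set $B' = \min\{B, \lceil\log H\rceil\}$, index the value scales by $0,1,2,\dots$, and use copy $c \in \{1,\dots,B'\}$ of every edge to serve the buyers whose value scale is congruent to $c$ modulo $B'$. On copy $c$ we post the bundle prices given by Theorem~\ref{thm:trees-ub} restricted to the scales $\equiv c \pmod{B'}$; there are only $O(\log H / B)$ of them, so that ``channel'' is $O(\log H / B)$-competitive for its buyers. The overall mechanism is the static, anonymous bundle pricing obtained by posting these $B'$ price lists on the respective $B'$ copies of each edge. When $B \ge \log H$ this dedicates a private copy of the whole tree to each value scale, each single-scale subproblem is handled within a constant factor, and the guarantee is $O(1)$, consistent with the statement; for $B < \log H$ the aggregate guarantee across the $B'$ channels is $O(\log H / B) = O(\tfrac1B \log H)$.

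The step that needs real care --- and the main obstacle --- is the accounting that makes this aggregation valid. The $B'$ channels are not independent: a buyer faces \emph{all} posted prices and may assemble their path out of bundles drawn from several copies, so the offline optimum with $B$ copies, when restricted to one channel's buyers, need not fit within that channel's single copy of each edge, and one cannot simply sum per-channel guarantees. I would handle this exactly as in the proof of Theorem~\ref{cor:large-cap-ub}: argue that a buyer blocked on their own channel's copy of some edge can fall back on another copy, or else be charged against the welfare of whoever consumed the intended capacity (whose value is at least theirs, up to the factor implicit in a scale), and that summing these charges over all $B'$ channels costs only a further constant factor. The structural fact that makes this work is precisely the $B$-fold slack --- the blocking events the unit-supply argument must absorb across $\Theta(\log H)$ scales now need only be absorbed across $O(\log H / B)$ of them --- and verifying this coupling argument, including that the tree structure introduces no additional loss, is where the bulk of the work lies.
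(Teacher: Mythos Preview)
Your high-level strategy matches the paper's: partition buyers into $\Theta(B)$ groups by value so that each group has only $O(\log H/B)$ value scales, apply Theorem~\ref{thm:trees-ub} per group, and then compose. The paper uses contiguous value ranges (group $i$ gets values in $[\vmin\alpha^{i-1},\vmin\alpha^i]$ with $\alpha=H^{2/B}$) rather than your residue classes modulo $B'$, but this difference is immaterial---in both cases each group contains $O(\log H/B)$ scales, and your observation that Theorem~\ref{thm:trees-ub} yields competitive ratio $O(m)$ when only $m$ scales are present is correct.

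The gap is in the supply allocation. You dedicate exactly one copy of each edge to each channel, but the optimal fractional allocation $x^*$ restricted to one channel's buyers may place weight up to $B_t$ on edge $t$ (imagine many buyers in the same residue class all wanting the same edge). So the per-channel guarantee you invoke compares against that channel's optimum \emph{with capacity one}, which can be a factor $\Theta(B)$ smaller than that channel's share of the true optimum---erasing the $B$-fold gain. You flag exactly this issue (``the offline optimum \dots\ need not fit within that channel's single copy'') but then misremember how Theorem~\ref{cor:large-cap-ub} resolves it: that proof does \emph{not} argue via ``falling back on another copy'' or charging blockers; instead it allocates supply \emph{adaptively}, giving instance $i$ capacity $B_{it}=\big\lceil\tfrac12\sum_{j\in U_i:\,t\in I_j} x^*_j\big\rceil$, so that $\tfrac12 x^*_{U_i}$ is feasible for instance $i$ and the $k\le B/2$ instances together use at most $B_t$ copies. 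The paper's proof of the present theorem does the same. Once supply is allocated this way, the composition is also simpler than you suggest: the utility--revenue analysis of Lemma~\ref{lem:single-value-class} holds verbatim when all $k$ pricings are posted in parallel, and the only loss is a factor of~$2$ because a buyer assigned to one instance but purchasing from another may be counted once in the utility bound and once in the revenue bound; no separate ``fall back'' mechanism is needed.
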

Theorems~\ref{thm:unit-cap-ub} and \ref{cor:large-cap-ub} for the
interval preferences setting are proved in
Section~\ref{sec:ub-interval}. Section~\ref{sec:lowerbound} presents
the lower bounds---Theorems~\ref{thm:lb} and
\ref{thm:treelowerbound}. Our main results for the path preferences
setting, Theorem~\ref{thm:trees-ub} and Theorem~\ref{cor:tree-large-cap-ub}, are proved in
Section~\ref{sec:trees-ub}. Some of our results extend with the same
competitive ratios to settings with non-decreasing marginal production
costs instead of a fixed supply; this setting is discussed in{}
Section~\ref{sec:costs}. Any proofs skipped in the main body of the
paper can be found in Section~\ref{sec:deferred}.

All of our theorems are constructive but require an explicit
description of the buyers' value distributions. The pricings
guaranteed above can be constructed in time polynomial in the sum over
the buyers of the sizes of the supports of their value distributions.

\subsection{Related work and our techniques}

Competitive ratios for welfare in online settings are known as prophet
inequalities following the work of \citet{KS-78} and \citet{Cahn84} on
the special case of allocating a single item. Most arguments for
prophet inequalities follow a standard approach, introduced by
\citet{KW12} and further developed by \citet{FGL15}, of setting
``balanced'' prices or thresholds for each buyer: informally, prices
should be low enough that buyers can afford their optimal allocations,
and at the same time high enough so that allocating items
non-optimally recovers as revenue a good fraction of the optimal
welfare that is ``blocked'' by that allocation. Given such balanced
prices, we can then account for the social welfare of the online
allocation in two components---the seller's share of the welfare,
namely his revenue, and the buyers' share of the welfare, namely their
utility. On the one hand, the prices of any items/bundles sold by the
mechanism contribute to the seller's revenue. On the other hand, any
item/bundle that goes unsold in the online allocation contributes to
the buyers' utility: the buyer who receives it in the optimal
allocation forgoes it in the online allocation for another one with
even higher utility. The argument asserts that in either case good
social welfare is achieved.

When buyers have values with complements and an item pricing is used
this argument breaks down. In particular, it may be the case that a
bundle allocated by the optimal solution goes unsold because only one
of the items in the bundle is sold out.  The loss of the buyer's
utility in this case may not be adequately covered by the revenue
generated by the sold subset of items.\footnote{\label{note:item-pricing-L} Consider, for
  concreteness, the following example due to \citeauthor{FGL15}
  Suppose there are two buyers and $L$ items; the first has value 1
  for any non-empty allocation (i.e., is unit-demand with value 1 for
  every item), and the second has value $L - \eps$ for the set of all
  items and value zero for every subinterval. Observe that for any
  item prices, either the first buyer will be willing to purchase some
  item, thereby blocking the second buyer from purchasing anything, or
  else the second buyer will be unwilling to purchase the full set of
  items. As the optimal welfare is $L-\eps$, no item prices lead to
  better than an $O(L)$-approximation.} We therefore consider pricing
bundles of items.

Recently \citet{DFKL-17} developed a framework for obtaining balanced
prices via a sort of extension theorem. They showed that if one can
define prices achieving a somewhat stronger balance condition in the
full-information setting, where the seller knows the buyers' value
functions exactly, then a good approximation through posted prices can
be obtained in the Bayesian setting as well. A significant benefit of
using this approach is that it suffices to focus on the
full-information setting, and the designer need no longer worry about
the value distribution.

In the full-information setting, even with arbitrary values over
bundles of items, it is easy to construct a static, anonymous bundle
pricing that achieves a $2$-approximation for welfare, as
demonstrated, for example, by \citet{Cohen-Addad-16}.
Unfortunately, these prices do not satisfy the strong balance
condition of \citeauthor{DFKL-17} In fact we do not know of any
distribution-independent way of defining bundle prices in the full
information setting that satisfy the balance condition of
\citeauthor{DFKL-17} with approximation factors matching the ones we
achieve. Furthermore, while a main goal of our work is to establish a
tradeoff between supply and competitive ratio for welfare, the
framework of \citeauthor{DFKL-17} does not seem to lend itself to such
a tradeoff.

The crux of our argument for both of our settings lies in constructing a
distribution-dependent partition of items into bundles, and pricing (subsets
of) these bundles. For the interval preferences setting we construct a
partition for which there exists an allocation of bundles to buyers such that
each buyer receives at most one bundle and a good fraction of the social
welfare is achieved. Given such a bundling, we are essentially left with a
``unit-demand'' setting\footnote{A buyer is said to have unit-demand
  preferences if he desires buying at most one item. In our context, buyers may
  desire buying multiple bundles, but the ``optimum'' we compare against
  allocates at most one to each buyer. This allows for the same charging
  arguments that work in the unit-demand case.} for which a prophet inequality
can be constructed using the techniques described above.  We call such an
allocation a {\em unit allocation}. The main technical depth of this result
lies in constructing such a bundling.

In fact, it is straightforward to construct a bundling that leads to an $O(\log
L)$ competitive ratio:\footnote{A detailed discussion of this solution can be
  found in \cite{im2011secretary}.} pick a random power of $2$ between $1$ and
$L$; partition items into intervals of that length starting with a random
offset; and construct an optimal allocation that allocates entire bundles in
the constructed partition. However, our improved $O(\log L/\log\log L)$
approximation requires much more care in partitioning items into bundles of
many different sizes, and requires the partitioning to be done in a
distribution-dependent manner.


As mentioned earlier, the path preferences setting generalizes
interval preferences, but appears to be much harder. In particular, we
don't know how to obtain a constant competitive ratio even when all
desired paths are of equal length. We show, however, that if all
buyers have equal values and all edges have equal capacity, it becomes
possible to identify up to two most contentious edges on every path,
and the problem behaves like one where every buyer desires only two
items. For this special case, it becomes possible to construct a
pricing using techniques from \cite{FGL15}. Unfortunately, this
argument falls apart when different items have different multiplicity.
In order to deal with multiplicities, we present a different kind of
partitioning of items into bundles or layers and a constrained
allocation that we call a {\em layered allocation}, such that each
layer behaves like a unit-capacity setting. These ideas altogether
lead to an $O(\log H)$ competitive ratio.

As mentioned previously, our approach lends itself to achieving
tradeoffs between item supply and competitive ratio. On the one hand,
when different items are available to different extents, we need to be
careful in constructing a partition into bundles and in some places
this complicates our arguments. On the other hand, large supply allows
us some flexibility in partitioning the instance into multiple smaller
instances, leading to improved competitive ratios. The key to enabling
this partitioning is a composability property of our analysis: suppose
we have multiple disjoint instances of items and buyers, for each of
which in isolation our argument provides a good welfare guarantee;
then running these instances together and allowing buyers
to purchase bundles from any instance provides at least half the sum
of the individual welfares. In the path preferences setting, obtaining
this composability crucially requires buyers to be single-minded.

\paragraph{Discussion and open questions.} Two implications of our
results seem worthy of further study. First, for the settings we
consider as well as those studied previously, posted pricings perform
nearly as well as arbitrary online algorithms, truthful or not. Can
this be formalized into a meta-theorem that holds for broader
contexts? Second, a modest increase in supply brings about significant
improvements in competitive ratio for the settings we study. However,
once the competitive ratio hits a certain constant factor, our
techniques do not provide any further improvement. Can this tradeoff
be extended all the way to a $1+\epsilon$ competitive ratio?

Another natural open problem is to extend our guarantees for the path
preferences setting to general graphs. This appears challenging. Our
arguments rely on a fractional relaxation of the optimal
allocation. General graphs exhibit an integrality gap that is
polynomial in the size of the graph even when all buyers are
single-minded, have the same values (0 or 1 with some probability),
and have identical path lengths; this integrality gap is driven by the
combinatorial structure exhibited by paths in graphs. Indeed this
setting appears to be as hard as the most general setting with no
constraints on buyers' values. It may nevertheless be possible to obtain a
non-trivial competitive ratio relative to a different relaxation of
the offline optimum.


\paragraph{Other related work.}
Sequential pricing mechanisms (SPMs) were first studied for the problem of
revenue maximization, where computing the optimal mechanism is a
computationally hard problem and no simple characterizations of optimal
mechanisms are known. A series of works (e.g., \cite{CHK-07, BH-08, CHMS-10,
CMS-10, bilw-focs14, RW-15, CM-16, CZ-17}) showed that in settings where buyers
have subadditive values, SPMs achieve constant-factor approximations to
revenue. In most interesting settings, good approximations to revenue
necessarily require non-anonymous pricings. As a result, techniques in this
literature are quite different from those for welfare.  \cite{GHKSV14} gives
(non-truthful) online algorithms which obtain constant-factor approximations in
our settings when supply is unit-capacity, buyers are single-minded, and all
values are identical.  There is also a long line of work on welfare-maximizing
mechanisms with buyers arriving online in the worst case setting where the
seller has no prior information about buyers' values \citep{lavi2015online,
hajiaghayi2005online, cole2008prompt, azar2011prompt, azar2015truthful,
CDH+17}. The worst case setting generally exhibits very different techniques
and results relative to the Bayesian setting we study.

\section{Model and definitions}
\label{sec:prelim}

We consider a setting with $n$ buyers and a set of items $\items$. We
index buyers by $i$ and items by $t$ (or $e$ for edges). Buyer $i$'s
valuation function is denoted $\vali: 2^{\items} \rightarrow \Re^+$,
with $\vali(\emptyset)=0$. Our setting is Bayesian: $\vali$ is drawn
from a distribution $\disti[i]$ that is independent of other buyers
and known to the seller. We emphasize that values may be correlated
across bundles, but not across buyers. Let $B_t$ denote the number of
copies available, a.k.a.  supply or capacity, for item $t$. Let $B =
\min_t B_t$. The {\em unit-capacity} setting is a special case where
$B=1$. An allocation is an assignment of bundles of items to buyers
such that no item is allocated more than its number of copies
available. Our goal is to maximize the buyers' total welfare---that
is, the sum over values each buyer derives from his allocated bundle.

\paragraph{Jobs.} We now describe assumptions and notational
shortcuts for buyers' value functions that hold without loss of
generality and simplify exposition. First, we assume that values are
monotone: for all $i$ and $\vali$, and all bundles
$S\subset S'\subseteq \items$, $\vali(S)\le\vali(S')$. Second, we assume
that each buyer's value distribution is a discrete
distribution\footnote{For constructive versions of our results, we
  require the supports of the distributions to be explicitly given,
  however, our arguments about the existence of a good pricing work
  also for continuous distributions.} with 
  finite support. In other words, each buyer $i$ can only have
  finitely many possible valuation functions: with probability $\qvi$,
  buyer $i$'s values are given by the known function $\vali$.

Third, we think of a buyer $i$ with value function $\vali$
as being a collection of ``jobs'' represented by tuples $(i,\vali,S)$,
one for each bundle $S$ of items desired by the buyer. Informally,
each job is a potential (minimal) allocation to a buyer with a given
value.\footnote{In particular we remove ``duplicates'', or bundles $S$
  such that for some $S'\subsetneq S$, $\vali(S')=\vali(S)$.} Let
$\Jvi = \{(i,\vali, S)\}$ denote the set of all such jobs for a buyer
$i$ with value function $\vali$; let $U$ denote the union of these
sets over all possible buyers and value functions. When a buyer $i$
with value $\vali$ arrives, we interpret this event as the
simultaneous arrival of all of the jobs in $\Jvi$. If the buyer is
allocated the bundle $S$, we say that job $(i,\vali,S)$ is
allocated. In what follows, it will be convenient to consider jobs as
fundamental entities. We therefore identify a job $(i,\vali,S)$ by the
single index $j$.  Then $\intj$ is the corresponding bundle of items,
$\vj$ is $\vali(I_j)$, and $\qj$ is $\qvi$.

\paragraph{Interval and path preferences.} In the interval preferences
setting, the set $\items$ of items is totally ordered, and buyers
assign values to intervals of items. In particular, for any bundle $S$
of items,
$\vali(S) = \max_{\text{intervals } I \subseteq S} \, \vali(I)$ where
$I$ ranges over all contiguous intervals contained in
$\items$. Accordingly, jobs as defined above also correspond to
intervals. In the path preferences setting, $\items$ corresponds to
the set of edges in a given tree. Each buyer $i$ has a fixed and known
path, denoted $P_i$, and a scalar (random) value $\vali$, with
$\vali(S) = \vali$ for $S\supset P_i$ and $0$ otherwise. Accordingly,
buyers are {\em single-minded} and each instantiated value function of
a buyer is associated with a single job.

\paragraph{Bundle pricings.} The mechanisms we study are static and
anonymous bundle pricings. Let $\prices$ denote such a pricing
function. For the interval preferences setting, we partition the
multiset of items into disjoint intervals and price each interval in
the partition. For the path preferences setting, we partition items
into disjoint ``layers'' and construct a different pricing function for
each layer, which assigns a price to every path contained in that
layer.\footnote{The pricing is essentially a layer-specific
  item pricing, with bundle totals subject to a layer-specific reserve
  price.} Observe that different copies of an item end up in different
bundles/layers, and may therefore be priced differently.  Buyers
arrive in adversarial order. When buyer $i$ arrives, he selects a
subset of remaining unsold bundles to maximize his value for the items
contained in the subset minus the total payment as specified by the
pricing $\prices$.

\vspace{0.1in}
\noindent
Let $\opt$ denote the hindsight/offline optimal expected social welfare, and
$\sw(\prices)$ denote the expected social welfare obtained by the
static, anonymous bundle pricing $\prices$.




\paragraph{A fractional relaxation.}
A {\em fractional allocation} $x$ assigns an $\xj$-fraction of each item in
$\intj$ to job $j$.\footnote{Note that fractional allocations are used only in
determining prices and in bounding the expected optimal (integer) solution, not
in the mechanism itself.} A fractional allocation is feasible if it satisfies
the supply and demand constraints of the integral problem: no item may be
allocated more than $\capt$ times and no buyer obtains additional value for more
than one bundle. Let $\feas$ denote the polytope defined by constraints 
\eqref{eq:supply}--\eqref{eq:nonneg} below. 
\begin{align}
    \hfill \sum_{j : \intj \ni t} \xj &\le \capt \quad \forall t
            & & \textrm{(supply)} \label{eq:supply} \\
    \sum_{j \in \Jvi} \xj & \le \qvi \quad \forall \vali
            & & \textrm{(demand)} \label{eq:demand} \\
    x & \ge 0 & & \label{eq:nonneg}
\end{align}
For any $x$, $\fval(x)$ is the total fractional value in $x$ (without
regard for feasibility), and $\fwt(x)$ is the total fractional weight
of $x$:
\[\fval(x) = \sum_j v_j x_j \quad \text{and} \quad \fwt(x) = \sum_j
  x_j.\] 
\noindent
For a subset $A$ of jobs, we use $x_A$ (and sometimes $(x,A)$) to
denote the fractional allocation confined to set $A$ and zeroed out
everywhere else. That is, $(x_A)_j = x_j$ for $j\in A$ and $(x_A)_j=0$
for $j\not\in A$.
\[\fval(x_A) = \sum_{j\in A} v_j x_j\quad \text{and} \quad \fwt(x_A) = \sum_{j\in A} x_j.\]  
\noindent
Fractional allocations provide an upper bound on the optimal welfare;
see \Cref{sec:deferred} for the proof of a more general statement
(Lemma~\ref{lem:fopt-upperbound-costs}). Here
$\fopt := \max_{x \in \feas} \fval(x)$.
\begin{lemma}
    \label{lem:fopt-upperbound}
    $\fopt \ge \opt$.
\end{lemma}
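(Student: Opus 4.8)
The plan is to exhibit, for any integral allocation, a corresponding feasible fractional allocation of at least the same value, and then to take expectations. First I would recall what $\opt$ is: it is the expected welfare of the hindsight-optimal integral allocation, i.e.\ $\opt = \E_{\vals}[\opt_{\vals}]$ where, for a fixed realization $\vals = (\vals_1,\dots,\vals_n)$ of all buyers' value functions, $\opt_{\vals}$ is the maximum over integral allocations of $\sum_i \vali(S_i)$, subject to no item $t$ being used more than $\capt$ times. I want to show that for \emph{each} fixed $\vals$ there is a point $x \in \feas$ (the polytope of \eqref{eq:supply}--\eqref{eq:nonneg}, which depends only on the $\qvi$'s, not on which $\vali$ is realized) with $\fval(x) \ge \opt_{\vals}$; averaging over $\vals$ then gives $\fopt \ge \E_{\vals}[\opt_{\vals}] = \opt$.

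The construction of $x$ from an optimal integral allocation for a fixed $\vals$ is the heart of the argument. Fix the optimal integral allocation, giving bundle $S_i$ to buyer $i$. By the ``jobs'' reduction (we may drop duplicates and pass to minimal bundles without losing value, and value is monotone), we may assume each nonempty $S_i$ equals $\intj$ for exactly one job $j \in \Jvi[\vali]$, and $\vali(S_i) = \vj$. Now set $\xj = \qvi$ for each such job $j$ (the job actually allocated to buyer $i$ in realization $\vali$), and $\xj = 0$ for all other jobs. I would then check the three constraints. For \eqref{eq:demand}: for each potential value function $\vali$ of each buyer $i$, at most one job in $\Jvi$ is picked, and it gets weight exactly $\qvi$, so $\sum_{j \in \Jvi} \xj \le \qvi$. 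For \eqref{eq:supply}: fix an item $t$; the jobs $j$ with $t \in \intj$ and $\xj > 0$ come from distinct buyers (one job per buyer), and each such buyer uses item $t$ in the fixed integral allocation, so the number of such buyers is at most $\capt$; hence $\sum_{j : \intj \ni t} \xj = \sum_{\text{such }j} \qvi \le \max_i \qvi \cdot (\#\text{such buyers})$---here I need to be a little careful, since the $\qvi$'s need not be equal. The clean fix: think of the fractional allocation as the expectation, over the independent draws of the $\vals$, of the indicator-allocation; i.e.\ define $\xj := \Pr_{\vals}[\text{job }j\text{ is the one allocated to its buyer in the optimal integral allocation for }\vals]$. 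Then $\xj \le \Pr[\vali \text{ is the realized value function of buyer }i] = \qvi$, and each of \eqref{eq:supply}, \eqref{eq:demand} follows by taking expectations of the (surely-satisfied) integral constraints, using linearity of expectation and the fact that for a fixed realization each buyer is allocated at most one bundle and each item at most $\capt$ copies. And $\fval(x) = \sum_j \vj \xj = \E_{\vals}[\opt_{\vals}] = \opt$ by construction, so this single $x$ already witnesses $\fopt \ge \opt$.

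The main obstacle, and the reason the one-realization-at-a-time phrasing is more delicate, is exactly the non-uniformity of the $\qvi$'s: the supply constraint in $\feas$ is ``weighted'' and does not see which buyer a job belongs to, so a naive per-realization rounding that puts mass $\qvi$ on each chosen job can overshoot $\capt$ when many high-probability value functions of the same item-using buyers coincide. Defining $x$ directly as the expectation over $\vals$ of the optimal integral allocation sidesteps this cleanly, since the integral constraints hold pointwise and expectation preserves inequalities; the independence across buyers is what guarantees $\xj \le \qvi$ rather than something larger. (The excerpt notes that a more general version with production costs is proved in \Cref{sec:deferred} as Lemma~\ref{lem:fopt-upperbound-costs}, so for the purposes of this lemma the fixed-supply argument above suffices.)
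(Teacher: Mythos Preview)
Your proposal is correct and lands on essentially the same argument as the paper: define $x_j$ to be the probability (over the realization of $\vals$) that job $j$ is allocated in the offline optimum, check that the supply and demand constraints hold by taking expectations of the pointwise integral constraints, and observe $\fval(x)=\sum_j v_j x_j=\opt$. One small remark: independence across buyers is not actually needed for $x_j\le q_{v_i}$ (it follows simply because the event ``job $j=(i,v_i,S)$ is allocated'' is contained in the event ``buyer $i$'s realized valuation is $v_i$''), so you can drop that qualifier.
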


\section{Bundle Pricing for Interval Preferences}
\label{sec:ub-interval}

In this section we present our main results,
Theorems~\ref{thm:unit-cap-ub} and \ref{cor:large-cap-ub}, for the
interval preferences setting. We begin by defining a special kind of
fractional allocations that we call {\em fractional unit
  allocations}. These have properties that guarantee the existence of
a bundle-pricing mechanism which obtains a good fraction of the
welfare of the fractional allocation. The intent is to decompose the
fractional allocation across disjoint bundles such that the fractional
value assigned to each bundle can be recovered by pricing that bundle
individually. Section~\ref{sec:frac-unit-alloc} presents a definition of
unit allocations and their connection to pricing.


The remaining technical content of this section then focuses on
designing fractional unit allocations. We begin with the special case
of unit-capacities in Section~\ref{sec:unit-cap-ub}, where we show the
existence of an $O(\log L/\log\log L)$-approximate fractional unit
allocation. In Section~\ref{sec:arbit-cap-ub} we extend our analysis
to the general case of arbitrary multiplicities, proving
Theorem~\ref{thm:unit-cap-ub}. Finally, in
Section~\ref{sec:large-cap-ub} we show that if the capacity for every
item is large enough, specifically at least $B$, then the
approximation ratio decreases by a factor of $\Theta(B)$
(Theorem~\ref{cor:large-cap-ub}).

\subsection{Fractional unit allocations}
\label{sec:frac-unit-alloc}

\begin{definition} A fractional allocation $x$ is a {\em fractional
    unit allocation} if there exists a partition of the multiset of
  items (where item $t$ has multiplicity $\capt$) into bundles
  $\{T_1, T_2, T_3, \cdots\}$, and a corresponding partition of jobs
  $j\in U$ with $x_j>0$ into sets $\{A_1, A_2, A_3, \cdots\}$, such
  that:
    \begin{itemize}
        \item For all $j\in U$ with $x_j>0$, there is exactly one
          index $k$ with $j\in A_k$. 
        \item For all $k$ and $j\in A_k$, $I_j\subseteq T_k$.
        \item For all $k$, we have $\fwt(x_{A_k}) \le 1$.
   \end{itemize}
\end{definition}

A note on the terminology: we call fractional allocations satisfying
the above definition {\em unit} allocations because once the partition
of items is specified, each job can be assigned at most one bundle in
the partition and each bundle can be fractionally assigned to at most
one job. 

We note that given any fractional unit allocation $x$, for any
instantiation of values, it is possible to define a pricing function
over the bundles $T_k$ that is $(1,1)$-balanced with respect to $x$
within the framework of \cite{DFKL-17}. This is because fractional unit
allocations behave essentially like feasible allocations for
unit-demand buyers. For completeness we present a simpler
first-principles argument based on the techniques of \citet{FGL15}
showing that such a pricing obtains at least half the value of the
fractional unit allocation. The proof is deferred to
Section~\ref{sec:deferred}.

\begin{lemma}
    \label{lem:FGL}
    For any feasible fractional unit allocation $x$, there exists a static,
    anonymous bundle pricing $\prices$ such that
    \[
        \sw(\prices) \ge \half \fval(x).
    \]
\end{lemma}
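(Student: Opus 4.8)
The plan is to follow the standard prophet-inequality template in the style of \citet{FGL15,KW12}, reducing everything to a unit-demand-like argument using the bundle partition $\{T_k\}$ supplied by the definition of a fractional unit allocation. First I would fix a fractional unit allocation $x$ together with its partition of the multiset of items into bundles $\{T_1,T_2,\dots\}$ and the corresponding partition of positive-weight jobs into sets $\{A_1,A_2,\dots\}$. For each bundle $T_k$ I would define a single scalar price $\price[k]$ — the mechanism sells the bundle $T_k$ (as a whole) at price $\price[k]$, and these prices are static and anonymous, so $\prices$ is a legitimate static anonymous bundle pricing. The natural choice is a ``balanced'' price: set $\price[k]$ to be something like $\tfrac12 \, \E[\,\text{value contributed to bundle } k \text{ by the fractional allocation, conditioned on }T_k \text{ still unsold}\,]$, or more cleanly, half the expected fractional value routed through $T_k$, i.e. $\price[k] = \tfrac12\sum_{j\in A_k} v_j x_j$ up to the appropriate conditioning — I'd pin down the exact expression so that the two halves of the accounting below each come out to $\tfrac12\fval(x)$ with an extra factor of $2$, giving the stated $\tfrac12$.

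Next I would run the standard two-part decomposition of the welfare of $\prices$. Let $W$ be the (random) set of bundles left unsold at the end of the run. The expected welfare of the mechanism is at least the expected revenue plus the expected utility of the buyers, and I bound each piece separately. For revenue: every bundle $T_k$ that does get sold contributes $\price[k]$, so $\E[\rev(\prices)] \ge \sum_{k} \price[k]\cdot\Pr[T_k\text{ sold}]$. For utility: consider the hypothetical in which each job $j$ with $x_j>0$ that lies in $A_k$ would, when its buyer $i$ arrives, like to buy bundle $T_k$; since $I_j\subseteq T_k$, buying $T_k$ yields the buyer value $v_j$ minus price $\price[k]$. If $T_k$ is still available, the buyer's utility-maximizing choice is at least $v_j-\price[k]$; if $T_k$ is already sold, then that sale already credited $\price[k]$ to revenue. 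The crucial structural fact I would invoke here is the third bullet of the definition, $\fwt(x_{A_k})=\sum_{j\in A_k}x_j\le 1$, which lets me treat the fractional mass on $A_k$ exactly like a distribution over ``at most one winner'' for bundle $T_k$ — this is what makes the unit-demand charging go through and prevents double-counting across the jobs sharing a bundle. Summing the ``if available'' contributions over all $k$ (weighted by $x_j$) and using that each job appears in exactly one $A_k$, the buyers' total expected utility is at least $\sum_k \big(\sum_{j\in A_k}v_j x_j - \price[k]\big)\Pr[T_k\text{ unsold at }i\text{'s arrival}]$, and combining with the revenue term and the choice of $\price[k]$, the $\Pr[\cdot]$ factors cancel against $1-\Pr[\cdot]$ in the usual way, leaving $\E[\sw(\prices)]\ge \tfrac12\sum_k\sum_{j\in A_k}v_jx_j = \tfrac12\fval(x)$.

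A few technical points need care. I must handle the fact that buyers may want to buy several bundles (the ``jobs'' of a single buyer are different bundles $S$), whereas the argument above only charges each buyer for one bundle $T_k$; this is fine because the utility bound only needs a \emph{lower} bound on what the buyer actually takes, and taking at least the best single $T_k$ available suffices — the buyer's realized utility is only larger. I also need that the bundles $T_k$ are disjoint (they are, being a partition of the multiset of items), so that selling $T_k$ never interferes with the availability of $T_{k'}$ for $k'\neq k$; this decouples the events across $k$. The main obstacle — and the reason the paper defers this to Section~\ref{sec:deferred} and phrases it as ``for completeness'' — is getting the conditioning in the price definition exactly right so that the martingale/telescoping step is valid for \emph{adversarial} arrival order and \emph{randomized} values: one has to argue that the probability a given bundle is unsold when buyer $i$ arrives can be pulled out correctly, which is precisely where the $(1,1)$-balancedness of \citet{DFKL-17} does the work, and the first-principles version requires restating the \citet{FGL15} ``the price is at least half the blocked optimum, and buyers can afford their allocation'' dichotomy in this bundled setting. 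Everything else is bookkeeping.
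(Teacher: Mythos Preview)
Your proposal is correct and follows essentially the same route as the paper: set $\price[k]$ proportional to $\fval(x_{A_k})$, decompose $\sw(\prices)=\rev(\prices)+\util(\prices)$, and use $\fwt(x_{A_k})\le 1$ so that the $\Pr[\text{sold}]$ and $\Pr[\text{unsold}]$ terms combine to $\tfrac12\fval(x)$. The only points worth noting are that the paper normalizes the price as $\price[k]=\tfrac1{2W_k}\fval(x_{A_k})$ (your simpler $\tfrac12\fval(x_{A_k})$ also works once the missing $W_k$ factor is restored in your utility sum), and that the conditioning issue you flag as the ``main obstacle'' is dispatched without any martingale argument by lower-bounding the event ``$T_k$ unsold at $i$'s arrival'' with the end-of-run event $\{Z_k=0\}$, which makes the probability independent of which buyer is under consideration.
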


    \subsection{The unit capacity setting}
\label{sec:unit-cap-ub}

%

The main claim of this section is that for any feasible fractional allocation
$x$ there exists a fractional unit allocation $x'$ such that $\fval(x)\le
O(\beta)\fval(x')$ where $\beta$ is defined such that $\beta\log\beta = \log
L$. Then, if $x$ is the optimal fractional allocation for the given instance,
we can apply Lemma~\ref{lem:FGL} to the allocation $x'$ to obtain a pricing
that gives an $O(\beta)$ approximation. Observe that $\beta=O\big(\frac{\log
L}{\log\log L}\big)$, so we get the desired approximation factor. A formal
statement of this claim is given at the end of this subsection.

Before we prove the claim, let us discuss the intuition behind our analysis.
One approach for producing a fractional unit allocation is to find an
appropriate partition of items into disjoint intervals $\{T_1, T_2, \cdots\}$,
and remove from $x$ all jobs that do not fit neatly into one of the intervals.
We can then rescale the fractional allocations of jobs within each interval
$T_i$, so that their total fractional weight is $1$. The challenge in carrying
out this approach is that if the intervals $T_i$ are too short, then they leave
out too much value in the form of long jobs. On the other hand, if they are too
long, then we may require a large renormalizing factor for the weight, again
reducing the value significantly. 

To account for these losses in a principled manner we consider a suite
of nested partitions, one at each length scale, of which there are
$\log L$ in number.\footnote{All of our arguments extend trivially to
  settings where $L$ denotes the ratio of lengths of the largest to the smallest
  interval of interest, because the number of relevant length scales
  is logarithmic in this ratio.} We then place a job of length $\ell$ in the interval that contains it (if one exists) at length scale $\sim 2\ell$. The intervals over all of the length scales together capture much of the fractional value of the allocation $x$. Furthermore, the fractional weight within each interval can be bounded by a constant. At this point, any single partition gives us a fractional unit allocation. However, since the number of length scales is $\log L$, picking a single one of these unit allocations only gives an $O(\log L)$ approximation in the worst case. In order to do better, we argue that there are two possibilities: either (1) it is the case that many intervals have a much larger than average contribution to total weight, in which case grouping these together provides a good unit allocation; or, (2) it is the case that most intervals have low total weight, so that we can put together multiple length scales to obtain a unit allocation without incurring a large renormalization factor for weight.

We now proceed to prove the claim. The proof consists of modifying $x$ to obtain $x'$ through a series of refining steps. Steps 1 and 2 define the suite of partitions and placement of jobs into intervals as discussed above. Step 3 provides a criterion for distinguishing between the cases (1) and (2) above. Step 4 then provides an analysis of case (1), and Step 5 an analysis of case (2).

\subsubsection*{Step 1: Filtering low value jobs}

We first filter out jobs that do not contribute enough to the solution $x$ depending on when they are scheduled. Accordingly, we define:
\begin{itemize}
    \item For each job $j$, the {\em value density} of $j$ is $\densj = v_j/|I_j|$.
    \item For each item $t$, the {\em fractional value at $t$} is $\fvt(x) =
      \sum_{j : I_j \ni t} \densj x_j$.  Note $\fval(x) = \sum_t\fvt(x)$. We drop the argument $x$ when it is clear from the context.
    \item For any set of items $I$ (that may or may not be an interval), define $\fvI = \sum_{t \in I}\fvt$.
\end{itemize}

In this step, we remove from consideration jobs with fractional value less than half the total fractional value of their interval. In particular, let $U_1 = \{j : \valj\ge\half \fvI[\intj] \}$. The following lemma shows that we do not lose too much value in doing so.
\begin{lemma}
    \label{lem:small-v-bound}
    For any fractional allocation $x$ and the set of jobs $U_1$ as
    defined above, we have \[\fval(x_{U_1})\ge \half\fval(x).\]
\end{lemma}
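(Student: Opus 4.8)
The plan is to bound the value we discard when we pass from $x$ to $x_{U_1}$ by a "local" comparison at each item, and then sum over items. Recall $U_1 = \{j : v_j \ge \tfrac12 \fvI[\intj]\}$, so the discarded jobs are exactly those with $v_j < \tfrac12 \fvI[\intj]$. I would work item by item: fix an item $t$ and look at the contribution $\fvt(x) = \sum_{j : \intj \ni t} \densj x_j$ to the total fractional value. The key observation is that $\densj = v_j/|\intj|$, so a job $j$ that is "low value" relative to its own interval still only contributes $\densj x_j$ at item $t$, and I want to charge this contribution against something I keep.

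The natural charging target is the value $\fvt$ itself. Here is the subtlety that makes the $\tfrac12$ work: for a job $j$ passing through $t$, we have $\fvI[\intj] \ge \fvt$, simply because $\fvI[\intj] = \sum_{t' \in \intj} \fvt[t']$ is a sum of nonnegative terms including the term for $t$. Therefore every discarded job $j$ through $t$ satisfies $v_j < \tfrac12 \fvI[\intj]$... but that inequality runs the wrong way for a direct per-item bound, so instead I would compare densities: the discarded jobs through $t$ each have $\densj = v_j / |\intj|$, and I want to show their total density-weighted fractional mass $\sum_{j \notin U_1, \intj \ni t} \densj x_j$ is at most $\tfrac12 \fvt$.

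So the heart of the argument is the per-item claim
\[
    \sum_{j : \intj \ni t,\ j \notin U_1} \densj x_j \ \le\ \tfrac12\, \fvt.
\]
To get this I would use that for $j \notin U_1$, $v_j < \tfrac12 \fvI[\intj]$, and that $\fvt \le \fvI[\intj]$ (since $\intj \ni t$). Combining, for each discarded $j$ through $t$ we get $v_j < \tfrac12 \fvI[\intj]$; I'd like to turn this into a statement about the sum of $\densj x_j$. The cleanest route: observe $\sum_{j : \intj \ni t} \densj x_j = \fvt$ by definition, and separately bound the $U_1$ part from below, i.e. show $\sum_{j \in U_1, \intj \ni t} \densj x_j \ge \tfrac12 \fvt$; subtracting gives the discarded part is at most $\tfrac12 \fvt$. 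To lower-bound the $U_1$ part I would pick, among all jobs $j$ with $\intj \ni t$, the one maximizing... actually here is the clean version: there exists at least one job $j^*$ through $t$ with $\densj[j^*] x_{j^*}$ comparable to $\fvt$ isn't quite enough; instead I'd argue that the job(s) realizing most of the density at $t$ have large $v_j$. Concretely, among jobs through $t$, sort by $v_j$; the ones with $v_j \ge \tfrac12\fvt$ are automatically in $U_1$ (since $\fvI[\intj]\ge\fvt$ would give... no — we need $v_j \ge \tfrac12 \fvI[\intj]$, and $\fvI[\intj]$ could exceed $\fvt$).

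The step I expect to be the real obstacle is exactly this reconciliation: $U_1$ membership is defined via $\fvI[\intj]$ (a global quantity over the whole interval), but I want a bound localized at a single item $t$, where the relevant quantity is $\fvt$. Since $\fvI[\intj] \ge \fvt$, the condition $v_j \ge \tfrac12 \fvI[\intj]$ is \emph{stronger} than $v_j \ge \tfrac12 \fvt$, so fewer jobs qualify for $U_1$ than a naive per-item argument would want. I suspect the correct approach is therefore not purely per-item but uses a global/fractional averaging: integrate the inequality $v_j < \tfrac12 \fvI[\intj]$ against $x_j$ over discarded jobs, write $\fvI[\intj] = \sum_{t \in \intj}\fvt$, swap the order of summation to get $\sum_t \fvt \cdot (\text{total discarded } x\text{-mass through } t \text{ weighted by } 1/|\intj|)$, and compare with $\fval(x) = \sum_t \fvt$ and $\fval(x_{U_1})$. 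If the weighted mass factor is controlled by $1$ via the fact that $\sum_{j : \intj \ni t} x_j/|\intj|$ relates to density-normalized supply, the $\tfrac12$ falls out. I would flag this swap-of-summation-plus-averaging as the crux and expect the rest (monotonicity, nonnegativity, $\fval = \sum_t \fvt$) to be routine.
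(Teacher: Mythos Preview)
Your final paragraph has the right approach and is essentially what the paper does: sum $v_j x_j$ over $j \notin U_1$, use $v_j < \tfrac12\,\fvI[\intj] = \tfrac12 \sum_{t \in \intj} \fvt$, and swap the order of summation. After the swap you get
\[
\sum_{j \notin U_1} v_j x_j \;<\; \tfrac12 \sum_t \fvt \cdot \Big(\sum_{\substack{j \notin U_1 \\ \intj \ni t}} x_j\Big),
\]
with \emph{no} $1/|\intj|$ weighting---that factor is spurious, and the ``density-normalized supply'' idea is a red herring. The closing step is simply the unit-capacity supply constraint $\sum_{j : \intj \ni t} x_j \le 1$, which bounds the parenthesized sum by $1$ and yields $\sum_{j \notin U_1} v_j x_j < \tfrac12 \sum_t \fvt = \tfrac12\,\fval(x)$. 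Your earlier attempt at a purely per-item inequality was a dead end for exactly the reason you identified.
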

\noindent
The proof of Lemma~\ref{lem:small-v-bound} is a simple counting
argument and we defer it to \Cref{sec:deferred}.

\subsubsection*{Step 2: Bucketing}
\label{sec:bucketing}

In the rest of the argument, we will group jobs according to both their value
and the length of their interval. Let $\ell\in\{1, \cdots, \lceil\log
L\rceil+1\}$ denote a length scale, and let $a\in\{1, \cdots, \lceil \log \vmax
\rceil\}$ denote a value scale. We will further partition jobs of similar
length across non-overlapping intervals. Accordingly, let $\offset$ be an
offset picked u.a.r. from $[2^{\lmax}]$ where $\lmax = \lceil\log L\rceil+1$.
The partition corresponding to length scale $\ell$ then consists of
length-$2^{\ell}$ intervals $\{\Int_{\ell,k}\}_{k\in\Z}$ where
\[
    \Int_{\ell,k}:=[\offset+k 2^{\ell}+1, \offset+(k+1)2^{\ell}].
\]
See \Cref{fig:bucketing}.

We are now ready to define job groups formally. Group $G_{\ell, k,a}$ consists of every job $j$ of length between $2^{\ell-2}$ and $2^{\ell-1}$ and value between $2^{a-1}$ and $2^a$ whose interval lies within the $k$th interval at length scale $\ell$: $I_j\subseteq \Int_{\ell,k}$. We say that the jobs in $G_{\ell, k,a}$ are assigned to interval $\Int_{\ell,k}$.
\[
    G_{\ell, k,a} := \{j: \lceil \log |I_j| \rceil = \ell-1 \quad\Land\quad
            \lceil \log v_j \rceil = a \quad\Land\quad
            I_j\subseteq \Int_{\ell,k} \}
\]

\noindent
$G_{\ell, k}$ denotes all the jobs ``assigned'' to $\Int_{\ell,k}$: $G_{\ell,k}
= \Union_a G_{\ell,k,a}$. 

Observe that our choice of the offset $\offset$ may cause us to drop some jobs,
in particular those that do not fit neatly into one of the intervals in the
partition corresponding to the relevant length scale. Let $U_2 =
\Union_{\ell,k,a} G_{\ell,k,a}$. The next lemma bounds the loss in value at
this step.

\begin{lemma}
    \label{lem:hierarchy-alignment}
    For any fractional allocation $x$ and with $U_1$ and $U_2$ defined as above, $\fval(x_{U_2})\ge \half \fval(x_{U_1})$.
\end{lemma}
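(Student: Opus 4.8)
The goal is to show that the random offset $\offset$, drawn uniformly from $[2^{\lmax}]$, causes us to lose at most half the fractional value of $x_{U_1}$ in expectation; since a good offset then exists deterministically, the lemma follows. The plan is to fix a job $j\in U_1$ and compute the probability that $j$ survives, i.e. that $j\in U_2$. By definition, $j$ is assigned to some interval $\Int_{\ell,k}$ at the unique length scale $\ell-1 = \lceil\log|I_j|\rceil$, i.e. $\ell = \lceil\log|I_j|\rceil + 1$, precisely when $I_j$ is contained in one of the length-$2^\ell$ intervals $\Int_{\ell,k} = [\offset + k2^\ell + 1, \offset + (k+1)2^\ell]$ of that partition. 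Since $|I_j| \le 2^{\ell-1} = \tfrac12\cdot 2^\ell$, the interval $I_j$ has length at most half the block length, so as the offset varies over a full period of $2^\ell$ consecutive values, the "bad" offsets — those for which $I_j$ straddles a block boundary — number exactly $|I_j| - 1 < 2^{\ell-1}$, while the "good" offsets number $2^\ell - |I_j| + 1 > 2^{\ell-1}$. Hence $\Pr[j\in U_2] > 1/2$. (One should check the indexing: $\offset$ ranges over $[2^{\lmax}]$ and $2^\ell \mid 2^{\lmax}$ for every relevant $\ell\le\lmax$, so the offset modulo $2^\ell$ is uniform over a full period, making the count above exact.)

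Given this per-job survival bound, the second step is a standard linearity-of-expectation argument. Define the random quantity $\fval(x_{U_2})$ as a function of the offset; by linearity,
\[
    \expect[\offset]{\fval(x_{U_2})} = \sum_{j\in U_1} v_j x_j \,\Pr[j\in U_2] \ge \half \sum_{j\in U_1} v_j x_j = \half\fval(x_{U_1}).
\]
Therefore there exists a choice of $\offset$ achieving $\fval(x_{U_2}) \ge \half\fval(x_{U_1})$, and we fix that offset henceforth. This is exactly the statement of the lemma.

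I expect the only subtlety — and hence the part to write carefully — to be the boundary-counting for a single job: getting the off-by-one right in how many of the $2^\ell$ residues of $\offset$ place $I_j$ entirely inside a single block versus splitting it, and confirming that the strict inequality $|I_j|\le 2^{\ell-1}$ (which holds because $j$'s length lies in $(2^{\ell-2}, 2^{\ell-1}]$) is what guarantees the good fraction strictly exceeds one half. Everything else is routine. One minor bookkeeping point: the lemma as stated wants a deterministic conclusion, so I would phrase the argument as "pick $\offset$ uniformly at random, show the expectation bound, conclude a good $\offset$ exists" and note that this fixed $\offset$ is used for the remainder of the construction (Steps 3–5).
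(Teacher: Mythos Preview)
Your proposal is correct and takes essentially the same approach as the paper: the paper's proof is a two-sentence version of exactly what you describe, observing that each job has length at most half the block length at its scale and hence survives the random offset with probability at least $1/2$. Your write-up is more careful about the boundary counting and the derandomization step (fixing a good $\offset$), neither of which the paper spells out, but the underlying argument is identical.
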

\begin{proof}
  Recall that $\offset$ is chosen u.a.r. from $2^\lmax$. Moreover, the length of any job $j$ is at most half the length of the intervals at the scale at which $j$ is considered for bucketing. Therefore, $j$ survives with probability at least $1/2$.
\end{proof}

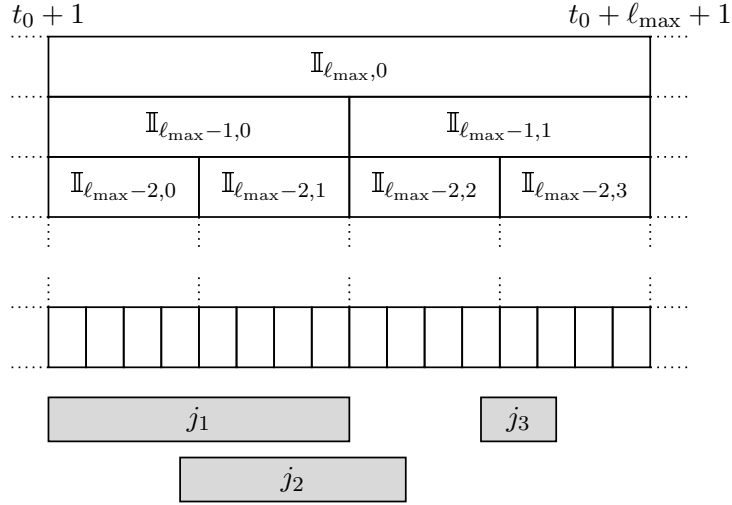
\begin{figure}
    \centering
    \newcommand{\cellwidth}{0.5 cm}
    \newcommand{\cellheight}{0.8 cm}

    \begin{tikzpicture}
        [x=\cellwidth,y=-\cellheight,node distance=0 cm,outer sep=0 pt,line
        width=0.66pt]

        \tikzstyle{cell}=[rectangle,draw,
            minimum height=\cellheight,
            anchor=north west,
            text centered]
        \tikzstyle{w16}=[cell,minimum width=16*\cellwidth]
        \tikzstyle{w8}=[cell,minimum width=8*\cellwidth]
        \tikzstyle{w4}=[cell,minimum width=4*\cellwidth]
        \tikzstyle{w2}=[cell,minimum width=2*\cellwidth]
        \tikzstyle{w1}=[cell,minimum width=1*\cellwidth]
        \tikzstyle{job}=[rectangle,draw,fill=gray!30,anchor=north west]

        \node[anchor=south] at (0, 0) {$\offset + 1$};
        \node[anchor=south] at (16,0) {$\offset + \lmax + 1$};
        \node[w16] at (0, 0) {$\Int_{\lmax,0}$};
        \node[w8]  at (0, 1) {$\Int_{\lmax-1,0}$};
        \node[w8]  at (8, 1) {$\Int_{\lmax-1,1}$};
        \foreach \k in {0,1,2,3} {
            \node[w4] at (4*\k, 2) {$\Int_{\lmax-2,\k}$};
        }

        \foreach \y in {0,1,2,3, 4.5,5.5} {
            \draw[dotted] (0, \y) -- (-1,\y);
            \draw[dotted] (16,\y) -- (17,\y);
        }

        \foreach \x in {0,4,...,16} {
            \draw[dotted] (\x,3) -- (\x,3.5);
            \draw[dotted] (\x,4.5) -- (\x,4);
        }

        \foreach \x in {0,...,15} {
            \node[w1] at (\x, 4.5) {};
        }

        \node[job,minimum width=8*\cellwidth] at (   0,6) {$j_1$};
        \node[job,minimum width=6*\cellwidth] at ( 3.5,7) {$j_2$};
        \node[job,minimum width=2*\cellwidth] at (11.5,6) {$j_3$};
    \end{tikzpicture}
    \caption{\small{Bucketing jobs by length as in \hyperref[sec:bucketing]{Step 2}.
    Jobs $j_1$ and $j_2$ have length scale $2^\lmax$ (i.e., lengths between
    $2^{\lmax-2}$, exclusive, and $2^{\lmax-1}$, inclusive) and will therefore
    be assigned to bucket $\Int_{\lmax,0}$. Job $j_3$, however, will be
    dropped; it has length scale $\lmax-2$, but does not fit entirely within
    any bucket at that scale.}}
    \label{fig:bucketing}
\end{figure}

\subsubsection*{Step 3: Classification into heavy-weight and light-weight sets of jobs}

We first discuss the intuition behind Steps 3 and 4. Consider a single
group $G_{\ell,k,a}$. Let us assume briefly, for simplicity, that all
jobs in this group have the same value $v$. Recall that all of the
jobs $j\in G_{\ell,k,a}$ by virtue of being in $U_1$ satisfy the
property that $\fvI[\intj]\le 2\valj=2v$. Furthermore, since every job
$j\in G_{\ell,k,a}$ is of size at least $2^{\ell-2}$ whereas the
interval covered by $G_{\ell,k,a}$, namely $\Int_{\ell,k}$, is of size
$2^\ell$, the total fractional value of $\Int_{\ell,k}$ is\footnote{To
  be precise, the total value of $\Union_{j\in G_{\ell,k,a}} \intj$ in
  this case is at most $8v$.} comparable to $v$.  Therefore, as long
as the jobs in the group $G_{\ell,k,a}$ have sufficient total
fractional weight, this group of jobs alone would recover (a constant
fraction of) the fractional value of the interval $\Int_{\ell,k}$. In
other words, we could then immediately throw away jobs in other groups
(corresponding to other length or value scales) that overlap with this
interval. Step 3 filters out such ``heavy-weight'' groups of jobs and
in Step 4 we argue that these groups form a good unit
allocation. Intervals that do not have any length scales with
heavy-weight groups of jobs are relegated to Step~5.

We now present the details of this argument. First, for every group
$G_{\ell,k,a}$, we break the group up into contiguous
components. Observe, in particular, that the set of items covered by
jobs in $G_{\ell,k,a}$, namely $\Union_{j\in G_{\ell,k,a}} \intj$, may
not be an interval itself but is composed of at most three disjoint
intervals because each contiguous component has length at least a
quarter of $|\Int_{\ell,k}|$. Consider each of the at most three
corresponding sets of overlapping jobs. We will use
$G_{\ell,k,a}^{(i)}$ for $i\in \{1,2,3\}$ to denote these sets. For
any such set $G$ of overlapping jobs, let
$\Int_G = \Union_{j\in G} \intj$ denote the interval of items it
covers.

Next, we classify these sets of jobs into heavy-weight or light-weight. In the following definition, we hide the arguments $\ell,k,a,i$ to simplify notation.
\[\heavy := \{G : \fwt(x_G) \ge 1/6\beta\} \quad \quad \light := \{G : \fwt(x_G) < 1/6\beta\} \]

\noindent
The following fact is immediate:

\begin{fact}
  \label{fact:heavy-light-classification}
  $\sum_{G\in\heavy}\fval(x_G) + \sum_{G\in\light} \fval(x_G) \ge \fval(x_{U_2})$. 
\end{fact}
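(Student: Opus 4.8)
The plan is to observe that the sets $G_{\ell,k,a}^{(i)}$, ranging over all length scales $\ell$, positions $k$, value scales $a$, and components $i\in\{1,2,3\}$, form a partition of $U_2$, and that each one is classified as either heavy or light; consequently the left-hand side is in fact \emph{equal} to $\fval(x_{U_2})$, which is stronger than the claimed inequality.

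First I would check that the groups $G_{\ell,k,a}$ partition $U_2$. By construction, a job $j\in U_2$ has a uniquely determined length scale (the $\ell$ with $\lceil\log|I_j|\rceil=\ell-1$), a uniquely determined value scale (the $a$ with $\lceil\log v_j\rceil=a$), and — since it survived Step 2 — a unique index $k$ with $I_j\subseteq\Int_{\ell,k}$. Hence $j$ lies in exactly one group $G_{\ell,k,a}$. Within a fixed group, the (at most three) contiguous components $G_{\ell,k,a}^{(i)}$ are obtained from the maximal connected pieces of $\Union_{j\in G_{\ell,k,a}}\intj$; since each $\intj$ is itself an interval, it lies entirely inside one such piece, so $j$ belongs to exactly one $G_{\ell,k,a}^{(i)}$. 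Combining the two observations, $\{G_{\ell,k,a}^{(i)}\}$ partitions $U_2$, and therefore $\fval(x_{U_2})=\sum_{\ell,k,a,i}\fval\!\left(x_{G_{\ell,k,a}^{(i)}}\right)$ directly from the definition of $\fval(x_A)$ for a subset $A$ of jobs.

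Finally, every component $G$ satisfies either $\fwt(x_G)\ge 1/6\beta$ or $\fwt(x_G)<1/6\beta$, i.e.\ it belongs to exactly one of $\heavy$, $\light$. Summing the displayed identity over these two classes yields $\sum_{G\in\heavy}\fval(x_G)+\sum_{G\in\light}\fval(x_G)=\fval(x_{U_2})$, which implies the fact. There is no genuine obstacle here; the only point requiring minor care is that the decomposition of each group into contiguous components is a true partition of that group's jobs — no job straddles two components — and this holds precisely because each job's bundle $\intj$ is a connected interval.
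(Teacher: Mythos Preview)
Your argument is correct and matches the paper's intent: the paper states this fact as ``immediate'' without proof, and your spelling-out of why the components $G_{\ell,k,a}^{(i)}$ partition $U_2$ (so that in fact equality holds) is exactly the underlying reason.
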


We will now proceed to construct two fractional unit allocations, one of which
extracts the value of the heavy-weight sets (see Step 4), and the other
extracts the value of the light-weight sets (see Step~5).

\subsubsection*{Step 4: Extracting the value of heavy-weight intervals}

Consider any heavy-weight set $G\in\heavy$ (with the arguments $\ell$, $a$,
etc. being implicit). The following lemma is implicit in the discussion above
and follows from the observations that (1) all jobs in $G$ are high-value jobs
(that is, they belong to $U_1$); (2) they have roughly the same value (within a
factor of $2$); (3) the interval $\Int_G$ can be covered by at most $6$ such
jobs; and (4) the total weight of $G$, $\fwt(x_G)$, is at most $1/6\beta$ by
the definition of heavy-weight sets. 
\begin{lemma}
    \label{lem:heavy-Gs}
    For all $j\in G$, $\valj \ge \frac 1{12} \fvI[\Int_G]$, and therefore
    $\fval(x_G)\ge \frac 1{72\beta} \fvI[\Int_G]$.
\end{lemma}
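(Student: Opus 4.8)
The plan is to establish the first inequality directly from the structural properties of a heavy-weight set $G = G_{\ell,k,a}^{(i)}$, and then derive the second inequality by combining it with the heavy-weight weight bound. First I would recall the relevant facts that were set up in Steps 1--3: every job $j \in G$ lies in $U_1$, so $\valj \ge \half \fvI[\intj]$; every $j\in G$ has length in $(2^{\ell-2}, 2^{\ell-1}]$ and value in $(2^{a-1}, 2^a]$, so all jobs in $G$ have values within a factor of $2$ of each other; and $G$ is a contiguous component, so $\Int_G = \Union_{j\in G}\intj$ is a single interval. The key geometric observation is that $\Int_G$ can be covered by at most $6$ of the jobs in $G$: since each job has length greater than $2^{\ell-2}$ and $|\Int_G| \le |\Int_{\ell,k}| = 2^\ell$, at most $2^\ell / 2^{\ell-2} = 4$ \dots actually a standard interval-covering argument (greedily picking leftmost-reaching jobs) covers an interval of length $\le 2^\ell$ by jobs of length $> 2^{\ell-2}$ using at most $\lceil 2^\ell / 2^{\ell-2}\rceil + 2 = 6$ jobs, accounting for the two ``boundary'' jobs that may stick out.

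Given the covering by jobs $j_1,\dots,j_m$ with $m \le 6$, I would bound $\fvI[\Int_G] \le \sum_{r=1}^m \fvI[\intj[j_r]] \le \sum_{r=1}^m 2\valj[j_r] \le 12 \max_r \valj[j_r]$, where the first step is subadditivity of $\fvI{\cdot}$ over a cover (since $\fvt$ is nonnegative and the $\intj[j_r]$ cover $\Int_G$), the second step uses $j_r \in U_1$, and the third uses $m\le 6$. Since all values in $G$ are within a factor $2$, $\max_r \valj[j_r] \le 2\valj$ for any $j\in G$, which gives $\fvI[\Int_G] \le 12\valj$, i.e.\ $\valj \ge \frac1{12}\fvI[\Int_G]$ as claimed. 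Wait --- I should double check the constant: $\sum 2\valj[j_r] \le 6 \cdot 2 \cdot \max_r\valj[j_r] = 12\max_r$, and then $\max_r \le 2\valj$ would give $24\valj$; to land on the stated $\frac1{12}$ I'd instead compare each $\valj[j_r]$ directly to $\valj$ via the value-scale bound $\valj[j_r], \valj \in (2^{a-1},2^a]$ so $\valj[j_r] \le 2\valj$ is the wrong direction to save a factor --- the cleanest route is to observe $\valj[j_r] \le 2^a < 2\valj$ and sum, getting $\fvI[\Int_G] \le \sum_r 2\valj[j_r] \le \sum_r 2\cdot 2^a \le 12\cdot 2^a < 24\valj$; the paper's $\frac1{12}$ presumably comes from a slightly tighter covering count (at most $3$ jobs, since each contiguous component has length $\ge \frac14|\Int_{\ell,k}|$ means the component fits inside and is covered by $\le 3$ jobs of length $>\frac14$ its containing interval), so I would use $m\le 3$ to get $\fvI[\Int_G] \le \sum_{r=1}^3 2\valj[j_r] \le 6\cdot 2^a \le 12\valj$.

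For the second inequality, I would write $\fval(x_G) = \sum_{j\in G} \valj x_j \ge \frac1{12}\fvI[\Int_G]\sum_{j\in G} x_j = \frac1{12}\fvI[\Int_G]\,\fwt(x_G)$, and then apply the defining property of $\heavy$, namely $\fwt(x_G) \ge 1/6\beta$, to conclude $\fval(x_G) \ge \frac1{72\beta}\fvI[\Int_G]$. The main obstacle --- really the only nonroutine point --- is nailing the covering count (whether it is $3$ or $6$ jobs) and making sure the value-scale and $U_1$ bounds are chained in the direction that yields exactly the claimed constants; everything else is bookkeeping with nonnegative quantities. I would present the covering argument carefully, using that each contiguous component of $G_{\ell,k,a}$ has length at least a quarter of $|\Int_{\ell,k}| = 2^\ell$ (as asserted in Step 3) so that $|\Int_G|$ relative to the minimum job length $2^{\ell-2}$ gives a small constant cover.
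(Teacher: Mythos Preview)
Your approach mirrors the paper's sketch exactly: cover $\Int_G$ by a constant number of jobs from $G$, invoke the $U_1$ bound and the common value scale, then multiply by the heavy-weight lower bound $\fwt(x_G)\ge 1/(6\beta)$. Your first covering count $m\le 6$ is the correct one and is precisely what the paper asserts in observation~(3); your later switch to $m\le 3$ is unjustified, since $\Int_G$ can have length up to $2^\ell$ while each job has length only just above $2^{\ell-2}$, so four jobs may genuinely be required (e.g.\ four nearly-abutting jobs of length $2^{\ell-2}+1$ whose union is all of $\Int_{\ell,k}$). With $m\le 6$ your chain gives $\fvI[\Int_G]\le 12\cdot 2^a<24\,\valj$, a factor of two looser than the stated $\tfrac{1}{12}$; this slack comes from the value-scale comparison in observation~(2) and is immaterial, since the lemma feeds only into an $O(\beta)$ bound in Lemma~\ref{lem:heavy-bound}. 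Keep the cover by six and accept the harmless constant.
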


There are two remaining issues in going from the allocation $x_{\Union_\heavy
G}$ to a unit-allocation. First, the intervals $\{\Int_G\}_{G\in\heavy}$
overlap, and second, the fractional weight of $G$ can be larger than $1$. The
second issue can be dealt with by rescaling $x$ by an appropriate factor. To
deal with the first, we state the following simple lemma without proof.
\begin{lemma}
    \label{lem:interval-cover}
    For any given collection $\mathcal C$ of intervals, one can efficiently
    construct two sets $S_1, S_2\subseteq \mathcal C$ such that 
    \begin{itemize}
        \item $S_1$ (and likewise $S_2$) is composed of disjoint intervals;
            that is, $I\cap I' = \emptyset$ for all $I\ne I'\in S_1$.
        \item Together they cover the entire collection $\mathcal C$; that is,
            $\Union_{I\in S_1\union S_2} I = \Union_{I\in\mathcal C} I$.
    \end{itemize}
\end{lemma}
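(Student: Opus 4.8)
The plan is a standard greedy interval-cover argument followed by a parity coloring. First I would reduce to the case where $\Union_{I\in\mathcal C} I$ is a single connected interval (a ``run''): the union of any family of intervals decomposes into maximal connected pieces, intervals belonging to different pieces are automatically disjoint, and so it suffices to handle each piece separately and then take the union of the resulting $S_1$'s (and of the $S_2$'s) across pieces.

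Within a single run, I would run the classical greedy cover. Sort $\mathcal C$ by left endpoint. Let $J_1$ be an interval of $\mathcal C$ whose left endpoint is minimum (equivalently, hits the leftmost point of the run), breaking ties toward the rightmost right endpoint, and write $r_1$ for its right endpoint. Having chosen $J_1,\dots,J_i$ with right endpoints $r_1<r_2<\cdots<r_i$, if $r_i$ already reaches the right end of the run, stop; otherwise let $J_{i+1}$ be an interval of $\mathcal C$ that (a) meets the coverage so far, i.e. has left endpoint at most $r_i$ (or at most $r_i+1$ in the discrete item setting, so no item is skipped), and (b) extends furthest to the right among all such intervals. Since the run is connected, such a $J_{i+1}$ exists until the right end is reached, so the process terminates with a finite chain $J_1,\dots,J_s$ whose union is the whole run.

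The key observation is that $J_i$ and $J_{i+2}$ are disjoint: the left endpoint of $J_{i+2}$ must exceed $r_i$, for otherwise $J_{i+2}$ would have been an eligible candidate at the step at which $J_{i+1}$ was chosen, and since $r_{i+2}>r_{i+1}$ it would have been selected in place of $J_{i+1}$ --- a contradiction. Because the right endpoints are strictly increasing, it follows more generally that $J_i$ and $J_j$ are disjoint whenever $|i-j|\ge 2$. Hence setting $S_1=\{J_1,J_3,J_5,\dots\}$ and $S_2=\{J_2,J_4,J_6,\dots\}$ yields two families of pairwise disjoint intervals whose union is the run; doing this for every run and taking the overall unions produces the claimed $S_1,S_2\subseteq\mathcal C$. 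Every step after the initial sort is $O(1)$ amortized, so the construction runs in $O(|\mathcal C|\log|\mathcal C|)$ time.

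There is no serious obstacle here --- the lemma is essentially folklore, which is presumably why the authors state it without proof --- but the one point that needs a little care is the disjointness of $J_i$ and $J_{i+2}$, together with the off-by-one bookkeeping at interval endpoints (open versus closed, or, in the item setting, making sure the chain leaves no item uncovered). Everything else is routine.
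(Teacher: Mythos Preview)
Your proof is correct. The paper explicitly states this lemma \emph{without} proof (``We state the following simple lemma without proof''), so there is nothing to compare against; your greedy-cover followed by parity-coloring argument is the standard folklore proof and fills in the omitted details cleanly, including the one nontrivial step (disjointness of $J_i$ and $J_{i+2}$ via the greedy choice).
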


We can now put these lemmas together to construct a unit allocation that covers the fractional value of heavy-weight intervals.

\begin{lemma}
  \label{lem:heavy-bound}
  There exists a fractional unit allocation $\tilde{x}_\heavy$ such that 
\[\sum_{G\in\heavy}\fval(x_G)\le O(\beta)\fval(\tilde{x}_\heavy).\]
\end{lemma}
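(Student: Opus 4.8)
The plan is to build the fractional unit allocation $\tilde{x}_\heavy$ directly from the heavy-weight sets, using Lemma~\ref{lem:interval-cover} to resolve overlaps and a global rescaling to enforce the unit-weight constraint. First I would collect the intervals $\{\Int_G\}_{G\in\heavy}$ and apply Lemma~\ref{lem:interval-cover} to split them into two subcollections $S_1, S_2$, each consisting of pairwise-disjoint intervals, whose union covers $\Union_{G\in\heavy}\Int_G$. Since together $S_1$ and $S_2$ cover all the $\Int_G$'s, we have $\sum_{G:\Int_G\in S_1}\fvI[\Int_G] + \sum_{G:\Int_G\in S_2}\fvI[\Int_G] \ge \sum_{G\in\heavy}\fvI[\Int_G]$ (one has to be slightly careful here since distinct $G$'s can share the same covering interval or overlap, but using $\fval(x_G)\ge\frac1{72\beta}\fvI[\Int_G]$ from Lemma~\ref{lem:heavy-Gs} together with the fact that each item lies in $O(1)$ of the $\Int_G$'s at a fixed length scale and there are only $O(\log L) = O(\beta\log\beta)$ length scales, we can bound $\sum_{G\in\heavy}\fval(x_G)$ by $O(\beta)$ times the fractional value covered by whichever of $S_1,S_2$ is larger). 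Without loss of generality take $S_1$ to be the subcollection capturing at least half of $\sum_{G\in\heavy}\fval(x_G)$.

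Next I would define the bundle partition underlying $\tilde{x}_\heavy$: the intervals in $S_1$ are disjoint, so take them as bundles $T_1, T_2, \dots$ (padding out the rest of $\items$ with dummy singleton bundles so that we have a genuine partition of the multiset of items). For each $\Int_G \in S_1$, the associated set of jobs $A$ should be (a subset of) $G$; all jobs in $G$ satisfy $I_j\subseteq\Int_G$, so the containment requirement of a fractional unit allocation is met. The remaining obstacle is the weight constraint $\fwt(x_{A}) \le 1$. By the definition of heavy-weight, $\fwt(x_G) \ge 1/6\beta$, but it could be much larger. I would therefore set $\tilde{x}_\heavy := c\cdot x$ restricted to $\Union_{\Int_G\in S_1} G$ for a uniform scaling constant $c$ chosen so that every surviving bundle has fractional weight at most $1$; since we only need a lower bound on $\fval(\tilde{x}_\heavy)$ of the form $\frac1{O(\beta)}\sum_{G\in\heavy}\fval(x_G)$, it suffices that $c = \Omega(1/\beta)$. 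Concretely: within a single length scale $\ell$ and a single interval $\Int_{\ell,k}$, the demand constraint \eqref{eq:demand} together with the value-bucketing means the total weight is bounded; more robustly, because the covered interval $\Int_G$ has length $O(2^\ell)$ and each job in $G$ has length $\Omega(2^\ell)$, the supply constraint \eqref{eq:supply} caps $\fwt(x_G)$ at $O(2^\ell)/\Omega(2^\ell) = O(1)$ in the unit-capacity setting. Hence taking $c$ a small constant already gives $\fwt(\tilde{x}_{\heavy, A}) \le 1$ for every bundle, while losing only a constant factor in value — well within the $O(\beta)$ budget. Combined with the factor-$2$ loss from choosing $S_1$ over $S_2$ and the $O(\beta)$ loss from Lemma~\ref{lem:heavy-Gs}, this yields $\sum_{G\in\heavy}\fval(x_G) \le O(\beta)\fval(\tilde{x}_\heavy)$.

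The main obstacle I anticipate is bookkeeping around how many heavy-weight sets $G$ can be associated with a single covering interval in $S_1$ and, relatedly, how the $\fval(x_G)$ values across different length scales $\ell$ that map into the same bundle $T_k$ add up — we must not double-count value while still ensuring each bundle's jobs have total weight $\le 1$. The cleanest way to handle this is to observe that for a fixed bundle $T_k = \Int_G$ drawn from length scale $\ell$, only jobs of length $\Theta(2^\ell)$ that fit inside $T_k$ are placed in $A_k$, so jobs from other length scales are simply not used in this allocation; the value of heavy-weight sets at other length scales is then recovered by reapplying the same construction at those scales and taking a union (the length scales are disjoint in which jobs they own, so the resulting job-sets $A_k$ across all scales are automatically disjoint, and we pay only a further constant — not $\log L$ — factor because the $\fvI[\Int_G]$ at different scales overlapping a fixed item sum to at most $O(\log L) = O(\beta\log\beta)$ times... ). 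Actually the right move, to keep the factor at $O(\beta)$ rather than $O(\beta\log\beta)$, is to select a single best length scale $\ell^\star$ maximizing the heavy-weight value and build $\tilde{x}_\heavy$ from that scale alone, absorbing the $\log L$ into... no — instead one charges each heavy set $G$ at scale $\ell$ to the $\Theta(2^\ell)$ items it genuinely dominates (value $\ge \frac1{12}\fvI[\Int_G]$ per job), and since an item at a fixed scale is dominated by $O(1)$ heavy sets and the total fractional value $\fval(x) = \sum_t \fvt$ is fixed, summing $\sum_{G\in\heavy}\fval(x_G) \le O(1)\sum_t \fvt \cdot (\text{number of scales})$ is the weak bound; the sharper accounting that avoids the extra $\log$ uses that heavy sets have weight $\ge 1/6\beta$ and total weight is bounded by $\fwt(x)$, so there are few of them — this is the delicate point, and I would handle it by the disjointification-plus-uniform-rescaling argument above applied once, with the value comparison coming from Lemma~\ref{lem:heavy-Gs} rather than from summing over scales.
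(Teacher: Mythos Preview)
Your overall architecture matches the paper's: apply Lemma~\ref{lem:interval-cover} to $\{\Int_G\}_{G\in\heavy}$, keep the better half $S_1$, set $\tilde{x}_\heavy$ to be $x$ restricted to the jobs in the chosen groups scaled by a constant, and verify the weight bound via the supply constraint (each job in $G$ has length at least $|\Int_G|/4$, so $\fwt(x_G)\le 4$; the paper uses exactly this with $c=\tfrac14$). That part is fine.

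The gap is in your value accounting. You repeatedly worry about summing over length scales, consider selecting a single best scale, and never settle the ``delicate point.'' The clean step you are missing is this: the heavy groups $G=G_{\ell,k,a}^{(i)}$ are pairwise-disjoint \emph{sets of jobs} (they come from the partition of $U_2$), so
\[
\sum_{G\in\heavy}\fval(x_G)=\fval\bigl(x,\textstyle\bigcup_{G\in\heavy}G\bigr)=\sum_{j\in\bigcup G}\sum_{t\in I_j}\rho_j x_j\le\sum_{t\in\bigcup_{G}\Int_G}\fvt=\fvI[\textstyle\bigcup_{G\in\heavy}\Int_G].
\]
This single inequality eliminates all length-scale bookkeeping; there is no $\log L$ and no per-scale argument. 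From here the comparison is immediate: since $S_1$'s intervals are disjoint and $S_1\cup S_2$ covers $\bigcup_{G\in\heavy}\Int_G$, one of them (say $S_1$) satisfies $\sum_{G\in S_1}\fvI[\Int_G]\ge\fvI[\bigcup_{G\in S_1}\Int_G]\ge\tfrac12\fvI[\bigcup_{G\in\heavy}\Int_G]$, and then Lemma~\ref{lem:heavy-Gs} gives $\fval(x_G)\ge\tfrac{1}{72\beta}\fvI[\Int_G]$ for each $G\in S_1$, closing the loop with an $O(\beta)$ factor.

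A secondary issue: your sentence ``take $S_1$ to be the subcollection capturing at least half of $\sum_{G\in\heavy}\fval(x_G)$'' is not justified, because $S_1\cup S_2$ need not equal $\heavy$; Lemma~\ref{lem:interval-cover} only guarantees the \emph{intervals} in $S_1\cup S_2$ cover the union of all $\Int_G$, not that every group appears. The correct quantity to split in half is $\fvI[\bigcup_{G\in\heavy}\Int_G]$, as above.
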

\begin{proof}
    Apply Lemma~\ref{lem:interval-cover} to the collection
    $\{\Int_G\}_{G\in\heavy}$ to obtain sets $S_1$ and $S_2$. We think of $S_1$
    and $S_2$ as sets of groups in $\heavy$. Assume without loss of generality
    that $S_1$ has larger fractional value than $S_2$. Let $\tilde{x}_\heavy$
    be the fractional allocation $x_{\Union_{G\in S_1} G}$ scaled down by a
    factor of $4$. Then we have:
\begin{align*}
    \fval(\tilde{x}_\heavy) = \frac 14 \fval(x, \union_{G\in S_1} G) & \ge
            \frac 1{O(\beta)} \sum_{G\in S_1} \fvI[\Int_G] \\
    & \ge \frac 1{O(\beta)} \fvI[\union_{G\in\heavy}\Int_G]\ge \frac 1{O(\beta)} \sum_{G\in\heavy}\fval(x_G).
\end{align*}
Here the first inequality follows from applying \Cref{lem:heavy-Gs} to
every $G\in S_1$, and the fact that intervals in $S_1$ are
disjoint. The second inequality follows by recalling that the
intervals in $S_1$ and $S_2$ together cover the
collection$\{\Int_G\}_{G\in\heavy}$, and so,
$\fvI[\union_{G\in S_1}\Int_G]\ge
\half\fvI[\union_{G\in\heavy}\Int_G]$.

It remains to show that $\tilde{x}_\heavy$ is a unit allocation. To
see this, consider the partition of jobs into groups $G$ in $S_1$.
The corresponding collection of intervals
$\{\Int_G\}_{G\in S_1}$ forms a partition of the items by virtue of
the fact that the intervals corresponding to groups in $S_1$ are
disjoint. It remains to argue that $\fwt(\tilde{x}_\heavy, G)\le 1$,
that is, $\fwt(x_G)\le 4$ for all $G\in\heavy$. 

To prove this claim, observe that since each job in $G$ has length
at least a quarter of the length of $\Int_G$, we can find up to four
items in $\Int_G$ such that each job in $G$ contains at least one of
the four items in its interval. Since $x$ is a feasible fractional
allocation, the total weight of all jobs containing any one of those
items is at most $1$, and therefore the total weight of jobs in
$G$ altogether is at most~$4$.
\end{proof}

\subsubsection*{Step 5: Extracting the value of light-weight intervals}

We now consider the light-weight groups $G_{\ell,k,a}^{(i)}$. As discussed at the beginning of this section, in order to obtain a good approximation from these sets, we must construct a unit allocation out of jobs at multiple length scales.

Define $\Gtilde_{\ell,k,a} = \Union \{G_{\ell,k,a}^{(i)}\in\light\}$
to be the set of all jobs in $\Int_{\ell,k}$ with value scale $a$ that
belong to light-weight groups. Let $\Gtilde_{\ell,k}=\Union_a \Gtilde_{\ell,k,a}$
denote all light-weight jobs assigned to $\Int_{\ell,k}$.
Since each individual group
$G_{\ell,k,a}^{(i)}$ in $\light$ has total weight at most $1/6\beta$,
we have that $\fwt(\Gtilde_{\ell,k,a})\le 1/2\beta$. In order to
obtain a partition of jobs and items, we would now like to associate a
single set of jobs with each partition $\Int_{\ell,k}$ that is
simultaneously high value and low weight. Unfortunately, the set
$\Gtilde_{\ell,k}$ may have very large total weight since
it combines together many low weight sets. We use the fact that the
values of jobs in these low weight sets increase geometrically to
argue that it is possible to extract a subset of jobs from
$\Gtilde_{\ell,k}$ that is both light-weight (i.e. has
total weight at most $1/\beta$) and captures a large fraction of the
total value in $\Gtilde_{\ell,k}$.

\begin{lemma}
    \label{lem:a-cell-bound}
    For each interval $\Int_{\ell,k}$, there exists a set of jobs
    $S_{\ell,k} \subseteq \Gtilde_{\ell,k}$ such that
    \begin{enumerate}[label=\roman*., leftmargin=2\parindent]
        \item $\fwt(x, S_{\ell,k}) \le \frac1\beta$ and
        \item $\fval(x, S_{\ell,k}) \ge \frac16 \fval(x, \Gtilde_{\ell,k})$.
    \end{enumerate}
\end{lemma}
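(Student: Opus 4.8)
The plan is to build $S_{\ell,k}$ as a union $\Union_{a\in[a_0,a_{\max}]}\Gtilde_{\ell,k,a}$ of the jobs lying in a contiguous band of the \emph{top} value scales, chosen greedily to be as wide as the weight budget $1/\beta$ permits. Write $w_a=\fwt(x,\Gtilde_{\ell,k,a})$ and $V_a=\fval(x,\Gtilde_{\ell,k,a})$. Two facts drive the argument. First, every job in $\Gtilde_{\ell,k,a}$ has value in $(2^{a-1},2^a]$, so $2^{a-1}w_a\le V_a\le 2^a w_a$. Second, each $\Gtilde_{\ell,k,a}$ is a union of at most three light-weight groups, hence $w_a\le 1/(2\beta)$ for every $a$ (as already noted just before the lemma).

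Let $a_{\max}$ be the largest value scale with $w_{a_{\max}}>0$. Scan value scales in decreasing order $a_{\max},a_{\max}-1,\dots$, adding scale $a$ to the band as long as the running total of $w_a$ stays $\le 1/\beta$, and stopping the first time a scale would overflow the budget (empty scales have weight $0$, so the scan never stops on them). If the scan reaches the smallest scale without ever stopping, take $S_{\ell,k}=\Gtilde_{\ell,k}$: its weight is $\le 1/\beta$ and it captures all of $\fval(x,\Gtilde_{\ell,k})$, so both conditions hold trivially. Otherwise the band is $[a_0,a_{\max}]$ for some $a_0$, with $\sum_{a\ge a_0}w_a\le 1/\beta$ by construction, and---since adding scale $a_0-1$, whose weight is at most $1/(2\beta)$, would have overflowed the budget---also $\sum_{a\ge a_0}w_a>1/(2\beta)$.

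It remains to show that $S_{\ell,k}=\Union_{a\ge a_0}\Gtilde_{\ell,k,a}$ retains a sixth of the value. The value left out lives on scales $a\le a_0-1$, so $\sum_{a\le a_0-1}V_a\le\sum_{a\le a_0-1}2^a w_a\le\tfrac1{2\beta}\sum_{a\le a_0-1}2^a<2^{a_0-1}/\beta$, using $w_a\le 1/(2\beta)$ and summing the geometric series over the finitely many scales below $a_0$. On the other hand, every job in the band has value more than $2^{a_0-1}$, so $\fval(x,S_{\ell,k})=\sum_{a\ge a_0}V_a\ge 2^{a_0-1}\sum_{a\ge a_0}w_a>2^{a_0-1}\cdot\tfrac1{2\beta}=2^{a_0-2}/\beta$. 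Hence the omitted value is less than twice $\fval(x,S_{\ell,k})$, giving $\fval(x,\Gtilde_{\ell,k})<3\,\fval(x,S_{\ell,k})$, which is condition (ii) with room to spare; condition (i) is immediate from the greedy stopping rule.

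The only real subtlety---and the reason a cruder choice such as ``take the two heaviest value scales'' fails---is that value can be spread thinly across many scales; geometric growth of the values is exactly what rescues us, since it forces the dropped low-scale mass into a convergent geometric series dominated by the topmost dropped scale, which in turn is controlled by the per-scale weight cap $1/(2\beta)$. The bookkeeping for empty value scales in the greedy scan is routine.
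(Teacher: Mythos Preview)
Your argument is correct. Both proofs rest on the same two facts---the per-scale weight cap $w_a\le 1/(2\beta)$ and the geometric growth of values across scales---and both end up selecting a top range of value scales. The paper, however, takes a different route to the threshold: it works with the value-weighted partial sums $S_u=\sum_{a\le u}2^a w_a$, locates the index $m$ where $S_u$ first exceeds $\tfrac13 S_{\lceil\log\vmax\rceil}$, and then splits into two cases (take the single scale $m{+}1$ if the jump there is large, otherwise take all scales above $m{+}1$). Your greedy top-down fill of the weight budget $1/\beta$ is more direct, avoids the case split entirely, and in the nontrivial case yields $\fval(x,S_{\ell,k})>\tfrac13\fval(x,\Gtilde_{\ell,k})$ rather than $\tfrac16$. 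The trade-off is that the paper's argument never needs the lower bound $\sum_{a\ge a_0}w_a>1/(2\beta)$ that your overflow analysis relies on; it bounds kept value against $S_{\lceil\log\vmax\rceil}$ rather than against the dropped value.
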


We defer the proof of Lemma~\ref{lem:a-cell-bound} to
\Cref{sec:deferred}. The remainder of our analysis then hinges on the fact that if we consider the partition into intervals at some length scale $\ell$, namely $\{\Int_{\ell,k}\}_{k\in\Z}$, and consider for every interval in this partition the set of all jobs in this interval at the $\log \beta$ length scales below $\ell$, the total fractional weight of these jobs is at most $1$. We therefore obtain a unit allocation while capturing the fractional value in $\log\beta$ consecutive scales.

\begin{lemma}
    \label{lem:light-bound}
    There exists a fractional unit allocation $\tilde{x}_\light$ such that 
    \[
        \sum_{G\in\light} \fval(x_G) \le
        O\left(\frac{\log L}{\log\beta}\right)\fval(\tilde{x}_\light).
    \]
\end{lemma}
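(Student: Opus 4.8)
The plan is to construct the unit allocation $\tilde{x}_\light$ by grouping consecutive length scales into blocks of $\log\beta$ scales each, and then choosing the best block. First I would partition the set of length scales $\{1,\dots,\lmax\}$ into roughly $\lmax/\log\beta$ consecutive blocks, each of the form $\mathcal{B}_r = \{(r-1)\log\beta+1,\dots,r\log\beta\}$ (up to rounding). Since $\sum_{G\in\light}\fval(x_G) = \sum_{\ell,k}\fval(x,\Gtilde_{\ell,k})$ by the definitions of $\Gtilde_{\ell,k}$ (every light-weight group is counted exactly once, since each job has a unique length scale, a unique assigned interval at that scale, and lies in exactly one of the up-to-three contiguous components), there must exist a block $\mathcal{B}_r$ whose total light-weight value $\sum_{\ell\in\mathcal{B}_r}\sum_k \fval(x,\Gtilde_{\ell,k})$ is at least a $\Theta(\log\beta/\log L)$ fraction of $\sum_{G\in\light}\fval(x_G)$. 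Fix this block.

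Next I would build the candidate allocation from block $\mathcal{B}_r$. Let $\ell^*$ be the largest scale in $\mathcal{B}_r$. The items are partitioned according to the coarsest partition $\{\Int_{\ell^*,k}\}_{k\in\Z}$ in the block; note this is genuinely a partition of all items (up to the dropped offset jobs, which we have already discarded). For each $\Int_{\ell^*,k}$, collect the jobs $\bigcup_{\ell\in\mathcal{B}_r} S_{\ell,k'}$ where $S_{\ell,k'}$ ranges over the sets from Lemma~\ref{lem:a-cell-bound} for those intervals $\Int_{\ell,k'}$ at scales $\ell\in\mathcal{B}_r$ that are nested inside $\Int_{\ell^*,k}$ (this nesting holds because the partitions at different length scales are aligned by the common offset $\offset$ and are powers of two). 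Call this set of jobs $A_k$, and let $\tilde{x}_\light = x_{\bigcup_k A_k}$, possibly scaled down by a small constant if needed. Each $A_k$ has $I_j\subseteq\Int_{\ell^*,k}$ for all its jobs, so the bundles $T_k := \Int_{\ell^*,k}$ and the job sets $A_k$ satisfy the first two conditions of the fractional-unit-allocation definition.

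The crux is the weight bound $\fwt(x,A_k)\le 1$. Here I would use Lemma~\ref{lem:a-cell-bound}(i): at each scale $\ell$, the intervals $\Int_{\ell,k'}$ nested in $\Int_{\ell^*,k}$ are disjoint, and within $\Int_{\ell^*,k}$ there are exactly $2^{\ell^*-\ell}$ of them, each contributing weight at most $1/\beta$ from its $S_{\ell,k'}$. But that naive bound gives total weight $\sum_{\ell\in\mathcal{B}_r} 2^{\ell^*-\ell}/\beta$, which is dominated by the smallest scales and is way too large --- so the right argument must instead observe that jobs at scale $\ell$ nested in a \emph{single} $\Int_{\ell^*,k}$ are themselves spread across the $2^{\ell^*-\ell}$ sub-intervals, and the total weight of \emph{all} light-weight jobs at scale $\ell$ inside $\Int_{\ell^*,k}$ is bounded not by summing $1/\beta$ per sub-interval but by the feasibility of $x$ restricted to those items; more precisely, one should bound $\fwt$ scale-by-scale using the fact that across the whole block there are $\log\beta$ scales and at the coarsest scale the weight is at most $1/\beta$ per $S$, at the next at most $2/\beta$ (two sub-intervals) and so on, \emph{but capped} because each $S_{\ell,k'}\subseteq\Gtilde_{\ell,k'}$ and $\fwt(x,\Gtilde_{\ell,k'})\le 1/2\beta\cdot(\text{number of value scales})$ is not the binding constraint --- rather the binding fact is Lemma~\ref{lem:a-cell-bound}(i) applied per interval, giving total block weight $\sum_{i=0}^{\log\beta-1} 2^i\cdot(1/\beta) < 2^{\log\beta}/\beta = 1$. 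This geometric sum is exactly why the block length was chosen to be $\log\beta$. I expect this weight computation --- correctly matching "one $S_{\ell,k'}$ of weight $1/\beta$ per sub-interval, doubling the count per coarser scale, summing a geometric series to just under $1$" --- to be the main obstacle, as it requires the partitions at consecutive scales to be exactly nested (forced by the common offset and powers of two) and requires Lemma~\ref{lem:a-cell-bound} to have been stated with the threshold $1/\beta$ precisely so that this sum closes. Once the weight bound holds, $\tilde{x}_\light$ is a valid fractional unit allocation, and combining the value loss of $1/6$ per interval from Lemma~\ref{lem:a-cell-bound}(ii) with the $\Theta(\log\beta/\log L)$ fraction retained by the best block yields $\sum_{G\in\light}\fval(x_G)\le O(\log L/\log\beta)\,\fval(\tilde{x}_\light)$, as claimed.
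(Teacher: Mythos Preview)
Your approach is essentially the paper's: group $\log\beta$ consecutive length scales, use the coarsest partition in the block as the item partition, take $A_k$ to be the union of the $S_{\ell',k'}$ over all sub-intervals nested in $\Int_{\ell^*,k}$, and pick the best of the $O(\log L/\log\beta)$ blocks. The value accounting via Lemma~\ref{lem:a-cell-bound}(ii) and averaging over blocks is correct.

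There is, however, a self-inflicted wound in your weight argument. The ``naive bound'' you write down,
\[
\sum_{\ell\in\mathcal{B}_r} \frac{2^{\ell^*-\ell}}{\beta}
\;=\;
\sum_{i=0}^{\log\beta-1} \frac{2^i}{\beta}
\;=\;
\frac{\beta-1}{\beta}
\;<\;1,
\]
is \emph{not} ``way too large'': it is exactly the bound that works, and it is the very same sum you re-derive two sentences later after a detour through feasibility considerations that are not needed. The paper states this count directly: the number of pairs $(\ell',k')$ with $\Int_{\ell',k'}\subseteq\Int_{\ell^*,k}$ and $\ell'\in(\ell^*-\log\beta,\ell^*]$ is $2^{\log\beta}-1<\beta$, and each contributes weight at most $1/\beta$ by Lemma~\ref{lem:a-cell-bound}(i), so $\fwt(x,A_k)\le 1$. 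Strike the middle paragraph of your weight discussion and keep the first computation; the proof is then clean and matches the paper.
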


We defer the proof of Lemma~\ref{lem:light-bound} to \Cref{sec:deferred}.

\subsubsection*{Putting everything together}

Combining Lemmas~\ref{lem:small-v-bound} and \ref{lem:hierarchy-alignment},
Fact~\ref{fact:heavy-light-classification}, and Lemmas~\ref{lem:heavy-bound}
and \ref{lem:light-bound}, we get that the better of the two unit
allocations $\tilde{x}_\light$ and $\tilde{x}_\heavy$ provides an $O(\beta)$
approximation to the total fractional value of $x$, where we used the fact that
$\log L/\log \beta = \beta$.

\begin{theorem}
\label{thm:unit-capacity}
    For every fractional allocation $x$ for the unit-capacity setting, there
    exists a fractional unit allocation $x'$ such that
    \[
        \fval(x)\le  O(\log L/\log\log L)\fval(x').
    \]
\end{theorem}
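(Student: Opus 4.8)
The plan is to assemble Theorem~\ref{thm:unit-capacity} by chaining together the five steps already set up, treating each lemma as a black box. Starting from an arbitrary feasible fractional allocation $x$, Step~1 (Lemma~\ref{lem:small-v-bound}) restricts to the high-value jobs $U_1$, losing at most a factor of $2$ in fractional value. Step~2 (Lemma~\ref{lem:hierarchy-alignment}) further restricts to $U_2$, the jobs that fit neatly into some interval of the random nested partition, again losing at most a factor of $2$ in expectation over the offset $\offset$; since we only need existence of a good partition, we may fix an offset achieving at least the expectation, so $\fval(x_{U_2})\ge\frac14\fval(x)$. Step~3 (Fact~\ref{fact:heavy-light-classification}) splits the surviving value across the heavy-weight groups $\heavy$ and the light-weight groups $\light$, so that $\sum_{G\in\heavy}\fval(x_G)+\sum_{G\in\light}\fval(x_G)\ge\fval(x_{U_2})\ge\frac14\fval(x)$.

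Next I would invoke the two main constructions. By Lemma~\ref{lem:heavy-bound}, there is a fractional unit allocation $\tilde x_\heavy$ with $\sum_{G\in\heavy}\fval(x_G)\le O(\beta)\fval(\tilde x_\heavy)$; by Lemma~\ref{lem:light-bound}, there is a fractional unit allocation $\tilde x_\light$ with $\sum_{G\in\light}\fval(x_G)\le O(\log L/\log\beta)\fval(\tilde x_\light)$. Recalling that $\beta$ is defined by $\beta\log\beta=\log L$, we have $\log L/\log\beta=\beta$, so both bounds are $O(\beta)$. Therefore
\[
\tfrac14\fval(x)\le\sum_{G\in\heavy}\fval(x_G)+\sum_{G\in\light}\fval(x_G)\le O(\beta)\bigl(\fval(\tilde x_\heavy)+\fval(\tilde x_\light)\bigr)\le O(\beta)\cdot 2\max\{\fval(\tilde x_\heavy),\fval(\tilde x_\light)\}.
\]
Taking $x'$ to be whichever of $\tilde x_\heavy,\tilde x_\light$ has the larger fractional value yields $\fval(x)\le O(\beta)\fval(x')$, and since $\beta=O(\log L/\log\log L)$ (as $\beta\log\beta=\log L$ forces $\beta=\Theta(\log L/\log\log L)$), this is exactly the claimed bound. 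One should also note that $x'$ is feasible: both $\tilde x_\heavy$ and $\tilde x_\light$ are obtained from the feasible $x$ by restricting to a subset of jobs and scaling down, which preserves feasibility.

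Since every nontrivial component has been encapsulated in a prior lemma, there is no real obstacle left in this particular proof — it is pure bookkeeping of constant and $\beta$-order factors. The only point requiring a word of care is the derandomization of the offset in Step~2: because the statement asserts existence of $x'$ rather than a randomized guarantee, one fixes $\offset$ to a value for which $\fval(x_{U_2})\ge\half\fval(x_{U_1})$, which exists by averaging. Everything else is a direct substitution using $\log L/\log\beta=\beta$ and $\beta\log\beta=\log L$. (The genuine difficulty of the section lives inside Lemmas~\ref{lem:heavy-bound} and especially \ref{lem:light-bound} and \ref{lem:a-cell-bound}, which are proved separately.)
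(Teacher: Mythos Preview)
Your proposal is correct and follows essentially the same route as the paper's own proof: the paper's ``Putting everything together'' paragraph likewise chains Lemmas~\ref{lem:small-v-bound}, \ref{lem:hierarchy-alignment}, Fact~\ref{fact:heavy-light-classification}, and Lemmas~\ref{lem:heavy-bound}, \ref{lem:light-bound}, then picks the better of $\tilde{x}_\heavy$ and $\tilde{x}_\light$ using $\log L/\log\beta=\beta$. Your explicit remarks on derandomizing the offset $\offset$ and on feasibility of $x'$ are useful clarifications that the paper leaves implicit.
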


    \subsection{Extension to arbitrary capacities}
\label{sec:arbit-cap-ub}

We will now give a reduction from the setting with arbitrary item
multiplicities to the unit-capacity case discussed in the previous section.
Once again we start with an arbitrary fractional allocation $x$ and construct a
unit allocation $x'$ with large fractional value. The main idea behind the
reduction is to first partition the fractional allocation $x$ into ``layers''.
Layer $r$ corresponds to the $r$th copy of each item (if it exists). Each job
is assigned to one layer, with its fractional allocation appropriately scaled
down, in such a manner that the layers together capture much of the fractional
value of $x$. We then apply Theorem~\ref{thm:unit-capacity} to each layer
separately, obtaining unit allocations for each layer. The union of these unit
allocations immediately gives us a unit allocation overall.

\begin{theorem}
    \label{thm:arbit-capacity}
    For every feasible fractional allocation $x$ in the interval
    preferences setting with arbitrary capacities, there exists a
    fractional unit allocation $x'$, such that
    \[
        \fval(x)\le  O(\log L/\log\log L)\fval(x').
    \]
\end{theorem}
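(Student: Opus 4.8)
The plan is to reduce the arbitrary-capacity case to the unit-capacity case of Theorem~\ref{thm:unit-capacity} by decomposing the fractional allocation $x$ across ``layers,'' one per copy of each item, so that each layer is a feasible \emph{unit-capacity} fractional allocation and the layers together retain a constant fraction of $\fval(x)$. Concretely, for each $r\ge 1$ I would form a layer $\layer$ consisting of the $r$th copy of every item $t$ with $\capt\ge r$. Note the items available in layer $\layer$ form a subset of the total order (with gaps), but intervals are still intervals when restricted to a layer, so the interval-preferences structure is preserved within each layer, and within a layer every item has supply exactly $1$.

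First I would describe how to route the fractional allocation $x$ into these layers. Each job $j$ with $x_j>0$ must be assigned (with appropriately scaled-down weight) to layers in which its whole interval $\intj$ survives; the subtlety is that different items of $\intj$ may have different multiplicities, so not every layer contains all of $\intj$. The natural approach is a ``water-filling''/fractional-coloring argument: think of $x_j$ as a load of height $x_j$ stacked on each item of $\intj$; since $x$ is feasible, the total height stacked on item $t$ is at most $\capt$, i.e.\ fits within the available $\capt$ layers at $t$. A standard interval-coloring (or flow-decomposition) argument lets us split each $x_j$ among the layers so that in each layer $\layer$ the induced allocation $\xr$ places total height at most $1$ on each surviving item---hence $\xr$ is feasible for the unit-capacity instance restricted to $\layer$. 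The demand constraint is inherited automatically since we only scale $x_j$ down and split it, and $\sum_r (\xr)_j = x_j \le$ what the demand constraint allowed. (If one is worried about jobs whose interval straddles items of differing multiplicities and thus cannot survive in high layers at all, those items simply do not appear in those layers, which only helps feasibility; the splitting still works because the load on each item individually fits in its own column of layers.) Crucially this preserves total value exactly: $\sum_r \fval(\xr) = \fval(x)$.

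Next I would apply Theorem~\ref{thm:unit-capacity} to each layer $\layer$: there is a fractional unit allocation $\xpr$ with $\fval(\xr) \le O(\log L/\log\log L)\,\fval(\xpr)$, together with a partition of the (multiplicity-one) items of $\layer$ into bundles and a compatible partition of jobs into low-weight groups. I would then take $x' := \sum_r \xpr$. Because the layers use disjoint copies of items, the bundle partitions of the individual layers concatenate into a single partition of the full multiset of items (where item $t$ has multiplicity $\capt$), and likewise the job-group partitions concatenate; each group still has $\fwt$ at most $1$ and each job's interval still lies in its bundle. Hence $x'$ is a fractional unit allocation, and summing the per-layer inequalities gives $\fval(x) = \sum_r \fval(\xr) \le O(\log L/\log\log L)\sum_r \fval(\xpr) = O(\log L/\log\log L)\,\fval(x')$, as desired.

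The main obstacle is the layer-splitting step: making precise that a feasible fractional allocation over items with capacities $\capt$ can be split into per-layer allocations each feasible with unit capacity, while respecting that intervals must survive wholly within a layer and that items have \emph{different} capacities (so layers are ``ragged''). I expect this to follow from a fractional interval-coloring argument---essentially that the interval hypergraph is ``perfect'' enough that an edge-load vector dominated by the capacity vector decomposes into unit-capacity pieces---but the bookkeeping with non-uniform $\capt$ and with the need to keep each $\intj$ intact in whatever layer it lands is where the care is needed. Everything after the split (applying Theorem~\ref{thm:unit-capacity} and gluing the unit allocations) is routine.
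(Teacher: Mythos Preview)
Your high-level plan (decompose into unit-capacity layers, apply Theorem~\ref{thm:unit-capacity} per layer, concatenate) matches the paper's. But the layer-splitting step you hope to get from a ``water-filling / fractional interval-coloring'' argument is not just bookkeeping---it is impossible in general, and this is the crux of the theorem.

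Here is a small counterexample. Take three consecutive items $t_1,t_2,t_3$ with capacities $B_{t_1}=1$, $B_{t_2}=2$, $B_{t_3}=1$, and two jobs $j_1$ with $I_{j_1}=\{t_1,t_2\}$, $j_2$ with $I_{j_2}=\{t_2,t_3\}$, each with $x_{j_1}=x_{j_2}=1$. This $x$ is feasible. Since $j_1$ contains $t_1$ (capacity~$1$) and $j_2$ contains $t_3$ (capacity~$1$), \emph{both} jobs can only be placed in layer~$1$ if you insist on $I_j\subseteq Y_r$. But then layer~$1$ carries weight~$2$ on $t_2$, violating unit capacity. So you cannot split $x$ exactly across layers while keeping each $I_j$ wholly inside its layer and each layer unit-feasible; your claim that $\sum_r\fval(\xr)=\fval(x)$ with each $\xr$ unit-feasible fails. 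Your parenthetical escape hatch (``those items simply do not appear in those layers'') does not help: a fractional unit allocation requires $I_j\subseteq T_k$, so a bundle for $j$ must contain a copy of \emph{every} item in $I_j$; you cannot drop low-capacity items from the bundle.

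The paper resolves exactly this obstruction by giving up on exact splitting. It proves a greedy peeling lemma (Lemma~\ref{lem:greedy-layer}): from any feasible $x$ one can select a set $S$ of jobs whose total weight on every item lies in $[\min\{1,B_t\},4)$. Then $\frac14 x_S$ is a feasible unit-capacity allocation on the full item set (so every $I_j$ automatically fits), and removing $x_S$ drops each $B_t$ by at least~$1$, enabling recursion. This costs only a constant factor ($\sum_r\fval(\xr)\ge\frac14\fval(x)$), which is absorbed into the $O(\cdot)$. The missing idea in your proposal is precisely this peeling lemma; once you have it, the rest of your outline goes through.
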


\begin{proof}
  We begin by defining layers. Recall that $B_t$ denotes the number of
  copies of item $t$ that are available. We assume without loss of
  generality that $B_t = \lceil \sum_{j: t\in \intj} x_j\rceil$ for
  all $t$ (and otherwise redefine $B_t$ to be the latter
  quantity). Let $\Bmax = \max_t B_t$. Then we define the $r$th layer
  for $r\in [\Bmax]$ as $\layer:=\{t\in\items:B_t\geq r\}$. 

  We will now construct a fractional allocation $\xr$ for each layer
  $\layer$ with the properties that {\em (i)} for each $r$, $\xr$ is
  feasible with respect to the set of items in $\layer$, and {\em
    (ii)} $\sum_r \fval(\xr) \ge \frac 14 \fval(x)$.

    We proceed via induction over $\Bmax$. For the base case, suppose that
    $\Bmax\le 4$. Then we define $\xr[1] = \frac 14 x$ and $\xr=0$ for all
    $r>1$. Observe that both the properties {\em (i)} and {\em (ii)} are
    satisfied by this definition. For the inductive step, we pick a set $S$ of
    jobs as given by the following lemma (proved in \Cref{sec:deferred}). 

    \begin{lemma}
        \label{lem:greedy-layer}
        For any feasible fractional allocation $x$ in the interval
        preferences setting with arbitrary capacities, one can
        efficiently construct a set $S$ of jobs such that the total
        fractional weight of $x_S$ at any item $t$ is at least
        $\min\{1,B_t\}$ and at most $4$.  Formally, for all items $t$,
        $\min\{1,B_t\}\le \sum_{j\in S: t\in \intj} x_j < 4$.
    \end{lemma}

    Having constructed such a set $S$, we set $\xr[1] = \frac 14 x_S$, and
    recursively construct allocations for the remaining layers using $x-x_S$.
    Observe that $\frac 14 x_S$ is feasible for $\layer[1]$ by the definition
    of $S$. Furthermore, by removing jobs in $S$ from $x$, we reduce $\Bmax$ by
    at least $1$, and therefore, we can apply the inductive hypothesis to
    construct allocations for the remaining layers. This provides us with {\em
    (i)} and {\em (ii)} as desired above.

    Finally, for each layer $\layer$, we apply Theorem~\ref{thm:unit-capacity}
    to the allocation $\xr$ to obtain a fractional unit allocation $\xpr$ for
    that layer. Then, the allocation $x' = \sum_r \xpr$ is a fractional unit
    allocation with $\fval(x)\le O(\beta) \fval(x)$.
\end{proof}

Combining \Cref{thm:arbit-capacity} with
Lemmas~\ref{lem:fopt-upperbound} and \ref{lem:FGL} immediately implies
Theorem~\ref{thm:unit-cap-ub}, which we state here for completeness.

\begin{numberedtheorem}{\ref{thm:unit-cap-ub}}
For the interval preferences setting, there exists a static, anonymous bundle
pricing with competitive ratio $O\left(\frac{\log L}{\log\log
    L}\right)$ for social welfare.
\end{numberedtheorem}


    \subsection{The large markets setting}
\label{sec:large-cap-ub}

In this section we consider the setting where every item is available in large
supply. Specifically, let $B:= \min_t B_t$. We show that as $B$ increases, the
approximation ratio achieved by bundle pricing gradually decreases. 

\begin{numberedtheorem}{\ref{cor:large-cap-ub}}
For the interval preferences setting, if every item has at least $B>0$
copies available, then a static, anonymous bundle pricing achieves a
competitive ratio of
\[
    O\left(\frac1B\frac{\log L}{\log\log L - \log B}\right)
\] 
when $B<\log L$, and $O(1)$ otherwise.
\end{numberedtheorem}


\begin{proof}
  Let $k=\frac 12\min\{B,\log L\}$ and $\alpha=L^{1/k}$. We will
  partition both the item supply and the jobs into multiple instances,
  such that on the one hand, the fractional solution confined to jobs
  within an instance will be feasible for the item supply in that
  instance; on the other hand, within each instance job lengths will
  differ by a factor of at most $\alpha$. Then, applying
  \Cref{thm:arbit-capacity} will give us a fractional unit allocation
  for every instance with a factor of
  $\Omega\big(\frac{\log\log \alpha}{\log \alpha}\big)$ loss in social
  welfare. Applying \Cref{lem:FGL} to the union of these unit
  allocations then implies the theorem.

  Let $x^*$ be the optimal fractional allocation. Divide all jobs into
  $k$ groups $U_1,\cdots,U_k$ according to length: $U_i$ contains all
  jobs of length between $\alpha^{i-1}$ and $\alpha^i$.  Let $x_i$ denote the
  allocation $x^*$ confined to the set $U_i$ scaled by $1/2$, that is,
  $x_i = \half x^*_{U_i}$.  We now specify the supply for instance $i$. Let
  \[ B_{it} = \left\lceil\half\sum_{j\in U_i:t\in
      I_j}x^*_{j}\right\rceil\]
Note that no item is over-provisioned (although some supply may be wasted) :
    \begin{equation*}
        \sum_{i=1}^{k} B_{it} \leq \sum_{i=1}^{k}
                \left(1+\half\sum_{j\in U_i:t\in I_j}x^*_{j}\right) 
        \leq k+\frac{1}{2}B_t \leq B_t. 
    \end{equation*}
    Note also that $x_i$ is feasible for the supply $\{B_{it}\}$.

    Applying Theorem~\ref{thm:arbit-capacity} and \Cref{lem:FGL} to
    the instances defined above then implies an approximation factor
    of $O\Big(\frac{\log \alpha}{\log\log \alpha}\Big)$, which is $O\Big(\frac
    1B\frac{\log L}{\log\log L-\log B}\Big)$ for $B<\log L$, and
    $O(1)$ otherwise.
\end{proof}



\section{Lower Bounds}
\label{sec:lowerbound}

We now turn to lower bounds. In this section, we ignore incentive
constraints and bound the social welfare that can be achieved by {\em
  any} online algorithm for the online resource allocation problem.
We consider the case of interval preferences in
Section~\ref{sec:lb-int} and path preferences in
Section~\ref{sec:pathlowerbound}.

\subsection{Lower bound for interval preferences}
\label{sec:lb-int}

We focus on the special case where every buyer is single minded. That
is, for all $i$ and $\vali\sim\disti$, there exists an interval $I$
and a scalar $v$ such that $\vali(I')=v$ for all $I'\supseteq I$ and
$0$ otherwise. This setting is also called {\em online interval
  scheduling}. A lower bound for the unit-capacity version of this
problem was previously developed by \citet{im2011secretary}. Although
\citeauthor{im2011secretary}'s focus was on the secretary-problem
version (i.e. with random arrival order and adversarial values rather
than worst-case arrival order and stochastic values), their lower
bound construction uses jobs drawn independently from a fixed
distribution, and it thus provides a lower bound for the Bayesian
adversarial order setting as well. We restate
\citeauthor{im2011secretary}'s result; our proof of
Theorem~\ref{thm:lb} builds upon this result.

    \begin{lemma}\label{ISSPlemma}
      (Restated from Theorem 1.3 in \cite{im2011secretary}) For the
      unit-capacity setting of the online interval scheduling problem,
      every randomized online algorithm has a competitive ratio of
      $\Omega\left(\frac{\log L}{\log\log L}\right)$.
    \end{lemma}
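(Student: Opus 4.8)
Since Lemma~\ref{ISSPlemma} is explicitly a restatement of Theorem~1.3 of \cite{im2011secretary}, the approach I would take is to invoke that theorem, and the only point that genuinely requires work is verifying that the bound transfers from the secretary model (random arrival order, adversarial values) to ours (adversarial arrival order, known value distributions). First I would recall that \citeauthor{im2011secretary}'s hard instance consists of $n$ jobs drawn i.i.d.\ from a single, explicitly described distribution $D$ over (interval, value) pairs, that it is single-minded, and that it has finite support --- so it is already a legal instance in our model, obtained by setting every buyer's value distribution equal to $D$. I would then argue the transfer: an adversary choosing the arrival order may in particular choose the order worst for a given algorithm, and since a uniformly random order is one admissible choice, the worst-case expected welfare is at most the expected welfare under a uniformly random order; that quantity is exactly what \citeauthor{im2011secretary} bound, and since their instance is i.i.d.\ from a fixed distribution, their lower bound already holds against algorithms that know that distribution and hence applies in our Bayesian model. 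This gives the $\Omega(\log L/\log\log L)$ bound verbatim.

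For completeness I would also reproduce the construction at a high level. It is a nested hierarchy of intervals with $k = \Theta(\log L/\log\log L)$ levels and branching factor $\approx k$, so the root interval has length $L \approx k^{k}$ and level $i$ consists of $k^{i-1}$ disjoint intervals of length $\approx k^{k-i+1}$. Jobs are revealed level by level, coarsest to finest, while a random branching/termination process decides how deep the realized instance descends; the value $v_i$ of level-$i$ jobs and the per-node survival probabilities are tuned so that the total value reachable from any node one level down strictly dominates the value of the node itself. On such an instance the offline optimum always collects the full value of the deepest surviving level, whereas any online algorithm is caught in a commit-versus-wait dilemma at every level --- committing to an interval forgoes the strictly larger (and more numerous) value buried among the finer intervals nested inside it, while declining risks the process halting --- and with these parameters it must lose a constant factor at the ``critical'' level of the realized instance, losses that compound across the $\Theta(k)$ levels into the claimed ratio.

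The hard part --- and the reason I would ultimately cite \cite{im2011secretary} rather than rederive it --- is calibrating $\{v_i\}$ and the survival probabilities so that the bound holds against \emph{every} randomized, adaptive online strategy simultaneously. This is established by an averaging argument over the random instance (equivalently, Yao's minimax principle): fix an arbitrary deterministic algorithm, track level by level the expected value it can still collect conditioned on its decisions so far, and show this is an $O(\log\log L/\log L)$ fraction of the expected offline optimum. Everything else --- monotonicity of values, the single-minded reduction of Section~\ref{sec:prelim}, and the passage from random to adversarial arrival order --- is routine.
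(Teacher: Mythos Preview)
Your proposal is correct and matches the paper's treatment exactly: the paper does not prove this lemma but simply restates it, with the single justifying remark (in the text preceding the lemma) that \citeauthor{im2011secretary}'s construction uses jobs drawn independently from a fixed distribution and therefore already lives in the Bayesian adversarial-order model. Your transfer argument --- i.i.d.\ instance plus ``adversarial order is only harder than random order'' --- is precisely that remark spelled out, and your additional sketch of the nested construction is supplementary detail the paper omits.
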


\begin{numberedtheorem}{\ref{thm:lb}}
  For the online interval scheduling problem, every randomized online
  algorithm has approximation ratio
  $\Omega\left(\frac{\log L}{B\log\log L}\right)$, where
  $B=\min_{t}B_t$.
\end{numberedtheorem}



\begin{proof}
  We argue that if there exists an online algorithm with competitive 
  ratio $\alpha=o\left(\frac{\log L}{B\log\log L}\right)$ for the setting with minimum
  capacity $B$, then we can construct an online algorithm for the
  unit-capacity setting with competitive ratio $O(B\alpha) = o\left(\frac{\log
    L}{\log\log L}\right)$, contradicting Lemma~\ref{ISSPlemma} above.

  Our main tool is the following lemma which shows that, for
    any set of intervals, if the intervals arrive in arbitrary order, there
    exists an online algorithm that colors them (i.e. such that no two
    overlapping intervals have the same color) using not many more colors than
    optimal.

    \begin{lemma}\label{coloringlemma}
        (Restated from Theorem 5 in \cite{kierstead1981extremal}) For any set
        of intervals $U$ such that every point belongs to at most $B$ intervals
        in $U$, there exists an online coloring algorithm $\coloralg$ utilizing
        at most $3B-2$ colors. 
    \end{lemma}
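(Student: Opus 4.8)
The plan is to recall the classical online interval coloring algorithm of Kierstead and Trotter, since this lemma is invoked in the analysis above purely as a black box. Let $B$ denote the maximum number of intervals of $U$ passing through any single point (the clique number of the interval graph). I would reserve a palette of $3B-2$ colors split into $B$ \emph{levels} --- a single color for level $1$ and three colors for each level $i\in\{2,\dots,B\}$ --- and process the intervals in their arrival order. When $I$ arrives I would first assign it a level $\mathrm{lev}(I)\in\{1,\dots,B\}$ by a greedy online rule, and then color $I$ by First-Fit within level $\mathrm{lev}(I)$'s private palette: the first color of that palette not currently used by an already-placed interval of the same level that intersects $I$. Because the palettes of different levels are disjoint, the resulting coloring is automatically proper between levels, so it suffices to arrange, through the level rule, that (i) $\mathrm{lev}(I)\le B$ for every $I$, and (ii) within any single level the intervals placed there never cover a point more than three times (more than once, for level $1$), so that First-Fit within a three-color (resp.\ one-color) palette never runs out.

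Property~(i) is the easy half, and it is exactly where the hypothesis ``every point lies in at most $B$ intervals of $U$'' enters: a point of an arriving interval $I$ lies in at most $B$ intervals of $U$ in total, hence in at most $B-1$ already-placed ones, so however the level rule is phrased there is always room for $I$ at level $B$ --- this keeps levels bounded by $B$ and hence keeps the total palette size at $1+3(B-1)=3B-2$.

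Property~(ii) is the crux and the main obstacle. The rule I would use is: $\mathrm{lev}(I)$ is the least level $k$ for which placing $I$ at level $k$ would not create a point lying in too many level-$k$ intervals; the work is then to show, using the geometry of intervals (Helly's property and the linear order on intervals sharing a common point, together with the fact that a color becomes available again as soon as the intervals using it lie entirely to the left of the current one), that a \emph{constant}-size palette per level always suffices. Pinning that constant down to $3$ --- rather than a quantity that grows with $k$, which a careless rule would give and which would yield only an $O(B^2)$ bound --- is the technical content of \citeauthor{kierstead1981extremal}'s argument. Since the reduction proving Theorem~\ref{thm:lb} only needs the linear order of magnitude $O(B)$, in the write-up I would simply cite \cite{kierstead1981extremal}; the sketch above is meant to indicate why a bound linear in $B$, and in fact the exact value $3B-2$, is attainable.
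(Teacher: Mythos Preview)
Your approach matches the paper exactly: the paper does not prove this lemma at all but simply restates it as a black-box citation to \cite{kierstead1981extremal}, using it only inside the reduction that proves Theorem~\ref{thm:lb}. Your explicit recommendation to ``simply cite \cite{kierstead1981extremal}'' is precisely what the paper does, and the accompanying sketch of the Kierstead--Trotter level decomposition is a reasonable (if informal) outline of why the $3B-2$ bound holds, though it goes beyond what the paper itself provides.
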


    Suppose that for some $B>1$ there exists an online algorithm $\alg$ for
    markets of size $B$ with competitive ratio $\alpha=o\big(\frac{\log L}{B\log\log
    L}\big)$. Consider the following algorithm $\alg'$ for the unit-capacity
    setting. Let $c$ be an integer chosen u.a.r. from $[3B-2]$. As each buyer $i$
    arrives, if $\alg$ accepts $i$, then input $I_i$ to $\coloralg$; if
    $\coloralg$ assigns color $c$ to $I_i$, then accept $i$.


    By Lemma~\ref{coloringlemma}, the expected value of intervals with color
    $c$ is a $\frac{1}{3B-2}$-fraction of the social welfare obtained by
    $\alg$. Thus the approximation ratio of $\alg'$ for the unit-capacity
    setting is $(3B-2)\alpha = o\big(\frac{\log L}{\log\log L}\big)$, which contradicts
    Lemma~\ref{ISSPlemma}.
\end{proof}

\subsection{Lower bound for path preferences}
\label{sec:pathlowerbound}
In this section, we show that for the ``online path scheduling''
problem on trees no online algorithm can achieve subpolynomial
approximation with respect to $L$, the length of the longest possible
path in the instance. 

\begin{theorem}
    \label{thm:treelowerbound}
    There exists an instance of the path preferences setting, where
    every buyer desires a path of length between $1$ and $L$, for
    which every randomized online algorithm has competitive ratio
    $\Omega\left(\sqrt{\frac{L}{\log L}}\right)$ for social welfare.
\end{theorem}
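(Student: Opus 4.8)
The plan is to construct an explicit instance---a tree of a carefully chosen shape---on which any randomized online algorithm must lose a $\Omega\big(\sqrt{L/\log L}\big)$ factor, and then apply Yao's principle to a distribution over inputs. The natural candidate shape is a ``spider'' or ``caterpillar''-like tree: take a long central path of length $\Theta(L)$, call it the \emph{spine}, and hang short \emph{side paths} off the vertices of the spine. A buyer desiring the whole spine has value for a long path of length $\sim L$; a buyer desiring a side path together with a portion of the spine desires a path of intermediate length. The point is that a single long (spine) buyer, if served, blocks many short buyers on disjoint side paths, and conversely serving the short buyers precludes the long one---but online, the algorithm must commit to early arrivals before learning whether the valuable conflicting demand will materialize. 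This is the same tension as in the single-item $2$-lower-bound from the introduction, amplified by the combinatorial structure of paths in a tree, which is exactly the source of polynomial integrality gaps alluded to in the open-problems discussion.

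The key steps, in order, would be: (1) Fix the tree. Let $m = \Theta(\sqrt{L/\log L})$ and build a spine of length $m \cdot \ell$ for an appropriate $\ell = \Theta(\sqrt{L \log L})$ (so the spine has length $\Theta(L)$), partitioned into $m$ consecutive \emph{blocks} of length $\ell$; off each block hang a bundle of side-paths so that the union of a side-path with its home block is a path of length between $1$ and $L$. (2) Define the input distribution in $m$ phases arriving in a fixed adversarial order: in phase $k$, reveal buyers whose paths live entirely inside block $k$ (and its side paths), with values drawn so that the expected optimal offline welfare restricted to block $k$ is some fixed quantity $W$, independent of $k$; in addition, with small probability $p$, a single ``global'' buyer arrives at the very end desiring the entire spine with a large value $V = \Theta(m W / p)$, so it contributes $\Theta(mW)$ to the offline optimum. (3) Bound $\opt$ from below: the offline algorithm serves the global buyer when it appears ($\E \approx mW$) and otherwise serves every block's local optimum ($\E \approx mW$), so $\opt = \Omega(mW)$; one should verify feasibility, i.e. that the block-local optima can be realized simultaneously on edge-disjoint side paths plus disjoint spine segments. (4) Bound $\sw$ of any online algorithm from above: here is where Yao's principle enters. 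When an online algorithm processes block $k$'s local buyers, it does not yet know whether the global buyer will arrive; if it commits spine edges in block $k$ to local buyers, it risks destroying the global allocation. A short case analysis---charging against whether the algorithm ``reserves'' a contiguous free spine segment---shows the algorithm can collect at most $O(W + V/m \cdot \text{(something)})$ per block or alternatively $O(V \cdot p)$ total from the global buyer, and balancing these forces $\sw = O(mW / \sqrt{L/\log L})$ against $\opt = \Omega(mW)$.

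Parameter-tuning is the load-bearing part of the argument. The two constraints are: the longest desired path must be at most $L$ (forcing $m\ell \le L$, so $m\ell = \Theta(L)$), and within each block we need enough ``room'' for the local instance itself to exhibit a gap or at least enough independent demand to make the blocking costly---this is where a $\log L$ factor shows up, because encoding $\Omega(\log L)$ bits of ``where is the contention'' inside a block of length $\ell$ needs $\ell$ large enough, which caps $m$ at $\sqrt{L/\log L}$. I would set up the block-local instances either as scaled copies of the single-item hard instance (two buyers, one with value $1$, one with value $1/\eps$ with probability $\eps$) laid out on side paths, or more simply as a direct ``one long block-spanning buyer vs.\ many short ones'' sub-gadget, and then the overall gap multiplies the per-block factor by the number of blocks.

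\textbf{Main obstacle.} The delicate point I expect to fight with is step (4): proving a clean upper bound on \emph{every} online algorithm's welfare, including randomized ones that hedge by partially committing. The difficulty is that on a tree the online algorithm has many ways to ``waste'' only part of a block's spine, leaving awkward fragments, and one must argue that no such partial strategy simultaneously captures a good fraction of the local welfare and preserves the global path. Getting this to a genuine $\Omega(\sqrt{L/\log L})$ rather than, say, $\Omega(L^{1/3})$ will likely require the side paths in each block to be arranged so that capturing even a constant fraction of the local optimum necessarily ``uses up'' the block's spine edges---i.e., making the local instance one where the optimum is essentially forced to route through the spine. Designing the side-path gadget to have that property, while keeping all path lengths in $[1,L]$ and keeping the offline optimum additive across blocks, is the crux.
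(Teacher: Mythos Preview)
Your construction has a structural gap that no amount of parameter tuning will close. A caterpillar or spider---a long spine with short side-paths hanging off it---is combinatorially almost the same as a line: any path in such a tree is an interval of the spine with at most two side-edges attached. Consequently the conflict structure between buyers is essentially one-dimensional, and the best lower bound one can extract from \emph{any} instance of that shape is the interval lower bound $\Omega(\log L/\log\log L)$ from Lemma~\ref{ISSPlemma}, not anything polynomial in $L$. Concretely, your ``one global spine buyer vs.\ many local block buyers'' gadget only produces a constant-factor gap: the online algorithm that ignores the global buyer entirely and plays each block optimally already collects $\Theta(mW)$, which matches your stated lower bound on $\opt$ up to constants. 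Stacking $m$ blocks in series does not multiply gaps; it just adds welfares on both sides.

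What the paper does instead is use a \emph{complete binary tree of height $L$}. For every edge $e$ at level $\ell$ and every leaf below $e$, there is a buyer desiring the path from $e$ down to that leaf; it arrives with probability $2^{-\ell}$ and has value $2^{\ell}$. Buyers arrive bottom-up. The crucial property---absent in any caterpillar---is that accepting a single level-$\ell$ path kills one specific path at \emph{each} of the $L-\ell$ levels above it (every ancestor edge loses one of its root-to-leaf paths). The paper encodes this via a grid of size $L\times 2^L$: each accepted buyer marks a rectangle whose area is $(L-\ell)$ times the number of still-available leaves under its starting edge. Balancing the value of accepted buyers against the marked area (at most $L\cdot 2^L$ total) yields the $O(2^L\sqrt{L\log L})$ upper bound on any online algorithm, against $\opt=\Omega(L\cdot 2^L)$. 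The multi-level blocking---one acceptance destroying $\Theta(L)$ future higher-value options simultaneously---is exactly the phenomenon your one-global-buyer design cannot replicate, and it is what lifts the gap from logarithmic to polynomial.
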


\begin{proof}
  Consider a complete binary tree with height $L$ and capacity $1$ on
  every edge in the tree. We will define an instance with a unique
  buyer for every (node, leaf) pair in the tree where the leaf belongs
  to the subtree rooted at the node. Formally, define the level of
  edges bottom-up: for example, the leaf edges (i.e. edges adjacent to
  a leaf node) have level 0, and the edges adjacent to root have level
  $L-1$. There are $n_\ell=2^{L-\ell}$ edges at level $\ell$, and each
  of these edges has $2^{\ell}$ leaves in its subtree. For every level
  $\ell$, every edge $e$ at level $\ell$, and every leaf edge $e'$ in
  the subtree rooted at $e$, our instance contains a unique buyer that
  arrives independently with probability $\frac{1}{2^\ell}$ and has
  value $v_\ell=2^\ell$. We say that the buyer is at level
  $\ell$. Note that there are exactly $2^L$ distinct buyers at each
  level $\ell$, and therefore $L2^L$ buyers in all.


  Let us now compute the offline optimal welfare $\opt$ for this
  instance. We claim $\opt=\Omega(L2^L)$. Consider a greedy allocation
  that admits buyers in order from the highest to the lowest level:
  every buyer whose path is still available when considered gets
  allocated with probability $1/2$. Note that when a level $\ell$ path
  beginning with edge $e$ is considered, it can be allocated if and
  only if $e$ is not allocated to previous buyers. We now make two
  observations. First, the expected number of buyers at level
  $\ell'>\ell$ whose paths contain $e$ and that arrive is at most
  $\sum_{\ell'>\ell} 2^{\ell}/2^{\ell'} < 1$. So, the probability that
  an edge $e$ is allocated prior to considering buyers at its level is
  at most $1/2$. Second, if prior to considering level $\ell$ buyers,
  edge $e$ is available, then the probability that it gets allocated
  to a level $\ell$ buyer is at least
  $1-(1-\frac{1}{2^{\ell+1}})^{2^\ell}\geq 1-\frac{1}{\sqrt{\textrm{e}}}$. Therefore, the total
  contribution of level $\ell$ buyers to the social welfare is at
  least $\frac{1}{2}\left(1-\frac{1}{\sqrt{\textrm{e}}}\right)n_\ell v_\ell=\Omega(2^L)$. This proves our claim.


  Next we prove via a charging argument that the welfare of any online
  algorithm is bounded by $O(2^L\sqrt{L\log L})$. Assume that 
  buyers arrive in order from lowest level to highest level.
  Construct a grid with height $L$ and width $2^L$ as follows. Each
  edge at level $\ell$ corresponds to a size $1\times 2^\ell$
  rectangle in the grid at the same level; see Figure
  \ref{fig:lb-example}. With each leaf edge, we also associate the
  entire column above its cell in the grid with the edge. So each cell
  in the grid is indexed by a level and a column corresponding to a
  leaf edge. As buyers arrive and the online algorithm makes
  allocation decisions, we mark cells in the grid to indicate
  availability. Initially all cells are unmarked. Whenever the online
  algorithm allocates a path to a buyer at level $\ell$ with first
  edge $e$, we mark all cells in the $1\times 2^\ell$ rectangle
  corresponding to edge $e$, as well as all cells above this
  rectangle. Now suppose that a buyer at level $\ell$ arrives whose
  path begins with edge $e$ and ends at leaf edge $e'$. If the cell
  corresponding to column $e'$ and level $\ell$ in the grid is already
  marked, this means the buyer cannot be legally allocated because
  a buyer starting at some other edge on the path from $e$ to $e'$ was
  previously allocated. Let $m(e)$ be the number of unmarked cells 
  in the rectangle corresponding to $e$ when buyers beginning with $e$
  start to arrive.
  If the online algorithm allocates to a buyer
  with starting edge $e$, the total number of cells that get marked at
  this iteration is exactly $(L-\ell)m(e)$.

    \begin{figure}[htbp]
        \centering
        \subfigure{
        \begin{minipage}[b]{0.8\textwidth}
            \includegraphics[width=0.48\textwidth]{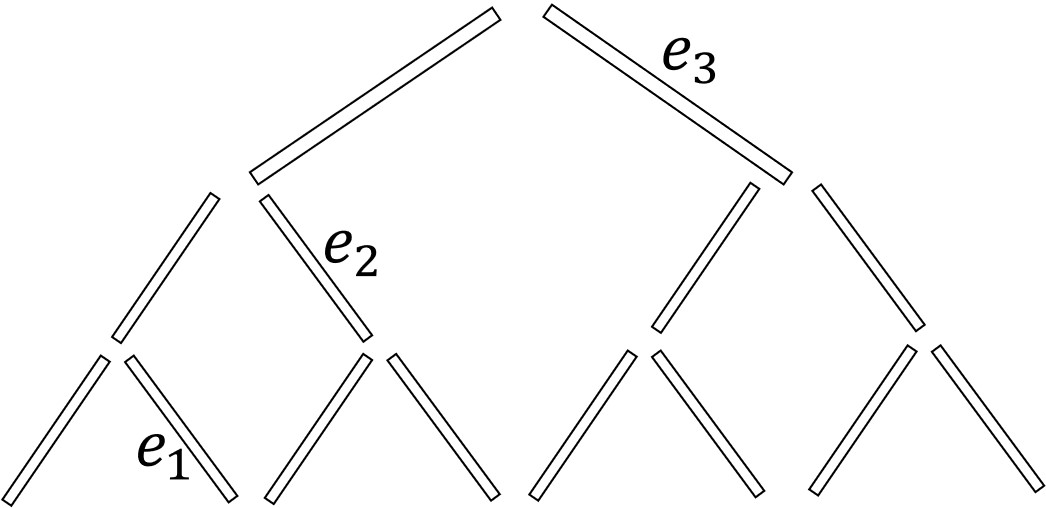}
            \includegraphics[width=0.48\textwidth]{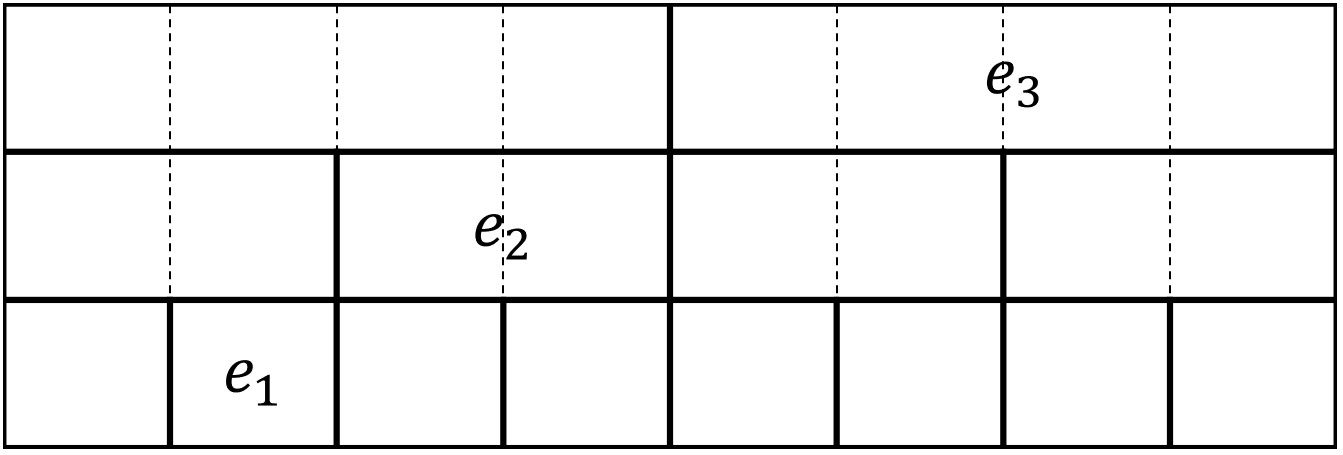}
            \center{(a)}
        \end{minipage}
        }\\
        \subfigure{
        \begin{minipage}[b]{0.8\textwidth}
            \includegraphics[width=0.48\textwidth]{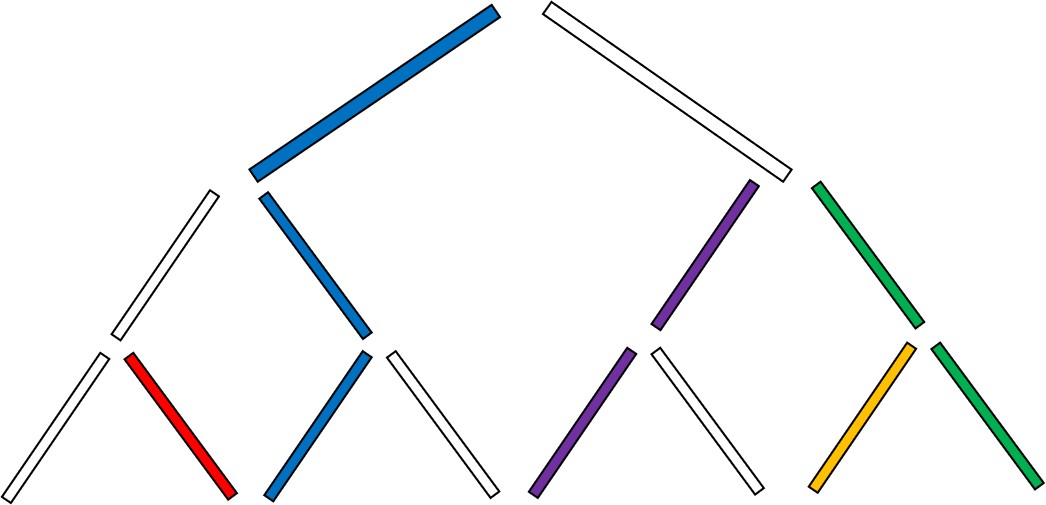}
            \includegraphics[width=0.48\textwidth]{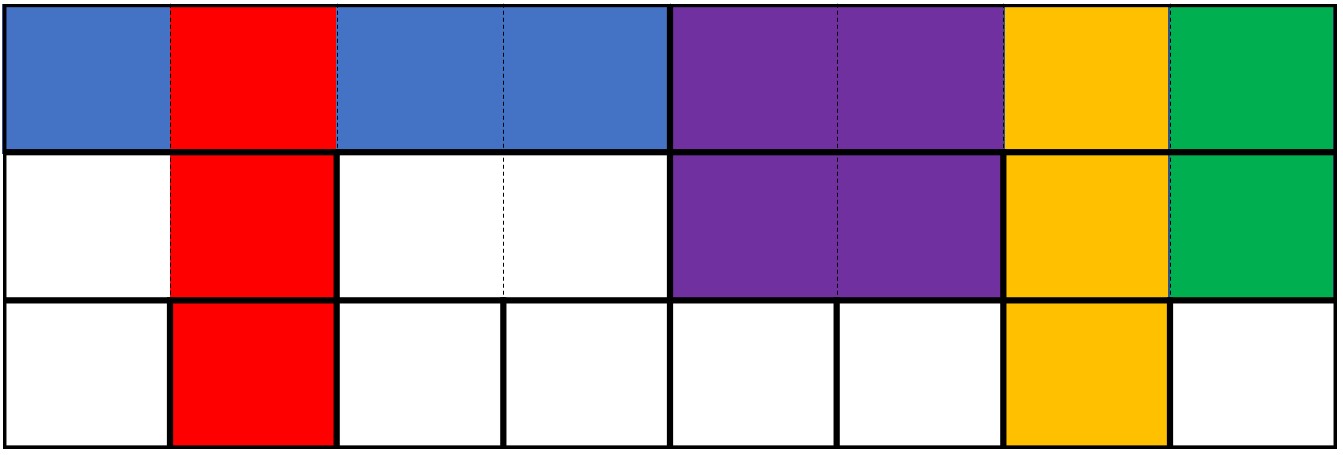}
            \center{(b)}
        \end{minipage}
        }
        \caption{\label{fig:lb-example} \small{(a) An example of a binary tree and its
              corresponding grid; (b) Jobs arrive bottom-up and get allocated. The
              grid shows the cells marked by each job.}}
    \end{figure}

    Say an edge $e$ at level $\ell$ is ``good'' if when the buyers beginning with $e$
    start to arrive, we have $(L-\ell)m(e)\geq 2^\ell\sqrt{L/\log
      L}$. For any buyer beginning with a good edge that is allocated
    by the algorithm, it marks a rectangle in the grid with area
    $(L-\ell)m(e)\geq \sqrt{L/\log L}\cdot v_\ell$, i.e
    $\sqrt{L/\log L}$ times the value of the buyer. Since the total
    area of the grid is $L2^L$, the total value of buyers with
    good edges allocated in the algorithm is at most
    $\frac{1}{\sqrt{L/\log L}}L2^L=2^L\sqrt{L\log L}$.

    Now consider the welfare contributed by buyers at bad edges. At
    any such bad edge $e$ at level $\ell$, at most $m(e)$ buyers
    beginning with $e$ can get allocated; the expected number of such
    buyers that arrive is at most
    $\frac{m(e)}{2^\ell}<\frac{\sqrt{L/\log L}}{L-\ell}$. Thus their
    total welfare contribution is at most
    \begin{equation*}
      \sum_{\ell=0}^{L-1}n_\ell v_\ell\cdot\frac{\sqrt{L/\log L}}{L-\ell}=2^L\sqrt{L/\log L}\sum_{\ell=0}^{L-1}\frac{1}{L-\ell}=O(2^L\sqrt{L\log L}).
    \end{equation*}

    Summing up the total welfare contribution from buyers at good and
    bad edges gives us a bound of $O(2^L\sqrt{L\log L})$ on the total
    welfare obtained by any online algorithm.
\end{proof}

\section{Bundle Pricings for Path Preferences}
\label{sec:trees-ub}

We now turn to the path preferences setting. Recall that here the
items correspond to edges in a fixed tree. Each buyer $i$ desires a
path $P_i\subset T$. Our techniques for this setting are similar to
those in Section~\ref{sec:ub-interval} --- we construct a partition of
the items into layers, construct a corresponding fractional allocation
that respects this layer structure in that every buyer is allocated
items from at most two layers, and no layer has too much fractional
weight. We show in Section~\ref{sec:layered-alloc} that such a good
{\em fractional layered allocation} always exists. In
Section~\ref{sec:trees-unit-cap} we show that a layered allocation
guarantees a good bundle pricing. In
Section~\ref{sec:trees-large-cap-ub} we present the corresponding
``large markets'' result, namely that the competitive ratio decreases
linearly with capacity.

\subsection{Fractional layered allocations}
\label{sec:ub-tree-layered}


As in the case of fractional unit allocations that we defined for the interval
preferences setting, we would like to construct a partition of items into
bundles and a corresponding allocation of jobs such that each job fits cleanly
into one bundle, no bundle contains too much fractional weight, and no item
belongs to more bundles than its multiplicity. Unfortunately, it turns out that
due to the rich combinatorial structure among paths in trees, such an
allocation cannot be constructed without losing too much value in the worst
case.

To overcome this issue, we simplify the combinatorial structure by
orienting the underlying tree with an arbitrary root and breaking each
path into two ``monotone'' components. Formally, for an arbitrary
rooting of the underlying tree, define the depth of an edge $t\in T$,
denoted $d(t)$, to be the number of edges on the shortest path from
the root to this edge, including $t$ itself.\footnote{So, the edges
  incident on the root have depth $1$.} For a job $j$ with
corresponding path $P_j$, define the {\em peak} of $j$'s desired path
to be the edge(s) closest to the root:
$\peakj = \argmin_{t \in \intj} \{d(t)\}$. Let $\larmj$ and $\rarmj$
denote the two subpaths of $\intj$ that begin at one of the peak edges
in $\peakj$ and consist of all subsequent edges in $\intj$ with
increasing depth; we call these the {\em arms} of the job. Let
$P_{\larmj}$ and $P_{\larmj}$ denote the corresponding subpaths. Note
that one of these arms may be empty; in that case, we take $\rarmj$ to
be the empty arm.

\begin{definition} A fractional allocation $x$ is a {\em fractional
    layered allocation} if there exists a partition of the multiset of
  items (where item $t$ has multiplicity $\capt$) into bundles
  $\{T_1, T_2, T_3, \cdots\}$, and a corresponding partition of the
  arms with non-zero weight under $x$, $\cup_{j\in U: x_j>0} \{\larmj,\rarmj\}$,
  into sets $\{A_1, A_2, A_3, \cdots\}$, such that:
    \begin{itemize}
        \item For all $j\in U$ with $x_j>0$, there is exactly one
          index $k$ with $\larmj\subset A_k$ and, if $\rarmj$ is not
          empty, exactly one index $k'$ with $\rarmj\in A_{k'}$. 
          \item For all $k$ and $\armj\in A_k$, $P_{\armj}\subseteq T_k$. 
        \item For all $k$ and $t\in T_k$, we have $\fwt(x,
          \{j: \armj\in A_k \text{ and } P_{\armj}\ni t\}) \le 1$.
   \end{itemize}
\end{definition}

Observe that there are two main differences between the layered
allocations defined above and fractional unit allocations as defined
in Section~\ref{sec:frac-unit-alloc}. First, layered allocations partition arms into layers
rather than entire jobs, so a job can end up in two different
layers. Second, and more importantly, a layer can have very large
total weight. All we guarantee is that the weight of any single edge
in a layer will be bounded by $1$. As such, the pricing we construct
is allowed to allocate multiple subpaths of a single layer to
different buyers in
sequence.



\subsection{Layering a tree}
\label{sec:layered-alloc}

We now show that good fractional layered allocations always exist. For
a fractional allocation $x$, let $\vmax(x) = \max_{j: \xj>0} v_j$ and
$\vmin(x) = \min_{j: \xj>0} v_j$. 

\begin{lemma}
    \label{thm:layered-tree}
    For every feasible fractional allocation $x$, one can efficiently
    construct a feasible layered fractional allocation $\tilde{y}$
    such that
    \[
        \fval(x) \le O\left(\frac{\vmax(x)}{\vmin(x)}\right) \fval(\tilde{y}).
    \]
\end{lemma}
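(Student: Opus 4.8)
The plan is to build the layered allocation one value-scale at a time, exploiting the fact that within a single value scale all jobs have essentially the same value, so a crude geometric rescaling of weights costs only a constant factor. First I would partition the jobs of $x$ with $x_j > 0$ by value scale: let $V_a = \{j : \lceil \log v_j \rceil = a\}$ for $a \in \{1, \dots, \lceil \log \vmax(x) \rceil\}$. Since there are only $O(\log(\vmax(x)/\vmin(x)))$ nonempty value scales, it suffices to construct, for each value scale $a$ separately, a feasible layered allocation $\tilde{y}^{(a)}$ on a disjoint copy of the item supply that captures a constant fraction of $\fval(x_{V_a})$; the union $\tilde y = \sum_a \tilde y^{(a)}$ is then layered (each layer came from one value scale, so the partition-of-items and partition-of-arms conditions still hold) and $\fval(x) \le O(1) \cdot \sum_a \fval(\tilde y^{(a)}) \le O(\log(\vmax(x)/\vmin(x))) \max_a \fval(\tilde y^{(a)})$ — but actually we keep all of them, so $\fval(x) \le O(\log(\vmax(x)/\vmin(x))) \fval(\tilde y)$ as claimed, provided the total supply used is only an $O(1)$ blowup over $\{\capt\}$, which I address below.

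Within a fixed value scale $a$, the arms of jobs are monotone subpaths (each arm, by construction, is a path from a peak edge descending with strictly increasing depth), and the key structural point is that a collection of monotone-descending paths in a rooted tree behaves like intervals on a line in the following sense: any two of them that share an edge are \emph{nested} at that edge's ancestor-chain, so on each root-to-leaf path the restriction is an interval system. Concretely I would process value scale $a$ by a greedy/layering sweep: maintain the fractional weight $w_t = \fwt(x, \{j \in V_a : t \in \intj\})$ at each edge $t$, which is at most $\capt$ by feasibility~\eqref{eq:supply}; then repeatedly peel off a layer consisting of a maximal set of arms whose supports are pairwise edge-disjoint (or, more carefully, whose per-edge weight sums to at most $1$), assign those arms to a fresh bundle $T_k$ containing exactly the edges they cover, and subtract. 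Because each arm is a monotone path, a Dilworth/greedy-coloring argument (color arms so that arms sharing an edge get different colors, in the spirit of Lemma~\ref{coloringlemma} but here applied to the chain structure of the tree) shows that $O(1)$ layers per unit of weight suffice when we are willing to lose a constant factor in value: specifically, after scaling $x$ down by a constant we can ensure every layer has per-edge weight $\le 1$ while the number of layers touching edge $t$ is $O(\lceil \capt \rceil)$, so the total multiset of items used across all layers and all value scales is at most $O(1)$ copies of the original supply — and then we must redistribute: since we need the \emph{union} over value scales to respect $\capt$, I would instead, before the value-scale split, first invoke a layering of $x$ into $O(1)$ sub-allocations that partition the supply (exactly as in the proof of Theorem~\ref{thm:arbit-capacity}, via Lemma~\ref{lem:greedy-layer} adapted to paths), and only then do the value-scale argument inside each, so no double-counting of supply occurs.

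The main obstacle I anticipate is precisely the interaction between the two arms of a single job and the per-edge weight constraint in the definition of a fractional layered allocation: a job contributes its (scaled) weight to \emph{both} the layer holding $\larmj$ and the layer holding $\rarmj$, and the third bullet of the definition only bounds weight edge-by-edge within a layer, not globally, so I must be careful that splitting a job across two layers does not secretly allow its total allocated fraction to exceed the demand bound~\eqref{eq:demand} in a way that breaks feasibility of $\tilde y$ as a fractional allocation — here the single-mindedness of buyers in the path setting is what saves us, since each instantiated value has exactly one job, so the demand constraint is just $x_j \le q_j$ and is preserved under arbitrary per-arm bookkeeping. The second delicate point is the claim that a bundle $T_k$ built from a set of pairwise-compatible monotone arms can be chosen so that \emph{every} arm assigned to it is \emph{contained} in $T_k$ (second bullet) while $T_k$ remains disjoint from other bundles (partition condition) — this needs the observation that the union of a compatible family of monotone-descending paths is itself a disjoint union of monotone paths, which follows from the tree structure, and I would state this as a small standalone combinatorial claim before running the sweep.
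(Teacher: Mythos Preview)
Your proposal has a genuine gap, and it also differs substantially from the paper's route.

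First, note that the lemma asks for a factor of $O(\vmax(x)/\vmin(x))$, not $O(\log(\vmax(x)/\vmin(x)))$. The paper exploits this: it does \emph{not} partition by value scale at all. Instead it applies the peeling lemma (Lemma~\ref{lem:peeling}) directly to $x$, recursively stripping off one layer at a time. That lemma hands back a set of arms $\tlayer$ with per-edge weight at most a constant and a ``drop'' set $D$ of at most half the weight of $\tlayer$, such that removing $\tlayer\cup D$ reduces the weight on every edge of $E_\tlayer$ by at least $1$. The recursion therefore terminates after consuming exactly the available supply, and a constant fraction of the total \emph{weight} survives in the union of the layers. The factor $\vmax/\vmin$ then enters only at the very end, via the trivial bound $\vmin\cdot\fwt\le\fval\le\vmax\cdot\fwt$.

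Your plan front-loads a value-scale split and then tries to layer each class separately. This creates exactly the supply problem you flag: summing $\lceil w_{ta}\rceil$ over value scales $a$ can overshoot $\capt$ by an additive $O(\log(\vmax/\vmin))$, which is fatal when $\capt$ is small. Your proposed fix---``first invoke a layering of $x$ \ldots via Lemma~\ref{lem:greedy-layer} adapted to paths''---is circular: the tree analogue of Lemma~\ref{lem:greedy-layer} \emph{is} precisely the peeling lemma (Lemma~\ref{lem:peeling}), and once you have it you are done without any value-scale split. It also does not produce ``$O(1)$ sub-allocations''; it produces up to $\Bmax$ layers.

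Separately, the core combinatorial step you sketch (``peel off a maximal set of arms whose per-edge weight sums to at most $1$, and subtract'') is where the real difficulty lives, and your Dilworth/coloring gesture does not address it. Monotone descending paths are not an interval system: two arms can share an edge near the root and then diverge into different subtrees, so the greedy interval-coloring argument of Lemma~\ref{coloringlemma} does not carry over. More importantly, a layer chosen only to have per-edge weight $\le 1$ need not reduce the residual weight on every covered edge by at least $1$, so you cannot guarantee the recursion consumes only $\capt$ copies of item $t$. The paper's Lemma~\ref{lem:peeling} handles this by explicitly identifying a small-weight set $D$ of arms to discard so that $\tlayer\cup D$ does carry at least one unit of weight on every edge of $E_\tlayer$; the sibling-arm set $D'$ is what makes this compatible with the two-arms-per-job structure. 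That mechanism is the heart of the proof, and your outline does not supply a substitute for it.
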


Before we proceed we need more notation. For a subset of jobs (or
arms) $J$, let $E_J$ denote the set of edges collectively desired by
those jobs: $E_J = \cup_{j\in J} P_j$. For an edge $t$, let
$\fwe(x) := \fwt(x,\{j:P_j\ni t\})$ denote the total fractional weight
of jobs (or arms) whose paths contain $t$. Likewise, let $\fwe(x_A)$
or $\fwe(x,A)$ denote the fractional weight on $t$ from jobs (or
arms) in set $A$.

We construct the layering recursively. Informally, given a fractional
allocation $x$, we find a set of jobs $\tlayer$ such that the fractional demand
on any item desired by a job in $\tlayer$ is a constant. That is, for
all $t\in E_\tlayer$, $\fwe(x,\tlayer)\le c$. 
We scale the
fractional allocation of jobs in $\tlayer$ by $1/c$, and take this scaled
allocation along with the set $E_\tlayer$ to be the first layer. We thus obtain a feasible
layer, and lose a factor of at most $c$ in fractional value.

However, in order to apply this step recursively and arrive at a
feasible layered allocation, we need the remaining fractional solution
to be feasible for the items remaining after reducing the capacity of
every item in $E_\tlayer$ by one. Therefore, we need the fractional
weight on every edge in $E_\tlayer$ to decrease by at least 1 (or else
to zero) after removing $\tlayer$.  It may not be possible to find a
set $\tlayer$ which satisfies both this and the constant-demand
condition described above. Instead, we find a set $D$ such that jobs
in $\tlayer$ and $D$ together account for one unit of demand on every
edge in $E_\tlayer$, and then we drop $D$.  Because we drop jobs in
$D$ rather than assigning them to a layer, we do not need to ensure
that the fractional weight on edges in $E_D$ decreases by any
particular amount. We pick a set $D$ with small fractional weight
relative to that of $\tlayer$, so that we can charge its lost
fractional value to $\tlayer$.

Finally, although we have informally described the argument in terms
of {\em jobs}, in our formal argument we consider sets of {\em
  arms}. This introduces an extra complication: when we drop the set
$D$ of arms, we must also drop their sibling arms. We show that we can
do this without losing more than another constant factor in fractional
value.

We formalize the recursive step in the following lemma; the proof
is deferred to Section~\ref{sec:deferred}.
\begin{lemma}
    \label{lem:peeling}
    Given a set $U$ of jobs and a non-zero fractional allocation $y$,
    there exist sets of arms $\tlayer \ne \emptyset$ and $D$ such that 
    \begin{enumerate}[label=\roman*.]
        \item $\tlayer \intersect D = \emptyset$;
        \item $\tlayer \intersect D' = \emptyset$, where $D' = \{\armj : j^{a'}
            \in D, a\ne a'\}$;
        \item for all $t \in E_\tlayer$, $\fwe(y, \tlayer \union D)\ge
          \min\{1, \fwe(y)\}$; \label{item:lowerbound}
        \item for all $t \in E_\tlayer$, $\fwe(y, \tlayer) \le 7$; and
            \label{item:upperbound}
        \item $\fwt(y_\tlayer) \ge 2 \fwt(y_D)$. \label{item:wtbound}
    \end{enumerate}
    Furthermore, $\tlayer$ and $D$ can be found efficiently.
\end{lemma}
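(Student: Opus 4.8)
The plan is a greedy ``peel from the top'' of the rooted tree that exploits the monotone structure of arms: an arm is a chain of edges of strictly increasing depth, so if an arm passes through an edge $t$ then its peak edge is an ancestor of $t$ and the part of the arm above $t$ is the initial segment of the root-to-$t$ path. I would repeatedly work on the \emph{shallowest} positively-weighted edge that is not yet ``covered'', where an edge $t$ is covered once $\fwe(y,\tlayer\union D)\ge\min\{1,\fwe(y)\}$, maintaining $\tlayer$, $D$, and the running coverage on every edge and only ever growing $E_\tlayer$ downward. The structural gain from this ordering is that once every edge of depth $<d$ has been handled, no arm peaking at depth $<d$ is ever added to $\tlayer\union D$ again; hence the coverage that ``flows down'' through an edge from strictly shallower peaks is frozen at a value $<1$ the moment we descend past it.

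Concretely, when the shallowest uncovered positively-weighted edge is $\sigma$, I consider the still-available arms peaking exactly at $\sigma$. If together with the coverage already on $\sigma$ these can reach the target $\min\{1,\fwe[\sigma](y)\}$, I add such arms to $\tlayer$ one at a time until the target is first met; since $x_j\le 1$ for every arm by the demand constraint~\eqref{eq:demand}, the overshoot is $<1$, so the $\tlayer$-weight anchored at $\sigma$ is at most $2$. If the arms peaking at $\sigma$ are not plentiful enough and $\sigma$ has not already been pulled into $E_\tlayer$ by some $\tlayer$-arm from above, I simply leave $\sigma$ uncovered: this is harmless because the recursion in \Cref{thm:layered-tree} reduces capacities only on edges of $E_\tlayer$, and on every other edge the residual of the feasible allocation $y$ stays within the original capacity. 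If instead $\sigma$ was already forced into $E_\tlayer$ by an arm of $\tlayer$ peaking above it, I must cover it: I put all available $\sigma$-peaking arms into $\tlayer$ and then add arms through $\sigma$ peaking strictly above $\sigma$ into the discard set $D$ until the target is met, preferring arms that reach as deep as possible so that one discarded arm also covers many downstream deficits. Committing an arm to $\tlayer$ brings all its edges into $E_\tlayer$, so iterating on the new shallowest uncovered edge terminates with property~(\ref{item:lowerbound}) holding on all of $E_\tlayer$; properties (i)--(ii) are maintained by a ``never split a job'' rule (if one arm of a job is already in $\tlayer$, its sibling is never placed in $D$; at the end $D$ is closed under taking siblings, which is consistent since those siblings were kept out of $\tlayer$, and sibling weights are equal so $\fwt(y_D)$ only inflates by a constant factor).

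For the edge-weight bound~(\ref{item:upperbound}) I would fix $t\in E_\tlayer$ and split the $\tlayer$-arms through $t$ into three classes: those peaking at $t$ (weight $\le 2$ by the stopping rule); those peaking at the deepest processed proper ancestor $\sigma^*$ of $t$ (weight $\le 2$ by the stopping rule at $\sigma^*$, and by monotonicity they all pass through $t$, only losing weight); and those peaking strictly above $\sigma^*$ --- by the ``frozen coverage'' observation, the total weight of all $\tlayer\union D$-arms through $\sigma^*$ that peak strictly above $\sigma^*$ is $<1$ (this is precisely the precondition under which $\sigma^*$ got activated, and nothing shallower is added thereafter), and these arms pass through $t$ with no greater weight. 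Summing gives a bound of the form $2+2+1\le 7$. For property~(\ref{item:wtbound}), each discarded arm's weight is charged against the $\tlayer$-weight committed at the activation that forced its edge into $E_\tlayer$, which (with the selective activation rule ensuring that edge was ``well anchored'' whenever an arm of $\tlayer$ peaks there) yields $\fwt(y_\tlayer)\ge 2\,\fwt(y_D)$, a slightly stronger ratio being carried through the sweep to absorb the sibling-closure of $D$. The procedure is a single pass over depths with bookkeeping of coverages, hence efficient.

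I expect the main obstacle to be making properties~(\ref{item:upperbound}) and~(\ref{item:wtbound}) hold \emph{simultaneously}: the cheap way to satisfy the coverage requirement~(\ref{item:lowerbound}) is to pour arms into $\tlayer$, which threatens the weight bound, while using $D$ freely threatens the $\fwt$ balance; reconciling the two is what forces both the selective ``only activate an edge that is well anchored'' rule and the amortized charging above, and getting the constants --- together with the arm-versus-job bookkeeping around siblings --- exactly right is the delicate part.
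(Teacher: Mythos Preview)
Your approach is genuinely different from the paper's, and the gap is in property~(v).

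\textbf{What the paper does.} The paper's proof is not a sweep at all; it is a one-shot local construction. If every edge has $\fwe(y)\le 6$, take $\tlayer=U$, $D=\emptyset$. Otherwise, pick the \emph{deepest} edge $t$ with $\fwe(y)>6$ (so everything strictly below $t$ already has load $\le 6$). The arms through $t$ that terminate at $t$ are totally ordered by containment once sorted by peak depth; the arms through $t$ that continue below $t$ are grouped by which child $t_i$ of $t$ they enter. In either case one takes a prefix (by decreasing peak depth) of a set of arms whose total weight lies in $[3,9)$, splits it into $\tlayer$ (the low-depth prefix, weight $\ge 2$) and $D$ (the high-depth suffix, weight $<1$), and throws in the entire subtree arms $T_1,\dots,T_m$ into $\tlayer$. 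The nesting gives~(iii) immediately, the $\le 6$ bound below $t$ gives~(iv), and the prefix/suffix split gives~(v) \emph{by construction}: you just choose the cut point $k$ so that $\sum_{i\le k} y_{j_i}\ge 2\sum_{i>k} y_{j_i}$. No charging is needed.

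\textbf{Where your plan breaks.} Your charging for~(v) --- ``charge each discarded arm to the $\tlayer$-weight committed at the activation that forced its edge into $E_\tlayer$'' --- does not work, because a single light $\tlayer$-arm can drag a long tail of edges into $E_\tlayer$, each of which may then require a heavy $D$-arm. Concretely, on a path $e_0,\dots,e_n$ with all arms peaking at $e_0$, take $a$ covering $e_0,\dots,e_n$ with weight $\epsilon$ and $c_i$ covering $e_0,\dots,e_i$ with weight $1-\epsilon$ for each $i$. If at $e_0$ you add $a$ then $c_1$ to $\tlayer$ (which your plan does not forbid), you get $E_\tlayer=\{e_0,\dots,e_n\}$ but only $\fwt(y_\tlayer)=1$, and covering $e_2$ forces a $D$-arm of weight $1-\epsilon$, so $\fwt(y_\tlayer)<2\fwt(y_D)$. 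You never specify an order for adding arms to $\tlayer$ (you only specify ``deepest first'' for $D$), and that order is exactly what controls~(v). Even with a deepest-first rule for $\tlayer$, you would still owe a proof that~(v) holds on trees with branching, where ``deepest'' is ambiguous across subtrees and the sibling constraint for~(ii) further restricts which arms are available. The paper sidesteps all of this by never iterating: it picks one hotspot, exploits the total order on arms through that hotspot, and reads off the $2{:}1$ ratio directly from where it cuts the sorted list.
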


\begin{proofof}{\Cref{thm:layered-tree}}
  Let $E$ denote the {\em multiset} of edges, with the given
  multiplicities, in the given tree. Given a set of arms $U$ and a
  feasible fractional allocation $x$, we apply \Cref{lem:peeling} to
  generate the sets of arms $\tlayer$ and $D$.  Let $D'$ be the set of
  ``sibling'' arms of arms in $D$. Recall that $E_\tlayer$ is the set
  of edges contained in the paths desired by $j\in \tlayer$. Let
  $E'=E\setminus E_\tlayer$, and $x' = x-x_\tlayer-x_D-x_{D'}$.

  First, we observe that $x'$ is feasible for the multiset $E'$ of
  edges. This is because on the one hand, the multiplicity of each
  edge $t$ in $E_\tlayer$ decreases by 1. On the other hand,
  $\fwe(x)-\fwe(x')\ge \fwe(x,\tlayer\cup D)\ge \min\{1, \fwe(x)\}$ by
  property {\it \ref{item:lowerbound}} in \Cref{lem:peeling}. So,
  either the fractional weight on $t$ decreases by $1$, or it is equal
  to $0$.

  We set $T_1 = E_\tlayer$, $\tilde{A}_1 = \tlayer$, $D_1=D$, and
  $D'_1=D'$, and construct the remaining partition of edges and arms
  by applying \Cref{lem:peeling} recursively to allocation
  $x'$. Observe that our rescursive applications may discard in the
  form of the sets $D'_k$ some arms that have previously been included
  in the sets $\tilde{A}_{k'}$ for $k'<k$. Define
  $A_k = \tilde{A}_k\setminus (\cup_{k'} D'_{k'})$. We then define
  $\tilde{y}_j = x_j/7$ for jobs $j$ with arms in $\cup_k A_k$, and
  $0$ otherwise. It is now easy to see that $\tilde{y}$ is a
  fractional layered allocation, where the third property follows from
  property {\it \ref{item:upperbound}} in \Cref{lem:peeling}.

  It remains to prove that the fractional value of $\tilde{y}$ is
  large enough. Property {\it \ref{item:wtbound}} in
  \Cref{lem:peeling} tells us that
  $\fwt(x, \tilde{A}_k)\ge \fwt(x, D_k\cup D'_k)$. Summing over $k$
  and removing the contribution of $D'_k$ from each side, we get
  \[\fwt(x,\cup_k A_k) \ge \fwt(x,\cup_k D_k) \ge \half\fwt(x,\cup_k
  (D_k\cup D'_k)),\] 
  which implies $\fwt(x,\cup_k A_k)\ge \frac 13 \fwt(x)$, or
  $\fwt(\tilde{y})\ge \frac 1{21} \fwt(x)$.

  Finally, since the values of all jobs with positive weight under $x$
  lie in the range $[\vmin(x),\vmax(x)]$, we get that
  $\fval(\tilde{y})\ge \Omega\left(\frac{\vmin(x)}{\vmax(x)}\right)\fval(x)$.
\end{proofof}

	\subsection{From layered allocations to bundle pricings}
\label{sec:trees-unit-cap}

In this section we will demonstrate the existence of a good bundle
pricing given any fractional layered allocation:

\begin{lemma}
\label{lem:single-value-class}
    Given a feasible fractional layered allocation $y$, there exists a bundle
    pricing $\prices$ such that
    \[
        \fval(y) \le O\left(\frac{\vmax(y)}{\vmin(y)}\right) \sw(\prices).
    \]
\end{lemma}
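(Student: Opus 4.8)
The plan is to reduce the layered-allocation setting to a collection of essentially independent single-item-like (more precisely, prophet-inequality-style) subproblems, one per layer, and then to invoke the standard balanced-prices argument of \citet{FGL15} inside each layer, losing only the claimed $O(\vmax(y)/\vmin(y))$ factor from bucketing values into geometric scales. The main point of departure from \Cref{lem:FGL} is that a layer here need not behave like a unit-demand bundle: a single layer $T_k$ can have large total fractional weight, and the mechanism is allowed to sell many disjoint subpaths of $T_k$ to different buyers. So within a layer we cannot price the layer as a single indivisible bundle; instead we must set a \emph{per-edge-ish} price structure (as the paper's footnote hints: ``a layer-specific item pricing, with bundle totals subject to a layer-specific reserve price'').

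\textbf{Step 1: Bucket by value scale.} First I would partition the positive-weight jobs of $y$ into value classes $V_a = \{j : x_j>0,\ \lceil\log v_j\rceil = a\}$, for $a$ ranging over the $O(\log(\vmax(y)/\vmin(y)))$ relevant scales, and let $y^{(a)}$ be $y$ restricted to $V_a$. Since the $V_a$ partition the jobs, $\sum_a \fval(y^{(a)}) = \fval(y)$, so it suffices to build, for the best single value class $a^\star$, a pricing recovering a constant fraction of $\fval(y^{(a^\star)})$; summing/maxing gives the $O(\log(\vmax/\vmin))$ loss. Within a fixed class all job values lie in $[2^{a-1}, 2^a]$, so up to a factor of $2$ we may treat them as equal to a common value $v \approx 2^a$. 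I should also note the layered-allocation structure of $y$ restricts cleanly to $y^{(a)}$ (dropping arms of dropped jobs), so each $y^{(a)}$ is still a feasible layered allocation.

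\textbf{Step 2: Price each layer as a reserve-plus-threshold mechanism within a fixed value class.} Now fix the value class with value $\approx v$ and a layer $T_k$ with its assigned arm-set $A_k$; by the layered-allocation definition, the fractional weight through any single edge $t \in T_k$ from arms in $A_k$ is at most $1$. I would set, for each edge $t\in T_k$, a threshold/price $p_t$ in the spirit of \citet{FGL15}'s balanced prices — roughly, $p_t \propto v \cdot \fwe(y, A_k)$ — and set the price of a subpath (= arm) $P$ inside $T_k$ to be $\max\{\sum_{t\in P} p_t,\ \text{(layer reserve)}\}$ with the reserve also $\Theta(v)$ so that a buyer is never sold a path for much less than $\Theta(v)$ even if the per-edge prices are individually tiny. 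A buyer whose path $P_j$ has both arms landing (via the layered structure) in layers $T_k, T_{k'}$ is charged the sum of the two subpath prices. The standard two-part accounting then applies: either a buyer buys (and the mechanism collects revenue $\Theta(v)$ on a job of value $\Theta(v)$), or the buyer's optimal arms are priced out, in which case some edge on one of those arms is ``congested'' — its copy in the appropriate layer was sold — and by the balancedness/$\fwe\le 1$ bound the revenue already collected on that layer's edge pays for the displaced fractional value $v \cdot x_j$ up to a constant. Composability across layers (which the paper emphasizes requires single-mindedness — satisfied here) lets me add up the guarantees: running all layers simultaneously loses at most another factor $2$ relative to the sum of per-layer welfares.

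\textbf{The main obstacle} I anticipate is Step 2 — making the per-edge prices genuinely balanced when a single layer carries large aggregate weight spread over many disjoint subpaths, and when a single job spans two layers. Concretely: the charging argument needs that whenever an optimal arm $P_{j^a}\subseteq T_k$ is blocked, the \emph{revenue collected so far on edges of $T_k$} covers a constant fraction of $v\,x_j$, and this must hold uniformly even though different buyers are blocked by different edges and the layer can be ``reused'' for disjoint paths. The key structural fact that makes this go through is exactly the third bullet of the fractional-layered-allocation definition ($\fwe(x,\{j:\armj\in A_k,\ P_{\armj}\ni t\})\le 1$ for every edge $t\in T_k$), which caps the fractional value competing for any one edge-copy by $O(v)$ — this is the analogue, at the edge level, of the ``$\fwt(x_{A_k})\le 1$'' condition that made the unit-demand argument in \Cref{lem:FGL} work. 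I'd expect the cleanest writeup to isolate a one-value-class, per-layer prophet-inequality lemma (mirroring \Cref{lem:FGL}) and then layer the value-bucketing on top, rather than doing everything at once.
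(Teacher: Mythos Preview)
Your high-level plan has a genuine gap in Step~2. The per-edge price you propose, $p_t \propto v\cdot\fwe(y,A_k)$, is essentially an item pricing, and it fails for the usual reason item pricings fail with complements: the total price of an arm $P_{\armj}$ becomes $\sum_{t\in P_{\armj}} p_t \propto v\sum_{t\in P_{\armj}}\fwe(y,A_k)$, and since each term can be as large as~$1$, this sum scales with the \emph{length} of the arm, not with $v$. Even in aggregate, $\sum_j y_j\,p(P_j)$ is controlled only by $v\sum_j y_j|P_j|$, not by $\fval(y)$, so the utility side of the balanced-prices accounting collapses. Your two-case analysis (``either a buyer buys, or some edge is sold'') also omits the third case where the path is available but simply overpriced.

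The paper's missing idea is to put \emph{all} the price on the peak edges: set $\hat p_e=\tfrac14\sum_{j:\,e\in\peakj} y_jv_j$ and $p_B=\max\{1,\sum_{e\in B}\hat p_e\}$ for monotone paths $B$. Because each job has at most two peak edges, $\sum_e\hat p_e\le\tfrac12\fval(y)$; combined with the per-edge weight bound $\fwe\le1$ this makes the aggregate price term $\sum_j y_j\,p(P_j)$ small regardless of arm length. The cost of concentrating prices at peaks is that a job may be blocked at a \emph{non}-peak edge that carries no price; the paper handles this with a separate charging argument (charge such a job to the buyer who bought the blocking edge, for whom that edge is necessarily a peak, and bound the total charge using $\fwe\le1$ on the parent edge). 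It is this charging step, not any value-bucketing, that produces the $O(\vmax/\vmin)$ factor in the lemma. Your Step~1 is not wrong---indeed, combined with the peak-edge pricing it would yield the (stronger) $O(\log(\vmax/\vmin))$ factor---but it does not substitute for the peak-edge insight in Step~2.
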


To develop intuition for our approach consider the special case where
every edge in the tree has capacity $1$ (that is, the layered
allocation has a single layer), and every job has equal value. We
claim that essentially the most contentious edges in a job's path are
its peak edges -- those closest to the root. In particular, if a job
gets ``blocked'' by another admitted job on one or more of its peak
edges, then we should try to recover its lost value in the form of
revenue from the blocking edge. What if a job gets blocked on a
non-peak edge? We can then charge the lost value of this job to the
value of the blocking job---it is easy to observe that any such
blocking job is charged at most once in this manner. In effect, every
job has up to two important edges, namely its peak edges, that
determine whether and to what extent the job will contribute to the
solution's revenue or its utility. Upon focusing on these two edges,
then, the instance behaves like one where every desired bundle has
size $2$ and we can construct a bundle pricing that achieves a
competitive ratio of $O(1)$. In order to convert this argument into a
proof we need to deal with several complications: jobs have different
values; different arms of a job may be allocated at different layers
in the fractional layered allocation; etc.

\begin{proof}
We start by introducing some notation. Given the fractional layered
allocation $y$, let $\{T_1, T_2, \cdots\}$ be the partition of
items/edges into layers, and $\{A_1, A_2, \cdots\}$ be the
corresponding partition of arms. Henceforth we will think of the
different copies of an edge as distinct items, and when we refer to
some edge/item $e$ it will be understood that this edge corresponds to
some particular set $T_k$. For a $j\in U$ with $y_j>0$, and for the
index $k$ such that $\armj\in A_k$, we will redefine $P_{\armj}$ (and
likewise, $P_j$) as the subset of edges of $T_k$ that correspond to
this arm.\footnote{That is, $P_{\armj}$ corresponds to the specific
  copies of edges in $T_k$ and not any copies that form the path
  corresponding to this arm.} Accordingly, we also redefine the peak
edges, $\peakj$, of a job $j$ with $y_j>0$ as the copies of its peak
edges that belong to $T_k$ corresponding to $\armj\in A_k$. Note that
we may assume without loss of generality (via rescaling, e.g.) that
$\vmin(y)=1$.

We are now ready to define our pricing:
\begin{itemize}
\item For any layer $k$, edge $e\in T_k$, and job $j$ such that
  $\armj\in A_k$, define the contribution of $j$ to $e$'s price as
  $\contrib_{ej} = \frac 14 y_j v_j \ind[e \in \peakj]$. For all other
  pairs $(j, e)$, set $\contrib_{ej} = 0$.
\item For any $k$ and edge $e\in T_k$, define the price of $e$
  as $\phat_e = \sum_j \contrib_{ej}$.
\item For any $k$ and a bundle $B\subset T_k$ that corresponds to a
  ``monotone'' path,\footnote{By a monotone path we mean one that has
    a single peak edge.} define the price of the bundle as
  $p_B = \max(1, \sum_{e\in B} \phat_e)$. All other bundles are priced at
  infinity.
\end{itemize}

\noindent
Our mechanism offers the static, anonymous bundle prices $\prices$ as
defined above. Now consider any instantiation of job arrivals. We
consider three events related to job $j$:
\begin{itemize}
\item $\mathcal E^1_j$ occurs if, at the time of $j$'s arrival, all edges in
  $\peakj$ are unsold;
\item $\mathcal E^2_j$ occurs if, after all jobs have arrived, at least one of
  the edges in $\peakj$ is sold;
\item $G_j$ occurs when either of the above holds: $G_j = {\mathcal
    E}^1_j \union {\mathcal E}^2_j$.
\end{itemize}

We now make the following two claims. The first bounds the
contribution to the welfare of the pricing from ``good'' jobs for
which event $G_j$ occurs. The second bounds the loss from the ``bad''
jobs. Here $\util(\prices)$ denotes the total buyer utility generated
by the pricing $\prices$ and $\rev(\prices)$ denotes the total revenue
generated.

\begin{claim}
\label{claim:good-bound}
\begin{equation}
  \label{eq:good-bound}
     \util(\prices) + 4\;\rev(\prices) + \sw(\prices) \ge \sum_jy_j\valj\prob{G_j} - \half\fval(y). 
\end{equation}
\end{claim}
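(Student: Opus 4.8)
The plan is to account for the total value $\sum_j y_j v_j \prob{G_j}$ by a charging argument that assigns each job's fractional value either to buyer utility, to revenue on its peak edges, or to the welfare of another job that blocks it on a non-peak edge. Fix an instantiation of arrivals. For a job $j$ with $G_j$ occurring, split into the two constituent events. If $\mathcal E^1_j$ holds — all of $j$'s peak edges are unsold when $j$ arrives — then at that moment $j$ could have afforded to buy some monotone sub-bundle of its path (or its whole path, if available); since the buyer selects a utility-maximizing set, the buyer's realized utility is at least what $j$ would have obtained, which is $v_j$ minus the price of that bundle. The key is that the bundle price is $\max(1, \sum_{e\in B}\phat_e)$ and that $\phat_e \le \frac14 \sum_{j': e\in\peakj[j']} y_{j'} v_{j'} \le \frac14 \fwe(\cdot)\cdot(\text{value scale})$; because $y$ is a layered allocation, $\fwe \le 1$ on every edge within a layer, so the summed price over $j$'s arm is controlled. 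One then argues the total price $j$ faces is at most a constant fraction of $v_j$ plus a term one charges globally to $\frac12\fval(y)$, so $j$'s value is (up to these deductions) recovered as utility.

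If instead $\mathcal E^2_j$ holds — some peak edge of $j$ is sold by the end — then that peak edge contributes $\phat_e \ge \contrib_{ej'} = \frac14 y_{j'} v_{j'}$ worth of revenue for whichever jobs $j'$ bought a bundle containing $e$ with $e\in\peakj[j']$; more carefully, each sold edge $e$ that is a peak edge of some job contributes at least its price $\phat_e$ to revenue, and we want to charge $j$'s value $y_j v_j$ against this. Here the factor $4$ in front of $\rev(\prices)$ and the $\frac14$ in $\contrib_{ej}$ are what make the bookkeeping close: summing $\contrib_{ej} = \frac14 y_j v_j$ over all jobs $j$ having $e$ as a peak edge gives $\phat_e$, so $4\,\phat_e \ge \sum_{j:\, e\in\peakj} y_j v_j$, i.e. the revenue on a sold peak edge pays for all the fractional value of jobs peaking there. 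The two cases are not disjoint, but $G_j = \mathcal E^1_j \cup \mathcal E^2_j$, so it suffices to handle each job via whichever event holds; jobs in $\mathcal E^2_j\setminus\mathcal E^1_j$ go to the revenue term, jobs in $\mathcal E^1_j$ go to the utility term, and the $\sw(\prices)$ term on the left absorbs the value of jobs that actually win (which could otherwise be double-counted or left unaccounted when a job is both a "good" job and a buyer).

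I would carry this out as follows: (1) set up the per-edge revenue lower bound $\rev(\prices) \ge \sum_{e\text{ sold}} \phat_e$ and the identity $\sum_{j: e\in\peakj} \contrib_{ej} = \phat_e$; (2) for $j$ with $\mathcal E^2_j$, charge $y_j v_j$ to $4\phat_e$ on a sold peak edge, noting each such edge is charged by total fractional weight at most $1$ within its layer so no over-charging; (3) for $j$ with $\mathcal E^1_j$ (and not yet charged), use the buyer's utility-maximizing choice: the buyer's utility is at least $v_j - p_{B}$ for the best monotone sub-bundle $B$ of $j$'s arm(s) still available, bound $p_B \le 1 + \sum_{e\in P_j}\phat_e$, and bound the summed $\phat_e$ term over $j$'s path by $\frac14 \fval(\cdot)$-type quantities; summing over all such $j$, weighted by $y_j$, the aggregate of the "$+1$" and the $\sum\phat_e$ slack is at most $\frac12\fval(y)$; (4) collect the winning jobs' contribution into $\sw(\prices)$. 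The main obstacle I expect is step (3): making precise, across all jobs $j$ simultaneously with their $y_j$ weights, that the accumulated reserve-price slack (the "$\max(1,\cdot)$") and the per-edge price sums telescope into exactly $\frac12\fval(y)$ and not something larger — this is where the layered-allocation weight bound $\fwe \le 1$ per layer and the $\frac14$ scaling must be used tightly, and where one has to be careful that arms of a single job may lie in different layers so the price a buyer faces is a sum over (at most) two layers' worth of edge prices.
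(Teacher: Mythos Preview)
Your overall plan matches the paper's: split $\ind[G_j]$ via $\ind[\mathcal E^1_j]+\sum_{e\in\peakj}\ind[\mathcal E^2_e]$, handle $\mathcal E^2_j$ by revenue on sold peak edges (using $4\phat_e=\sum_{j:e\in\peakj}y_jv_j$ exactly---no layered weight bound needed there, contrary to your step~(2)), and handle $\mathcal E^1_j$ by the buyer's utility. The gap is in step~(3): you cannot absorb the reserve-price ``$+1$'' into the $\tfrac12\fval(y)$ term. The edge-price part telescopes cleanly---swapping sums and applying the per-layer weight bound $\sum_{j:e\in P_j} y_j\le 1$ gives $\sum_j y_j\sum_{e\in P_j}\phat_e\le\sum_e\phat_e\le\tfrac12\fval(y)$---but the residual $\sum_j y_j\cdot 1$ can be as large as $\fval(y)$ itself (since values are only normalized to $v_j\ge 1$), so lumping it in there would make the bound vacuous.

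The paper fixes this by routing exactly that slack through $\sw(\prices)$: it writes $\max\{1,p(j)\}=p(j)+(1-p(j))^+$, bounds the $p(j)$ part by $\tfrac12\fval(y)$ as above, and then argues that whenever $\mathcal E^1_j$ holds the single-minded buyer actually purchases a bundle containing its path, contributing value $\ge 1$ to welfare; hence $\sum_j y_j\prob{\mathcal E^1_j}(1-p(j))^+\le\sw(\prices)$. So the $\sw(\prices)$ term on the left is \emph{not} there to avoid double-counting winners (your step~(4)); it is precisely the sink for the reserve-price floor. Once you reassign the ``$+1$'' to $\sw(\prices)$ rather than to $\tfrac12\fval(y)$, your argument goes through and coincides with the paper's.
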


\begin{claim}
\label{claim:bad-bound}
\begin{equation}
  \label{eq:bad-bound}
  2\vmax(y)\sw(\prices) \ge \sum_j y_j\valj \prob{\overbar{G_j}}. 
\end{equation}
\end{claim}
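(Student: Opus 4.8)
The plan is to bound the left-hand side by relating it to the optimal fractional layered allocation $y$ through the definition of the pricing, and to show that whenever the ``bad'' event $\overbar{G_j}$ occurs for a job $j$ with $y_j > 0$, that job's value is essentially captured (up to the scaling factors in the construction) by bundles that are actually sold by the mechanism. First I would unpack $\overbar{G_j}$: it is the event $\overbar{\mathcal E^1_j} \cap \overbar{\mathcal E^2_j}$, so at $j$'s arrival at least one peak edge in $\peakj$ was already sold, \emph{and} after all jobs arrive no peak edge in $\peakj$ is sold. These two conditions together force the peak edge(s) that were sold at $j$'s arrival to have been sold as part of a bundle that was subsequently \emph{bought back / superseded}---but since our mechanism never retracts a sale, the resolution is that the edge was sold to \emph{some} buyer whose bundle, by the time all arrivals complete, must still be sold. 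Here I would be careful: ``unsold at $j$'s arrival'' in $\mathcal E^1_j$ and ``sold at the end'' in $\mathcal E^2_j$ are about different points in time, so $\overbar{G_j}$ really says the peak edge was sold early and the mechanism's bookkeeping of peak edges does not see it at the end---the right reading (matching Claim~\ref{claim:good-bound}'s accounting) is that a bundle containing a peak edge of $j$ was sold to a buyer \emph{other} than through $j$'s own peak, and we charge $y_j v_j$ to the revenue or welfare collected on that bundle.

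The key technical step is the charging argument. For each job $j$ with $\overbar{G_j}$, pick one peak edge $e \in \peakj$ (in the layer $k$ with $\armj \in A_k$) that is sold to some bundle $B$ in the final allocation. The price $p_B \ge \phat_e = \sum_{j'} \contrib_{e j'} = \tfrac14 \sum_{j': e \in \peak_{j'}} y_{j'} v_{j'}$. I would then argue that the total over all bad $j$ of $y_j v_j$ charged to edge $e$ is at most $4 \phat_e \le 4 p_B \le 4\,\vmax(y)\cdot(\text{value of }B)/\vmax(y)$—more precisely, since $p_B \le \sum_{e' \in B}\phat_{e'}$ bounds the total price of the sold bundle, and the welfare $\sw(\prices)$ dominates the revenue collected on sold bundles when buyers have value at least the price, I can sum $\sum_{\text{bad }j} y_j v_j \le 4 \sum_{e \text{ sold}} \phat_e \le 4\,\rev(\prices)$, and then absorb the $\vmax(y)$ factor by noting $\rev(\prices) \le \sw(\prices)$ and that each sold edge's contribution is a sum of terms $\tfrac14 y_{j'} v_{j'}$ with $\sum_{j'} y_{j'} \le $ (bounded fractional weight on that edge, which is $\le 1$ by the layered property), so each $\phat_e \le \tfrac14 \vmax(y)$ and the number of distinct sold edges charged is controlled by the welfare. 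Combining the pieces yields $\sum_{\text{bad }j} y_j v_j \le 2\,\vmax(y)\,\sw(\prices)$.

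The main obstacle I anticipate is making the charging \emph{injective enough}: a single sold bundle $B$ may be the ``blocker'' for many bad jobs $j$ simultaneously, and I must ensure the total value $\sum_j y_j v_j$ of all such jobs is genuinely bounded by a constant times $\vmax(y)$ times the value $B$ contributes to $\sw(\prices)$, rather than growing with the number of jobs. The resolution must exploit that (a) each bad job's claimed peak edge $e$ lies in a specific layer $k$ with the layered-weight bound $\fwt(y, \{j': \armj[j'] \in A_k, P_{\armj[j']} \ni e\}) \le 1$, so the total $y$-weight of jobs peaking at a fixed copy of $e$ is at most $1$ and their total value is at most $\vmax(y)$; and (b) the price $p_B \ge 1$ floor guarantees the mechanism collects at least $1$ on every sold bundle, so $\sw(\prices) \ge \rev(\prices) \ge (\text{number of sold bundles})$, letting me pay the $\vmax(y)$ factor per distinct charged edge. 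Carefully bookkeeping which edges are charged and invoking the layered-weight bound at each is where the real work lies; the factor $2$ (rather than $4$) in the claim presumably comes from the $\tfrac14$ scaling in $\contrib_{ej}$ cancelling against a factor that would otherwise be $8$.
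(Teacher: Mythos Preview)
Your reading of $\overbar{G_j}$ is where the argument goes off the rails. You correctly observe that, as literally stated, $\overbar{\mathcal E^1_j}$ (``some peak edge sold at $j$'s arrival'') and $\overbar{\mathcal E^2_j}$ (``no peak edge sold at the end'') are incompatible, since sales are never retracted. The resolution is not a subtle temporal distinction: the definition of $\mathcal E^1_j$ contains a typo and should read ``all edges in $P_j$ are unsold at $j$'s arrival'' (this is the reading required in the proof of Claim~\ref{claim:good-bound} as well, where one needs that $j$ can actually purchase its whole path under $\mathcal E^1_j$). With that correction, $\overbar{G_j}$ means: $j$ is blocked at arrival on some edge of $P_j$, \emph{and} no peak edge of $j$ is ever sold. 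Together these force the blocking edge to be a \emph{non-peak} edge of $j$.

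This makes your proposed charging scheme unworkable. You want to charge $y_j v_j$ to a peak edge $e\in\peakj$ and then recover it from the revenue on the bundle containing $e$. But under $\overbar{G_j}$ the peak edges of $j$ are never sold, so there is no such bundle and the charge lands nowhere. The subsequent inequality $\sum_{\text{bad }j} y_j v_j \le 4\sum_{e\text{ sold}} \phat_e$ does not follow.

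The paper's argument is quite different. For a bad $j$, take the \emph{highest} blocking edge $e$ on $P_j$; this $e$ is a non-peak edge of $j$, and its parent $e'$ lies in $P_j$ and is unsold at $j$'s arrival. Since the purchased bundle containing $e$ is a monotone path and does not contain $e'$, edge $e$ is the peak of that bundle, hence a peak edge of the buyer $j'$ who bought it. Charge $y_j v_j$ to $j'$. All bad jobs charging $j'$ through a fixed peak edge $e$ have an arm in the same layer as $e$ that contains the parent edge $e'$; the layered-weight bound gives $\sum y_j \le 1$, so the total charge through $e$ is at most $\vmax(y)$. As $j'$ has at most two peak edges, the charge to $j'$ is at most $2\vmax(y) \le 2\vmax(y)\,v_{j'}$ (using $v_{j'}\ge\vmin(y)=1$). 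Summing over allocated buyers $j'$ yields the claim. The key idea you are missing is to charge to the \emph{blocker} rather than to $j$'s own peak, and to use the layered-weight bound on the parent edge to make the charging bounded.
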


\noindent
Given the two claims, adding Equations~\eqref{eq:good-bound} and
\eqref{eq:bad-bound}, and using the fact that
$\sw(\prices) = \util(\prices) + \rev(\prices) $, we have
    \begin{align*}
        (2\vmax(y) + 5)\;\sw(\prices) &\ge \sum_j y_j\valj - \half\fval(y)
        = \half\fval(y)
    \end{align*}
The theorem follows. It remains to prove the claims.

\begin{proofof}{Claim~\ref{claim:good-bound}}
  For a job $j$ with $y_j>0$, let $p(j) = \sum_{e\in P_j} \phat_e$
  denote the price of the ``intended'' bundle for $j$. We first write
  the utility of the pricing $\prices$:
  \begin{align}
    \util(\prices) &\ge \sum_j y_j \prob{\mathcal E^1_j}
                     \big(\valj - \max\{1, p(j)\}\big) \nonumber \\
                   &= \sum_j y_j \prob{\mathcal E^1_j} \big(\valj - p(j)\big) -
                     \sum_j y_j \prob{\mathcal E^1_j}\big(1 - p(j)\big)^+\nonumber\\
                  &= \sum_j \valj y_j \prob{\mathcal E^1_j} - \sum_j
                    \big(y_j \prob{\mathcal E^1_j} \sum_{e\in P_j} \phat_e\big) -
                     \sum_j y_j \prob{\mathcal E^1_j}\big(1 - p(j)\big)^+
                     \label{eq:pricing-util-lb}
  \end{align}
  where the inequality follows because each job is a profit-maximizer
  and, when not blocked, will pay the sum of the prices (or 1) if this
  is less than its value.

  We bound the three terms in \eqref{eq:pricing-util-lb} separately,
  beginning with the third. Note that job $j$ always purchases some
  bundle containing its path when it arrives, if event $\mathcal E^1_j$ occurs. Therefore,
  since $(1 - p(j))^+ \le 1$ and the single-minded buyer's value for
  any bundle containing his path is least 1,
    \begin{equation} 
       \label{eq:price-floor-bound}
        \sum_j y_j\prob{A_j}\big(1 - p(j)\big)^+ \le \sw(\prices).
    \end{equation}

\noindent
    For the second term, we have:
    \begin{align}
      \sum_j \left(y_j \prob{\mathcal E^1_j} \sum_{e\in P_j} \phat_e\right) 
       \le \sum_e \left(\phat_e \sum_{j:e\in P_j} y_j \right) & \le \sum_e \phat_e \nonumber\\
      & = \sum_{e,j} \contrib_{e,j} = \sum_j \sum_{e\in\peakj} \frac
        14 y_jv_j 
      \le \frac 12 \fval(y)
       \label{eq:sec-term-bound}
    \end{align}
    \noindent
    where the second inequality follows from the definition of
    fractional layered allocations.

    We now bound the first term in \eqref{eq:pricing-util-lb}. For
    edge $e$, let $\mathcal E^2_e$ denote that, after all jobs have
    arrived, edge $e$ is sold. Recall that
    $G_j = \mathcal E^1_j \cup \mathcal E^2_j = \mathcal E^1_j \cup
    (\cup_{e\in\peakj} \mathcal E^2_e) $. Therefore,
    \begin{align}
      v_jy_j\ind[G_j] & \le v_jy_j\ind[\mathcal E^1_j] +
      \sum_{e\in\peakj} v_jy_j\ind[\mathcal E^2_e]\nonumber\\
      & = v_jy_j\ind[\mathcal E^1_j] +
      \sum_{e} 4\contrib_{ej}\ind[\mathcal E^2_e]\nonumber\\
\intertext{Summing over all $j$,}
      \sum_j v_jy_j\ind[G_j] & \le \sum_j  v_jy_j\ind[\mathcal E^1_j]
        + 4\sum_e \ind[\mathcal E^2_e]\sum_j \contrib_{ej}\nonumber\\
      & = \sum_j  v_jy_j\ind[\mathcal E^1_j] + 4 \sum_e 
      \ind[\mathcal E^2_e]\phat_e\nonumber\\
      & \le \sum_j  v_jy_j\ind[\mathcal E^1_j] + 4 \rev(\prices)
\label{eq:first-term-bound}
    \end{align}
    The claim now follows by putting together
    Equations~\eqref{eq:pricing-util-lb},
    \eqref{eq:price-floor-bound}, \eqref{eq:sec-term-bound}, and
    \eqref{eq:first-term-bound}.
\end{proofof}

\begin{proofof}{Claim~\ref{claim:bad-bound}}
  Fix a particular instantiation of jobs and consider a job $j$ for
  which $G_j$ does not occur. Then $j$ is blocked at the time of its
  arrival on some non-peak edge in $P_j$. Let $e$ be the edge closest
  to the root among all such blocking edges. Then, $e$ is a peak edge
  for the job $j'$ that is allocated this edge. We charge the quantity
  $v_jq_j$ to $j'$.

  Now observe that all of the jobs $j$ that are charged to some job
  $j'$ along edge $e\in T_k$ contain the parent edge of $e$, call it $e'$, in their
  arm $\armj\in A_k$. Then we can use the fact that
  $\fwt(y, \{j: \armj\in A_k \text{ and } P_{\armj}\ni e'\}) \le 1$ from the
  definition of fractional layered allocations to assert that the
  total weight of such jobs is at most $1$. In other words, the total
  charge on $j'$ is at most 
  \[\sum_{j: j\text{ charges }j'} v_j q_j \le \vmax(y) \sum_{j: j\text{ charges }j'} q_j \le 2\vmax(y) \le 2\vmax(y) v_{j'}\]
  where the factor of $2$ arises from the fact that $j'$ could be
  charged along both of its peak edges, and the last inequality follows
  by recalling that $v_{j'}\ge \vmin(y)=1$ for any job $j'$ allocated
  by the pricing.

  The claim now follows by summing the above inequality over jobs $j'$
  allocated by the pricing, and taking expectations of both sides over
  possible instantiations.
\end{proofof}
This completes the proof of the lemma.
  \end{proof}

Armed with this lemma, we can now complete the proof of
Theorem~\ref{thm:trees-ub}. Start with the optimal fractional
allocation $y$, and partition all jobs into $\log H$ value classes,
where each class contains jobs with values within a factor of $2$ of
each other. One of these classes, call it $C$, contributes more than a
$\log H$ fraction to the fractional value of $y$. Then, applying
Lemma~\ref{lem:single-value-class} to the allocation $y_C$ gives
us the following theorem.

\begin{numberedtheorem}{\ref{thm:trees-ub}}
For the path preferences setting, there exists a static, anonymous bundle
pricing with competitive ratio $O\left(\log H\right)$ for social welfare.
\end{numberedtheorem}

	\subsection{The large capacity setting}
\label{sec:trees-large-cap-ub}

In this section we consider the setting where every edge is available in large
supply. Recall that we define $B:= \min_t B_t$. 
We show that as $B$ increases, the
approximation ratio achieved by bundle pricing gradually decreases. 

\begin{numberedtheorem}{\ref{cor:tree-large-cap-ub}}
For the path preferences setting, if every edge has at least $B>0$
copies available, then a static, anonymous bundle pricing achieves a
competitive ratio of
\(
    O\left(\frac1B\log H\right)
\)
for social welfare.
\end{numberedtheorem}


\begin{proof}
  Let $k=\frac 12 B$ and $\alpha=H^{1/k}$.  The proof technique is
  basically identical to that of \Cref{cor:large-cap-ub}, where we
  partition both the item supply and the jobs into $k$ instances, such
  that on the one hand, the fractional solution confined to jobs
  within an instance will be feasible for the item supply in that
  instance; on the other hand, within each instance job values will
  differ by a factor of at most $\alpha$.  Then, applying
  \Cref{thm:trees-ub} will give us a static, anonymous pricing for
  every instance individually with a factor of
  $\Omega\big(\frac{1}{\log \alpha}\big)$ loss in social welfare. Now
  consider running these instances in parallel, with each buyer
  allowed to purchase bundles from any of the instances. The
  utility-revenue analysis in \Cref{lem:single-value-class} continues
  to apply, providing the same guarantees on welfare. However, we may
  double count some part of the welfare: suppose a buyer $j$ is
  ``assigned'' by the layered allocation to one instance, but
  purchases a bundle in another instance; then we may account for both
  his contribution to utility for the first instance as well as his
  contribution to revenue for the second instance. This double
  counting only costs us a factor of $2$ in the overall welfare, and
  we obtain an $O(\log\alpha)=O\left(\frac{1}{B}\log H\right)$
  approximation.
\end{proof}
\section{Convex Production Costs with Interval Preferences}
\label{sec:costs}

In this section we show that our results for interval preferences, in
particular Theorems~\ref{thm:unit-cap-ub} and \ref{cor:large-cap-ub},
generalize to the setting where the seller can obtain additional
copies of each item at increasing marginal costs. In particular, we
assume the seller incurs cost $\cti$ when allocating\footnote{Note
  that the seller can acquire items online; supply need not be
  provisioned before demand is realized.} the $i$th copy of item $t$,
and $\cti \ge \cti[ti']$ for $i > i'$.  Note that this generalizes the
fixed-capacity setting by taking $\cti = 0$ for $i \le B_t$ and
$\infty$ otherwise.  We begin by revisiting our definition of
fractional allocations and associated notation, which we defined for
fixed-capacity settings in \Cref{sec:prelim}.


In this setting, a fractional allocation $x$ no longer need satisfy an explicit
supply constraint, but the fractional value achieved by $x$ is different, since
the costs of supply must be taken into account.  We begin by introducing
notation for the cost of a given fractional allocation. For any fractional
allocation $x$, denote by $b_t(x)=\sum_{j:\intj \ni t}x_j$ the (fractional)
number of copies of item $t$ allocated in $x$. Let $b_{tr}(x)$ be the fraction
of copy $r$ of item $t$ sold:
\[
    b_{tr}(x) = \begin{cases}
        1 & r\leq b_t(x) \\
        b_t(x)-\lfloor b_t(x)\rfloor & r = \lfloor b_t(x)\rfloor+1 \\
        0 & r>\lfloor b_t(x)\rfloor+1.
    \end{cases}
\]
For simplicity we use $b_t$ to denote $b_t(x)$ and $b_{tr}$ to denote
$b_{tr}(x)$ in later analysis. Observe that the total cost for all copies of
item $t$ in $x$ is $\sum_{r}b_{tr}c_{tr}$.  Then the fractional value achieved
by $x$ is
\[
    \fval(x,\costs) = \sum_j v_jx_j - \sum_{t,r}b_{tr}c_{tr}.
\]
As in the fixed-capacity setting, we bound the optimal welfare by the optimal
fractional welfare; define
\[
    \fopt := \max_{x} \fval(x,\costs),
\]
where $x$ ranges over $x \ge 0$ satisfying the demand constraint
(Equation~\ref{eq:demand}).\footnote{Although not stated here in this
  manner, $\fopt$ can be expressed as the optimum of a linear program.}  We defer our proof of
Lemma~\ref{lem:fopt-upperbound-costs} to \Cref{sec:deferred}.
\begin{lemma}
    \label{lem:fopt-upperbound-costs}
    $\fopt_{\costs} \ge \opt_{\costs}$.
\end{lemma}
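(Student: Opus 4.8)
The plan is to exhibit, for the offline optimal (randomized, integral) allocation, a single deterministic fractional allocation $x$ satisfying the demand constraint \eqref{eq:demand} whose cost-adjusted value $\fval(x,\costs)$ is at least the expected net welfare $\opt_{\costs}$ of the offline optimum; since $\fopt_{\costs} = \max_x \fval(x,\costs)$ over such $x$, this is exactly the claim. The allocation $x$ will simply be the vector of marginal allocation probabilities of the offline optimum.

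\textbf{Construction of $x$.} Let the offline optimum allocate to each buyer $i$ with realized valuation $\vali$ a bundle $S_i$; we may assume $S_i$ is a minimal value-achieving bundle, i.e.\ $S_i = \intj$ for some job $j\in\Jvi$ (or the buyer receives nothing), because shrinking an allocated bundle to a minimal value-equivalent subset preserves the buyer's value and can only decrease production cost. For each job $j$, set $x_j = \prob{\text{the offline optimum allocates job } j}$, the probability taken over the realization of all buyers' valuations. Since at most one job of $\Jvi$ is allocated and only when buyer $i$ realizes $\vali$, we get $\sum_{j\in\Jvi} x_j \le \prob{\text{buyer } i \text{ realizes } \vali} = \qvi$, so $x$ satisfies \eqref{eq:demand}, and obviously $x\ge 0$.

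\textbf{Value and cost accounting.} By linearity of expectation, $\fval(x) = \sum_j v_j x_j = \expect{\sum_i \vali(S_i)}$ is exactly the expected gross welfare of the offline optimum. For a fixed item $t$, let $m_t$ denote the (random) number of copies of $t$ used by the offline optimum; then $b_t(x) = \sum_{j:\intj\ni t} x_j = \expect{m_t}$. Define $\Gamma_t(b) := \sum_r b_{tr}(b)\, c_{tr}$ as a function of $b = b_t(x)$, where $b_{tr}(\cdot)$ is given by the formula in the text. Because the marginal costs satisfy $c_{t1}\le c_{t2}\le\cdots$, the function $\Gamma_t$ is convex and piecewise linear, and for an integer $b=m$ it equals $\sum_{r\le m} c_{tr}$, precisely the cost the offline optimum incurs on item $t$ when it uses $m$ copies. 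Hence, by Jensen's inequality, $\expect{\text{offline cost on } t} = \expect{\Gamma_t(m_t)} \ge \Gamma_t(\expect{m_t}) = \Gamma_t(b_t(x)) = \sum_r b_{tr}(x)\, c_{tr}$. Summing over $t$ and combining with the value identity, $\opt_{\costs} = \expect{\sum_i \vali(S_i)} - \sum_t \expect{\Gamma_t(m_t)} \le \fval(x) - \sum_{t,r} b_{tr}(x) c_{tr} = \fval(x,\costs) \le \fopt_{\costs}$.

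\textbf{Main obstacle.} Everything except the cost term is just linearity of expectation; the one point requiring care is the observation that the per-item cost incurred by an integral allocation is a \emph{convex} function of the number of copies used (thanks to non-decreasing marginal costs), and that the fractional cost $\sum_r b_{tr}c_{tr}$ appearing in the definition of $\fval(\cdot,\costs)$ is exactly that convex function evaluated at the expected copy count $b_t(x)$ — so replacing the random integral count by its expectation only decreases the cost contribution. As a sanity check, specializing to $\cti = 0$ for $i\le B_t$ and $\cti = \infty$ otherwise recovers Lemma~\ref{lem:fopt-upperbound}: finiteness of $\fval(x,\costs)$ then forces $b_t(x)\le B_t$, i.e.\ the supply constraint \eqref{eq:supply}, and $\fval(x,\costs)=\fval(x)$.
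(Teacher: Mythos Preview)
Your proof is correct and follows essentially the same approach as the paper: both take $x$ to be the vector of marginal allocation probabilities of the offline optimum, verify the demand constraint, and then show that the fractional cost $\sum_{t,r} b_{tr}c_{tr}$ is at most the expected integral cost. The only difference is in how this last inequality is justified: the paper introduces $d_{tr}=\prob{m_t\ge r}$, observes that $\sum_{r\ge k} d_{tr}\ge \sum_{r\ge k} b_{tr}$ for every $k$, and concludes via Abel summation, whereas you bundle these into a single application of Jensen's inequality to the convex piecewise-linear function $\Gamma_t$---a slightly cleaner but mathematically equivalent packaging of the same idea.
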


We now show that our analysis (theorems in
\Cref{sec:ub-interval} along with Lemma~\ref{lem:FGL}) extends to the
setting with convex production costs as well. We begin by updating our
definition of fractional unit allocations. Recall that unit
allocations implicitly define a partition of items into bundles and
associate each job with at most one bundle. When items have production
costs, the value of a fractional unit allocation depends on which copy
of each item belongs to any particular bundle in the partition. This
association of copies to bundles is not necessarily unique. So we
extend the definition of unit allocations to make the partitioning of
items into bundles explicit. 

\begin{definition} A {\em fractional unit allocation with costs} is a
  pair $(x,\tau)$, where $x$ is a demand-feasible fractional allocation and
  $\tau= \{\tau_1, \tau_2, \tau_3, \cdots\}$ is a partition of the multiset of
  items $\{(t,r)\}_{t\in T, r\in\Z^+}$, if there exists a partition of
  jobs $j\in U$ with $x_j>0$ into sets $\{A_1, A_2, A_3, \cdots\}$,
  such that:
    \begin{itemize}
    \item For all $k$, the set of items contained in $\tau_k$, $T_k=\{t:
      (t,r)\in \tau_k\}$, forms an interval.
        \item For all $j\in U$ with $x_j>0$, there is exactly one
          index $k$ with $j\in A_k$. 
        \item For all $k$ and $j\in A_k$, $I_j\subseteq T_k$.
        \item For all $k$, we have $\fwt(x_{A_k}) \le 1$.
    \end{itemize}
    The fractional value of $(x,\tau)$ is defined as:
    \[\fval(x,\tau,\costs) = \sum_j v_jx_j - \sum_k \sum_{(t,r)\in
        \tau_k}  \sum_{j\in A_k: t\in I_j} c_{tr}x_j.\]
\end{definition}

\noindent
We can now state a counterpart to Lemma~\ref{lem:FGL}, with the proof
deferred to \Cref{sec:deferred}.
\begin{lemma}
    \label{lem:FGL-costs}
    For any cost vector $\costs$ and a fractional unit allocation with
    costs, $(x,\tau)$, there exists an anonymous bundle pricing
    $\prices$ such that
    \[
        \sw(\prices) \ge \half \fval(x,\tau,\costs).
    \]
\end{lemma}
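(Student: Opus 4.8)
The plan is to mimic the first-principles, \cite{FGL15}-style argument that underlies Lemma~\ref{lem:FGL}, carrying the production costs through the balanced-prices accounting; the one structural change is that bundle prices must be supplemented by a usage-dependent charge, which is why ``static'' is dropped from the statement. As preprocessing, for each bundle index $k$ set
\[
    \nu_k \;:=\; \sum_{j\in A_k} \vj\xj \;-\; \sum_{(t,r)\in\tau_k}\ \sum_{j\in A_k:\, t\in\intj} c_{tr}\,\xj ,
\]
so that $\fval(x,\tau,\costs)=\sum_k\nu_k$. I would offer only those bundles $\tau_k$ with $\nu_k>0$; since $\sum_{k:\nu_k>0}\nu_k\ge\fval(x,\tau,\costs)$, it then suffices to exhibit a pricing $\prices$ over these bundles with $\sw(\prices)\ge\half\sum_{k:\nu_k>0}\nu_k$.

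For the pricing itself, I would use the standard observation that, since $\fwt(x_{A_k})\le 1$, the restriction of $x$ to $A_k$ behaves like a lottery that designates bundle $T_k$ to a job $j\in A_k$ with probability $\xj$ (and to nobody with the remaining probability). Write $c_k(I):=\sum_{(t,r)\in\tau_k:\,t\in I}c_{tr}$ for the production cost of the copies of a sub-interval $I\subseteq T_k$ that lie in $\tau_k$. The mechanism offers bundle $\tau_k$ for a flat surcharge $q_k:=\half\nu_k$ on top of the production cost of whatever the purchaser actually uses: a buyer who claims $\tau_k$ and uses sub-interval $I$ is supplied exactly those copies and is charged $q_k+c_k(I)$. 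Note that under the designating lottery the expected ``net value'' extracted from bundle $k$ is $\sum_{j\in A_k}(\vj-c_k(\intj))\xj=\nu_k$.

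The accounting then proceeds bundle-by-bundle, as in the proof of Lemma~\ref{lem:FGL}, using $\sw(\prices)=\util(\prices)+\rev(\prices)$ with revenue measured net of production cost. Fix an instantiation of arrivals. If some buyer purchases bundle $k$, then the usage charge exactly cancels the realized production cost, so the net revenue collected from $k$ equals $q_k=\half\nu_k$. If bundle $k$ goes unpurchased, then the job $j$ that $x_{A_k}$ designates to $T_k$ found $T_k$ available on arrival and bought something at least as good; using that it could have taken $T_k$ for $q_k+c_k(\intj)$, its utility is at least $\bigl(\vj-c_k(\intj)-q_k\bigr)^+$, whose expectation over the job's independent value realization is at least $\nu_k-q_k=\half\nu_k$, weighted by the probability that $k$ is unsold. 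Combining the two cases, the expected contribution of bundle $k$ to $\sw(\prices)$ is at least $\half\nu_k$, and summing over $k$ gives $\sw(\prices)\ge\half\sum_k\nu_k\ge\half\fval(x,\tau,\costs)$.

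The main obstacle is exactly the interaction between prices and costs that forces the usage-dependent charge. A flat bundle price cannot work here: a bundle $\tau_k$ may contain copies whose cost is large relative to the values of the jobs in $A_k$ -- with $\nu_k$ staying positive only because those copies are barely used fractionally -- so any flat price large enough to cover production would be unaffordable for the designated buyer, breaking the unsold case. Tying the buyer's payment to the realized production cost makes the seller's cost and the buyer's payment move together, so that both the affordability bound (unsold case) and the revenue bound (sold case) collapse to the single quantity $\nu_k$. Checking that this two-part charge still fits the ``anonymous bundle pricing'' interface, and pushing the \cite{FGL15} independence/conditioning argument through with the extra cost term, is the bulk of the remaining work, and is presumably why the full proof is deferred to Section~\ref{sec:deferred}.
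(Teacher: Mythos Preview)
Your approach is correct and essentially the same as the paper's: both arrive at a two-part price (a base charge reflecting the bundle's net fractional value $\nu_k$, plus a usage-dependent surcharge $c_k(I)$ covering production cost) and then run the standard balanced-price revenue/utility accounting on the net values $\vj-c_k(\intj)$. The only packaging difference is that the paper first subtracts $c_k(\intj)$ from each job's value to obtain a cost-free instance, invokes Lemma~\ref{lem:FGL} as a black box to get the base price, and then re-adds the cost as a surcharge, whereas you unroll this reduction and redo the FGL bookkeeping directly.
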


We emphasize that the pricing returned by the above lemma is not
necessarily static. For each bundle $\tau_k$ in the unit allocation
$(x,\tau)$, our pricing associates a price with each interval
$I\subseteq T_k$. Prices for different such subsets $I$ can be
different and depend on the costs of the respective items. When one of
the intervals in $\tau_k$ is bought, the seller removes from the menu
all other intervals corresponding to $\tau_k$. Moreover, different
copies of the same bundle can have different prices, because they cost
different amounts to the seller. We leave open the question of
obtaining a static bundle pricing with a good approximation factor for
this setting.

Next we state and prove a counterpart of
Theorem~\ref{thm:arbit-capacity}: given any fractional allocation $x$
we can construct a fractional unit allocation that captures an
$O(\log L/\log\log L)$ fraction of the value of $x$. The proof is
essentially a reduction to the fixed-capacity setting. We first
partition the allocation $x$ into ``layers'' as in the proof of
Theorem~\ref{thm:arbit-capacity}. The fractional weight for every item
in the allocation for any single layer is at most 1. In this case, it
becomes possible to apply the unit-capacity theorem
(Theorem~\ref{thm:unit-capacity}) to construct a
partition of items in that layer and a corresponding unit
allocation. Putting these layer-by-layer unit-allocations together
gives us an overall fractional unit allocation.




\begin{lemma}
    \label{lem:arbit-cost}
    For all non-decreasing costs $\costs$ and every fractional allocation $x$,
    there exists a fractional unit allocation $(x',\tau)$ such that
    \[
        \fval(x,\costs)\le  O(\log L/\log\log L)\fval(x',\tau,\costs).
    \]
\end{lemma}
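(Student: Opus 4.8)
The plan is to mirror the fixed-capacity argument of \Cref{thm:arbit-capacity}, using a layering reduction to the unit-capacity case (\Cref{thm:unit-capacity}), but carrying the production costs $\costs$ along through every step. First I would reuse the layer decomposition from the proof of \Cref{thm:arbit-capacity}: set $\Bmax$ large enough that $b_t(x) \le \Bmax$ for all $t$, and repeatedly invoke \Cref{lem:greedy-layer} on $x$ (treating $B_t := \lceil b_t(x)\rceil$) to peel off sets $S$ of jobs whose fractional weight on any item $t$ is between $\min\{1, b_t(x)\}$ and $4$. Scaling each peeled set by $1/4$ gives, for each layer $r$, a fractional allocation $\xr$ that is feasible with respect to the layer's item set $\layer = \{t : b_t(x) \ge r\}$, and with $\sum_r \fwt(\xr_j) = \tfrac14 x_j$ for every job $j$, so $\sum_r \fval(\xr, \costs) \ge \tfrac14 \sum_j v_j x_j - (\text{cost terms})$. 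The point is that the layering respects the ``copies'': the $r$th layer uses exactly the $r$th copy of each item in it, so the cost incurred by $\xr$ on item $t$ is (a scaled version of) $c_{tr} b_{tr}(x)$ summed appropriately, and summing over layers recovers the cost $\sum_{t,r} c_{tr} b_{tr}(x)$ of $x$ up to the same $1/4$ factor.

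Next I would apply \Cref{thm:unit-capacity} to each layer $\xr$ separately. Since $\xr$ is feasible for the unit-capacity instance on item set $\layer$, the theorem gives a fractional unit allocation $\xpr$ together with a partition $\{T^{(r)}_k\}_k$ of $\layer$ into intervals and a partition $\{A^{(r)}_k\}_k$ of the supported jobs, such that $\sum_j v_j \xr_j \le O(\log L/\log\log L)\sum_j v_j \xpr_j$; moreover, inspecting the construction, $\xpr$ is obtained from $\xr$ only by dropping jobs and rescaling by a constant (the union of a heavy-allocation and a light-allocation, each a constant-fraction scaling), so $\xpr_j \le \xr_j$ pointwise. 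I would then assemble the overall unit allocation with costs: declare $\tau_k$ for each pair $(r,k)$ to be $\{(t,r) : t \in T^{(r)}_k\}$ — i.e.\ the $r$th copy of each item in the $k$th bundle at layer $r$ — and let $A_{(r,k)} = A^{(r)}_k$. This is a partition of the multiset $\{(t,r)\}$ (each copy $(t,r)$ lands in exactly one $\tau$), each $\tau$'s item set $T^{(r)}_k$ is an interval, each supported job lies in exactly one $A$, $I_j \subseteq T^{(r)}_k$ holds by construction, and $\fwt(x'_{A_{(r,k)}}) \le 1$ where $x' := \sum_r \xpr$. Hence $(x',\tau)$ is a fractional unit allocation with costs.

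Finally I would verify the value bound $\fval(x,\costs) \le O(\log L/\log\log L)\,\fval(x',\tau,\costs)$. Write $\fval(x',\tau,\costs) = \sum_j v_j x'_j - \sum_{r,k}\sum_{(t,r)\in\tau_k}\sum_{j\in A_{(r,k)}: t\in I_j} c_{tr} x'_j$. For the value part, the per-layer bound of \Cref{thm:unit-capacity} combined with the $1/4$-loss of the layering gives $\sum_j v_j x'_j \ge \Omega(\log\log L/\log L)\sum_j v_j x_j$; since $x'_j \le x_j$ pointwise, the cost term for $x'$ is at most the corresponding cost term for $x$ (the copies used are the same, and $x'$ allocates no more of each copy than $x$ does). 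The only subtlety, and the step I expect to be the main obstacle, is bookkeeping the cost so that the $O(\log L/\log\log L)$ bound holds for the \emph{net} value $\fval(\cdot,\costs) = (\text{value}) - (\text{cost})$ rather than for value and cost separately — dropping jobs reduces both value and cost, so one must argue the net quantity degrades by only the stated factor. The clean way is: by \Cref{lem:fopt-upperbound-costs} reasoning it suffices to compare against $\fval(x,\costs)$, and since $\fval(x,\costs) \le \sum_j v_j x_j$ while $\fval(x',\tau,\costs) \ge \sum_j v_j x'_j - (\text{cost of }x) \ge \Omega(\log\log L/\log L)\sum_j v_j x_j - \sum_j v_j x_j \cdot (\text{something})$ is too lossy; instead one tracks cost layer-by-layer, noting that within layer $r$ the net value $\sum_j v_j \xpr_j - \sum_{(t,r)} c_{tr}(\cdots)$ is at least a $\Omega(\log\log L/\log L)$ fraction of $\sum_j v_j \xr_j - (\text{cost of }\xr)$ because \Cref{thm:unit-capacity}'s job-dropping only removes low-value jobs and the cost of a copy is monotone in how much of it is used — so dropping a job weakly decreases both its value contribution and its marginal cost contribution in the same proportion up to constants. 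Summing the per-layer net bounds over $r$ and absorbing the $1/4$ of the layering yields the claim.
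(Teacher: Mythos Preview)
Your layering step is fine and matches the paper, but the final step---obtaining the $O(\log L/\log\log L)$ bound on the \emph{net} value---has a genuine gap that your proposal does not close. You apply \Cref{thm:unit-capacity} with the original values $v_j$, which gives $\sum_j v_j \xpr_j \ge \Omega(\log\log L/\log L)\sum_j v_j \xr_j$, and then try to control costs separately via $\xpr_j \le \xr_j$. But combining these yields only
\[
\fval(\xpr,\costs)\ \ge\ \Omega(\tfrac{\log\log L}{\log L})\sum_j v_j \xr_j \ -\ \mathrm{cost}(\xr),
\]
which need not be $\Omega(\log\log L/\log L)$ times $\sum_j v_j \xr_j - \mathrm{cost}(\xr)$ when the cost is a large fraction of the gross value. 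Your fallback claim---that \Cref{thm:unit-capacity} drops jobs in a way that decreases value and cost ``in the same proportion up to constants''---is unjustified and not true in general: the filtering in Step~1 of that proof is based on the density $v_j/|I_j|$, which is oblivious to cost, so a high-gross-value high-cost job can be retained while a low-gross-value low-cost (but high net value) job is discarded.

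The paper resolves exactly this difficulty with one clean trick: within layer $r$, redefine each job's value to be its \emph{net} value $v'_j := v_j - \sum_{t\in I_j} c_{tr}$, and apply \Cref{thm:unit-capacity} to the unit-capacity instance with these modified values. The theorem then directly yields an $O(\log L/\log\log L)$ bound on $\sum_j v'_j \xpr_j$ versus $\sum_j v'_j \xr_j$, which is precisely the per-layer net-value comparison you need. The earlier majorization argument (showing $\sum_r b'_{tr}c_{tr}\le\sum_r b_{tr}c_{tr}$ via Abel summation and convexity of costs) then handles the bookkeeping across layers. Once you make this substitution your outline goes through.
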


\noindent
Lemmas~\ref{lem:fopt-upperbound-costs}, \ref{lem:FGL-costs} and
\ref{lem:arbit-cost} together imply the following upper bound for the
costs setting.

\begin{theorem}
\label{thm:costs-ub}
For the interval preferences setting with increasing marginal costs on
items, there exists an anonymous bundle pricing $\prices$ such that
\[
    \opt \le O\left(\frac{\log L}{\log\log L}\right)\sw(\prices).
\]
\end{theorem}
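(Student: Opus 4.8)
The plan is to obtain Theorem~\ref{thm:costs-ub} by composing the three lemmas just stated, mirroring exactly how Theorem~\ref{thm:unit-cap-ub} was assembled from its fixed-capacity counterparts. Concretely: let $\optx$ be a fractional allocation attaining $\fval(\optx,\costs)=\fopt_{\costs}$ (a maximizer exists since, by the footnote above, $\fopt_{\costs}$ is the optimum of a linear program). Lemma~\ref{lem:fopt-upperbound-costs} gives $\fval(\optx,\costs)=\fopt_{\costs}\ge\opt$. Applying Lemma~\ref{lem:arbit-cost} to $\optx$ produces a fractional unit allocation with costs $(x',\tau)$ with $\fval(\optx,\costs)\le O(\log L/\log\log L)\,\fval(x',\tau,\costs)$, and applying Lemma~\ref{lem:FGL-costs} to $(x',\tau)$ yields an anonymous bundle pricing $\prices$ with $\sw(\prices)\ge\half\fval(x',\tau,\costs)$. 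Chaining the three inequalities and absorbing the factor of $2$ into the $O(\cdot)$ gives $\opt\le O(\log L/\log\log L)\,\sw(\prices)$, as claimed; the resulting pricing is anonymous (though, as noted, not necessarily static).

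Since the top-level argument is a one-line composition, the substance lies in Lemma~\ref{lem:arbit-cost}, which I would prove as a reduction to the unit-capacity theorem (Theorem~\ref{thm:unit-capacity}), following the blueprint of Theorem~\ref{thm:arbit-capacity}. First decompose the given fractional allocation $x$ into layers $\xr[1],\xr[2],\dots$, where layer $r$ is a rescaled slice of the jobs assigned to the $r$th copy of each item, chosen (via the greedy peeling of Lemma~\ref{lem:greedy-layer}) so that each $\xr$ has fractional weight at most $1$ on every item of layer $\layer$ — hence is feasible for a unit-capacity instance — while $\sum_r\fval(\xr,\costs)\ge\Omega(1)\,\fval(x,\costs)$. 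The one genuinely new ingredient is the cost bookkeeping: copy $r$ of item $t$ should carry the marginal cost $c_{tr}$, and because $\costs$ is non-decreasing the peeling assigns the cheaper, lower-indexed copies to the earlier (higher value-density) layers, so the per-layer net values sum to a constant fraction of $\fval(x,\costs)$. Then apply Theorem~\ref{thm:unit-capacity} to each $\xr$ to obtain a unit allocation together with an explicit interval partition $\tau^{(r)}$ of that layer's items, and set $\tau=\bigcup_r\tau^{(r)}$, $x'=\sum_r\xpr$. The extended definition of a fractional unit allocation with costs records which copy $(t,r)$ lies in which bundle, so $\fval(x',\tau,\costs)$ is well defined and inherits the $O(\log L/\log\log L)$ loss from the layer-by-layer applications of Theorem~\ref{thm:unit-capacity}.

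The main obstacle, then, is not combinatorial — that content is inherited wholesale from the fixed-capacity results — but the verification that cost accounting survives each reduction: (a) that Lemma~\ref{lem:fopt-upperbound-costs} still bounds $\opt$ by $\fopt_{\costs}$ despite the seller acquiring supply online against convex costs; (b) that in Lemma~\ref{lem:arbit-cost} the sum of per-layer fractional values, net of the per-copy marginal costs allocated to each layer, lower-bounds a constant fraction of $\fval(x,\costs)$ — which uses monotonicity $c_{ti}\ge c_{ti'}$ for $i>i'$ so that spreading value across cheaper low-indexed copies never raises total cost; and (c) that the balanced-prices argument behind Lemma~\ref{lem:FGL-costs} still closes when each interval $I\subseteq T_k$ is priced according to the costs of the particular copies of its items, giving an anonymous (but possibly dynamic) pricing. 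Once these three checks go through, Theorem~\ref{thm:costs-ub} is immediate from the composition in the first paragraph.
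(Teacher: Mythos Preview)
Your proposal is correct and follows essentially the same approach as the paper: the theorem is stated as an immediate consequence of Lemmas~\ref{lem:fopt-upperbound-costs}, \ref{lem:FGL-costs}, and \ref{lem:arbit-cost}, and your sketch of Lemma~\ref{lem:arbit-cost} (layer via Lemma~\ref{lem:greedy-layer}, charge copy $r$ with cost $c_{tr}$, reduce each layer to unit capacity by absorbing costs into values $v'_j = v_j - \sum_{t\in I_j} c_{tr}$, then apply Theorem~\ref{thm:unit-capacity}) is exactly the paper's argument. One small correction: the reason the per-layer cost accounting works is not that earlier layers get ``higher value-density'' jobs, but simply that each layer below the top one has weight at least $1$ on every item (so $\sum_{r\ge k} b'_{tr}\le \sum_{r\ge k} b_{tr}$), which together with cost monotonicity gives $\sum_r b'_{tr}c_{tr}\le \sum_r b_{tr}c_{tr}$.
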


The lower bound of $\Omega\left(\frac{\log L}{\log\log L}\right)$ also extends
to the setting with costs, as the unit-capacity setting is a special case.

\section{Deferred proofs}
\label{sec:deferred}

\subsection{Proof of Lemma~\ref{lem:FGL}}

\begin{numberedlemma}{\ref{lem:FGL}}
   For any feasible fractional unit allocation $x$, there exists a static,
    anonymous bundle pricing $\prices$ such that
    \[
        \sw(\prices) \ge \half \fval(x).
    \]
\end{numberedlemma}
\begin{proof}
    Observe that we can write the social welfare of any $\prices$ as
    the sum of the expected revenue of the seller and the total
    expected utility obtained by the buyers:
    \[
        \sw(\price) = \rev(\price) + \util(\price).
    \]
    Let $\{T_k\}$ and $\{A_k\}$ be the partition of the items and jobs
    respectively corresponding to the unit allocation $x$. For each
    bundle $T_k$, let $W_k = \fwt(x_{A_k})$, and set price
    $\pricek = \frac1{2W_k}\fval(x_{A_k})$. For this setting of
    prices, we will bound the revenue and utility terms separately,
    and show that their sum is at least
    $\half\fval(x) = \half\sum_{k}\fval(x_{A_k})$.

    For any realization of the buyers' values and arrival order, let $Z_k$
    indicate whether bundle $k$ is purchased by {\em any} buyer at these
    prices.  The revenue term is then
    \begin{align}
        \rev(\price)  &= \sum_{k}\Pr[Z_k=1]\pricek \nonumber \\
            &\ge \half\sum_{k}\Pr[Z_k=1]\fval(x_{A_k}). \label{eq:rev}
    \end{align}
    The inequality follows from the fact that $W_k \le 1$ by the
    definition of unit allocations.
    
    For the utility, note that buyer $i$ with value $\vali$ will purchase an
    available bundle maximizing her utility. Define $k_j$ to be the index $k$
    such that $j \in A_k$. The buyer's utility is at
    least\footnote{Here we use the notation $y^+$ to denote $\max\{0,y\}$.}
    \begin{align}
        u_i(\vali) &\ge \expect[Z]{\max_{j\in\Jvi} \indicate\{Z_{k_j}=0\}
                \big(\vali(\intj) - \price[k_j]\big)^+} \nonumber \\
            &\ge \frac1{\qvi}\sum_{j\in\Jvi} \Pr[Z_{k_j}=0]\xj(\vj -
                \price[k_j])^+. \label{eq:vali-util}
    \end{align}
    The second inequality follows from \eqref{eq:demand}.

    Summing over all buyers and all valuations, we have
    \begin{align}
        \util(\price)  &\ge \sum_{i,\vali}\qvi u_i(\vali) \nonumber \\
              &\ge \sum_{i,\vali}\sum_{j\in\Jvi}\Pr[Z_{k_j}=0]\xj(\vj -
                  \price[k_j])^+ \nonumber \\
              &\ge \sum_k \Pr[Z_k=0]\sum_{j\in A_k} \xj(\vj - \pricek) 
                  \nonumber \\
              &\ge \sum_k \Pr[Z_k=0]
                \Big(\fval(x_{A_k})-W_k\pricek  \Big) 
                  \nonumber \\
              &\ge \sum_k \Pr[Z_k=0]
                \Big(\fval(x_{A_k})-\frac 12 \fval(x_{A_k})\Big) 
                  \nonumber \\
              &= \half \sum_k \Pr[Z_k=0] \fval(x_{A_k}) \label{eq:util}
    \end{align}

    The result now follows by summing \eqref{eq:rev} and \eqref{eq:util}.
\end{proof}

\subsection{Proof of Lemma~\ref{lem:small-v-bound}}

\begin{numberedlemma}{\ref{lem:small-v-bound}}
For any fractional allocation $x$, there exists a set of jobs $U_1$ as defined
in \Cref{sec:unit-cap-ub} such that $\fval(x_{U_1})\ge \half\fval(x)$.
\end{numberedlemma}
\begin{proof}
We bound the total fractional value of jobs left out of $U_1$:
    \begin{align*}
        \sum_{j\not\in U_1} v_jx_j &< \sum_{j\not\in U_1} 
                \left(\half\sum_{t\in I_j}\fvt\right) x_j \\
            &= \half \sum_{j\not\in U_1} x_j \left(\sum_{t\in I_j}
                \sum_{j' : I_{j'} \ni t} \densj[j'] x_{j'} \right) \\
            &= \half \sum_t\left(\sum_{\substack{j\not\in U_1\\I_j\ni t}}x_j\right)
                \sum_{j' : I_{j'} \ni t} \densj[j'] x_{j'} \\
            &\le \half \sum_t\sum_{j : I_j \ni t} \densj x_j = \half \fval(x).
    \end{align*}
\end{proof}

\subsection{Proof of Lemma~\ref{lem:a-cell-bound}}

\begin{numberedlemma}{\ref{lem:a-cell-bound}}
    For each interval $\Int_{\ell,k}$, there exists a set of jobs
    $S_{\ell,k} \subseteq \Gtilde_{\ell,k}$ such that
    \begin{enumerate}[label=\roman*., leftmargin=2\parindent]
        \item $\fwt(x, S_{\ell,k}) \le \frac1\beta$ and
        \item $\fval(x, S_{\ell,k}) \ge \frac16 \fval(x, \Gtilde_{\ell,k})$.
    \end{enumerate}

\end{numberedlemma}

\begin{proof}
    Since all jobs in $\Gtilde_{\ell,k,a}$ have value between $2^{a-1}$ and $2^a$, we have
    \[
        2^{a-1}\fwt(x, \Gtilde_{\ell,k,a}) \leq\fval(x,\Gtilde_{\ell,k,a}) \leq
            2^a\fwt(x, \Gtilde_{\ell,k,a}).
    \]
    For every $u\in\{1,\cdots,\lceil\log \vmax\rceil\}$, define
    $S_u=\sum_{a\leq u}2^a\fwt(x,\Gtilde_{\ell,k,a})$.  Then $S_{\lceil\log
    \vmax\rceil}$ is an upperbound of $\fval(x, \Gtilde_{\ell,k})$.  Let $m$ be such that $S_m\leq
    \frac{1}{3}S_{\lceil\log \vmax\rceil}$ and
    $S_{m+1}>\frac{1}{3}S_{\lceil\log \vmax\rceil}$.  Consider the following
    two cases.

    \medskip
    {\em Case 1.} If $S_{m+1}-S_m = 2^{m+1}\fwt(x,\Gtilde_{\ell,k,m+1}) >
    \frac{1}{3}S_{\lceil\log \vmax\rceil}$, then
    \begin{align*}
        \fval(x,\Gtilde_{\ell,k,m+1}) &\geq 2^{m}\fwt(x,\Gtilde_{\ell,k,m+1}) \\
            &> \frac{1}{6}S_{\lceil\log \vmax\rceil} \\
            &\geq\frac16 \fval(x,\Gtilde_{\ell,k}).
    \end{align*}
    Since $\fwt(x, \Gtilde_{\ell,k,m+1})\leq \frac{1}{2\beta}<\frac{1}{\beta}$,
    setting $S_{\ell,k}=\Gtilde_{\ell,k,m+1}$ satisfies both conditions
    of the lemma.

    \medskip
    {\em Case 2.} If $S_{m+1}-S_m\leq \frac{1}{3}S_{\lceil\log \vmax\rceil}$,
    then $S_{m+1}\leq\frac{2}{3}S_{\lceil\log \vmax\rceil}$, and
    \begin{align*}
        \sum_{a>m+1}\fval(x,\Gtilde_{\ell,k,a}) &\geq
                \frac{1}{2}(S_{\lceil\log \vmax\rceil}-S_{m+1}) \\ 
        &\geq \frac{1}{6}S_{\lceil\log\vmax\rceil} \\
        &\geq \frac16\fval(x,\Gtilde_{\ell,k}).
    \end{align*}
    Meanwhile, note that
    $\sum_{a>m+1}2^a\fwt(x,\Gtilde_{\ell,k,a}) < \frac{2}{3}S_{\lceil\log
    \vmax\rceil} < 2S_{m+1}.$ Then
    \begin{align*}
        \sum_{a>m+1}\fwt(x,\Gtilde_{\ell,k,a}) &\leq
                \frac{1}{2^{m+2}}\sum_{a>m+1} 2^a\fwt(x,\Gtilde_{\ell,k,a}) \\
        &< \frac{1}{2^{m+2}}\cdot2S_{m+1} \\
        &= \frac{1}{2^{u+1}}\sum_{a\leq m+1}2^a\fwt(x,\Gtilde_{\ell,k,a}) \\
        &\leq \frac{1}{2^{u+1}}\sum_{a\leq m+1}2^a\frac{1}{2\beta} < \frac{1}{\beta}.
    \end{align*}
    Thus setting $S_{\ell,k}=\Union_{a>m+1}\Gtilde_{\ell,k,a}$ satisfies
    both conditions of the lemma.
\end{proof}

\subsection{Proof of Lemma~\ref{lem:light-bound}}

\begin{numberedlemma}{\ref{lem:light-bound}}
    There exists a fractional unit allocation $\tilde{x}_\light$ such that 
    \[
        \sum_{G\in\light} \fval(x_G) \le
        O\left(\frac{\log L}{\log\beta}\right)\fval(\tilde{x}_\light).
    \]
\end{numberedlemma}

\begin{proof}
  Fix any length scale $\ell$ and consider the partition of items into
  intervals $\Int_{\ell,k}$. We will construct a fractional unit
  allocation corresponding to this partition. We first associated with
  the interval $\Int_{\ell,k}$ the set of all jobs contained inside
  this interval that have length scale between $\ell-\log\beta+1$ and
  $\ell$. Formally, we define
    \begin{equation*}
        H_{\ell,k}=\Union_{\substack{ \ell',k':
            \Int_{\ell',k'}\subseteq \Int_{\ell,k} \\ \ell'\in
            (\ell-\log\beta, \ell]}} S_{\ell',k'},
    \end{equation*}
   where $S_{\ell',k'}$ is defined in
   Lemma~\ref{lem:a-cell-bound}. See Figure~\ref{fig:intervalblock} for 
   a graphical illustration of $H_{\ell,k}$.
   Let $H_\ell = \cup_k H_{\ell,k}$
   denote the set of all jobs that belong to the partition given by
   $\{H_{\ell,k}\}$. 

   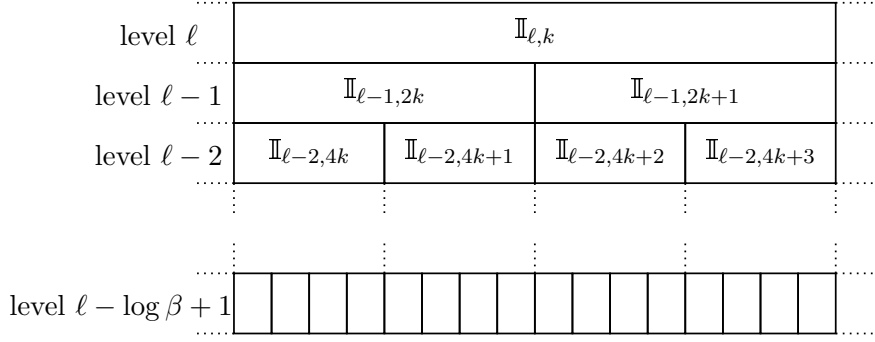
\begin{figure}
       \centering
    \newcommand{\cellwidth}{0.5 cm}
    \newcommand{\cellheight}{0.8 cm}

    \begin{tikzpicture}
        [x=\cellwidth,y=-\cellheight,node distance=0 cm,outer sep=0 pt,line
        width=0.66pt]

        \tikzstyle{cell}=[rectangle,draw,
            minimum height=\cellheight,
            anchor=north west,
            text centered]
        \tikzstyle{w16}=[cell,minimum width=16*\cellwidth]
        \tikzstyle{w8}=[cell,minimum width=8*\cellwidth]
        \tikzstyle{w4}=[cell,minimum width=4*\cellwidth]
        \tikzstyle{w2}=[cell,minimum width=2*\cellwidth]
        \tikzstyle{w1}=[cell,minimum width=1*\cellwidth]
        \tikzstyle{job}=[rectangle,draw,fill=gray!30,anchor=north west]

        \node[anchor=north] at (-2, 0.2) {level $\ell$};
        \node[anchor=north] at (-2, 1.2) {level $\ell-1$};
        \node[anchor=north] at (-2, 2.2) {level $\ell-2$};
        \node[anchor=north] at (-3, 4.7) {level $\ell-\log \beta+1$};
        \node[w16] at (0, 0) {$\Int_{\ell,k}$};
        \node[w8]  at (0, 1) {$\Int_{\ell-1,2k}$};
        \node[w8]  at (8, 1) {$\Int_{\ell-1,2k+1}$};
        \node[w4]  at (0, 2) {$\Int_{\ell-2, 4k}$};
        \foreach \k in {1,2,3} {
            \node[w4] at (4*\k, 2) {$\Int_{\ell-2, 4k+\k}$};
        }

        \foreach \y in {0,1,2,3, 4.5,5.5} {
            \draw[dotted] (0, \y) -- (-1,\y);
            \draw[dotted] (16,\y) -- (17,\y);
        }

        \foreach \x in {0,4,...,16} {
            \draw[dotted] (\x,3) -- (\x,3.5);
            \draw[dotted] (\x,4.5) -- (\x,4);
        }

        \foreach \x in {0,...,15} {
            \node[w1] at (\x, 4.5) {};
        }

    \end{tikzpicture}
    \caption{Intervals associated with $H_{\ell,k}$ in
       Lemma~\ref{lem:light-bound}. There are $\beta-1$ intervals in the block,
       and from each interval $\Int_{\ell',k'}\subseteq \Int_{\ell,k}$ we pick
       a set of jobs $S_{\ell',k'}$ with fractional weight at most $1/\beta$.}
    \label{fig:intervalblock}
\end{figure}

   We now claim that $x_{H_\ell}$ is a fractional unit allocation. To
   see this, observe that for each $(\ell,k)$ there are exactly
   $2^{\log\beta}-1<\beta$ pairs $(\ell',k')$ with
   $\Int_{\ell',k'}\subseteq \Int_{\ell,k}$ and
   $\ell'\in (\ell-\log\beta, \ell]$. $H_{\ell,k}$ is therefore a
   union of at most $\beta$ groups
   $S_{\ell',k'}$. Lemma~\ref{lem:a-cell-bound} now implies that
   $\fwt(x, H_{\ell,k})\le \beta \frac 1\beta = 1$.

   Lemma~\ref{lem:a-cell-bound} also implies that the total fractional
   value captured by $x_{H_\ell}$ is a constant fraction of the total
   value of all light weight groups at length scales in
   $(\ell-\log\beta, \ell]$: 
   \begin{align*}
     \fval(x,H_\ell) = \sum_k \fval(x,H_{\ell,k}) & = \sum_k 
       \sum_{\ell'\in (\ell-\log\beta, \ell]} \fval(x, S_{\ell',k})\\
     & \ge \frac 16 \sum_k 
       \sum_{\ell'\in (\ell-\log\beta, \ell]}\fval(x, \Gtilde_{\ell',k})\\
     & = \frac 16 \sum_{\substack{G\in \light;\\ G \text{ at length scale } \ell'\in (\ell-\log\beta, \ell]}}\fval(x_G)
   \end{align*}

   Now consider the $\lceil\lmax/\log\beta\rceil$ length scales in $\{1, \cdots,
   \lmax\}$ that are multiples of $\log \beta$. By our argument above,
   the fractional allocations $x_{H_\ell}$ corresponding to such
   length scales together capture a sixth of all of the
   fractional value in $\light$. Therefore, there exists some $\ell$
   such that 
   \[\fval(x,H_\ell)\ge \Omega\left(\frac
     {\log\beta}{\lmax}\right)\sum_{G\in\light} \fval(x_G).\]
   The corresponding unit allocation $x_{H_\ell}$ satisfies the
   requirements of the lemma.
\end{proof}

\subsection{Proof of Lemma~\ref{lem:greedy-layer}}

\begin{numberedlemma}{\ref{lem:greedy-layer}}
    For any feasible fractional allocation $x$ in the interval preferences
    setting with arbitrary capacities, one can efficiently
    construct a set $S$ of jobs such that the total fractional weight of
    $x_S$ at any item $t$ is at least $\min\{1,B_t\}$ and at most $4$.
    Formally, for all items $t$, $\min\{1,B_t\}\le \sum_{j\in S: t\in
    \intj} x_j < 4$.
\end{numberedlemma}

\begin{proof}
    We build up the set $S$ greedily as follows. On each iteration,
    starting from the earliest item $t$ which is not fully covered, we add
    the job whose interval contains $t$ and ends the latest. More formally,
    let $S_0 = \emptyset$. For any set $S$, let $(x,S)|_t := \sum_{j \in S
    : \intj \ni t} x_j$. On iteration $i$, find $t_i = \min \{t:
    (x,S_{i-1})|_t < \min(1, \capt)\}$. Find $j_i = \argmax\{t \in \intj :
    j \not\in S_{i-1}, t_i \in \intj\}$, and set $S_i = S_{i-1} \union
    \{j_i\}$.  Let $S = S_{i^*}$, where $i^*$ is the last iteration.

    By construction, $\min\{1,\capt\}\le\sum_{j\in S: t\in\intj}x_j$ for all
    $t$.  Suppose, by way of contradiction, that there exists $t'$ such that
    $\sum_{j\in S: t'\in\intj}x_j \ge 4$. Let $\{j_1,j_2,\cdots\} \subseteq S$
    be all jobs in $S$ containing $t'$, indexed in the order in which they were
    added to $S$. Let $a$ be such that $1 \le \sum_{i=1}^ax_{j_i} < 2$, and let
    $U_< = \{j_1,\cdots,j_a\}$. Similarly, sort the jobs such that they are
    indexed in order by decreasing end time; let $b$  be such that $1 \le
    \sum_{i=1}^bx_{j_i} < 2$, and let $U_> = \{j_1,\cdots,j_b\}$. Note that
    $\sum_{j\in U_< \union U_>} x_j < 4$, so there exists a job $j'$ which is
    not in either $U_<$ or $U_>$. This job was added to $S$ after all jobs in
    $U_<$ were added, and it ends before all jobs in $U_>$. Let $i'$ be the
    iteration on which $j'$ was added to $S$; we will show there is no item
    $t_{i'}$ for which the algorithm would have added $j'$.
    
    Observe that, because the algorithm covers items starting with the earliest
    uncovered item first, at every iteration $i$ the set of covered items
    (i.e., the set $\{t : (x,S_{i-1})|_t \ge \min(1,\capt)\}$) is a prefix of
    $T$, the set of all items. Therefore, by the time the algorithm has added
    $U_<$ to $S$, all items up to and including $t'$ have been covered. So
    $t_{i'}$ cannot come before or be equal to $t'$.  But all jobs in $U_>$
    contain item $t'$, and therefore must have been added to $S$ before job
    $j'$ because they end later. So item $t_{i'}$ cannot come after $t'$.
    Therefore job $j'$ cannot have been added to $S$.
\end{proof}

\subsection{Proof of Lemma~\ref{lem:peeling}}

\begin{numberedlemma}{\ref{lem:peeling}}
    Given a set $U$ of jobs and a non-zero fractional allocation $y$,
    there exist sets of arms $\tlayer \ne \emptyset$ and $D$ such that 
    \begin{enumerate}[label=\roman*.]
        \item $\tlayer \intersect D = \emptyset$;
        \item $\tlayer \intersect D' = \emptyset$, where $D' = \{\armj : j^{a'}
            \in D, a\ne a'\}$;
        \item for all $t \in E_\tlayer$, $\fwe(y, \tlayer \union D)\ge
          \min\{1, \fwe(y)\}$; \label{item:lowerbound}
        \item for all $t \in E_\tlayer$, $\fwe(y, \tlayer) \le 7$; and
            \label{item:upperbound}
        \item $\fwt(y_\tlayer) \ge 2 \fwt(y_D)$.
    \end{enumerate}
    Furthermore, $\tlayer$ and $D$ can be found efficiently.
\end{numberedlemma}

\begin{proof}

    If $\fwe(y) \le 6$ for all $t \in T$, then set $\tlayer = U$ and $D =
    \emptyset$. This satisfies the conditions. So assume that
    there exists $t$ such that $\fwe(y) > 6$. Pick such a $t$ with maximal
    depth $d(t)$. Thus, for all $t'$ in the subtree below $t$, $\fwe[t'](y)
    \le 6$.

    Suppose there is a set of arms $S$ which begin at or above $t$ and
    terminate at $t$ such that $\fwe(y, S) > 3$. Define the {\em depth} of an arm
    to be the depth of the highest edge in its desired interval, so that arms
    with low depth desire intervals beginning higher in the tree. Then let
    $\armj[j_1], \armj[j_2], \cdots, \armj[j_k], \cdots, \armj[j_\ell]$ be a
    subset of $S$ sorted in decreasing order by depth such that
    \[
        \sum_{i=1}^{\ell-1}y_{\armj[j_i]} < 3 \le \sum_{i=1}^\ell y_{\armj[j_i]}
    \]
    and
    \[
        \sum_{i=k+1}^{\ell} y_{\armj[j_i]} < 1 \le \sum_{i=k}^\ell y_{\armj[j_i]}.
    \]
    Then let $\tlayer = \{\armj[j_1], \cdots, \armj[j_k]\}$ and $D =
    \{\armj[j_{k+1}], \cdots, \armj[j_\ell]\}$.  Because we are considering
    arms which terminate at $t$, $\intj[{\armj[j_i]}] \subseteq
    \intj[{\armj[j_{i+1}]}]$ for all $i \in [\ell-1]$. In particular,
    $\fwe[e](y, \tlayer
    \union D)\geq \sum_{i=k}^{\ell}y_{\armj[j_i]}\ge 1$ for all $e \in T_\tlayer$.
    Observe that $\fwt(y_\tlayer) \ge 2 \fwt(y_D)$. 
    This satisfies the conditions of the lemma.

    Suppose there is no set $S$ of arms terminating at $t$ such that
    $\fwe(y, S)
    > 3$. Then there must exist a set of arms including $t$ but terminating
    below $t$ with total weight at least $3$. Let $t_1, \cdots, t_d$ be the $d$
    edges below $t$.  Let $S_i$ be the set of arms that contain both $t$ and
    $t_i$, and $T_i$ be the set of arms containing only edges within the
    subtree planted at $t_i$. Because $\fwe[t_i](y, S_i) \le 6$ for all $i$ by our
    choice of $t$, we can find $m$ such that $3 \le \sum_{i=1}^m
    \fwe(y, S_i) <
    9$. Let $S = \Union_{i=1}^m S_i$. Let $c = \fwe(y, S)$, so $3 \le c < 9$.
    Let $\armj[j_1], \cdots, \armj[j_k], \cdots, \armj[j_\ell]$ be the arms in
    $S$ sorted in decreasing order by depth, as before, where 
    \[
        2\sum_{i=k+1}^{\ell} y_{\armj[j_i]} \le \sum_{i=1}^k y_{\armj[j_i]}\leq 7
    \]
    and
    \[
        \sum_{i=k}^{\ell} y_{\armj[j_i]} \geq 1.
    \]

    \begin{figure}[tbp]
    \centering
        \subfigure{
        \begin{minipage}[b]{1.1\textwidth}
            \includegraphics[width=0.33\textwidth]{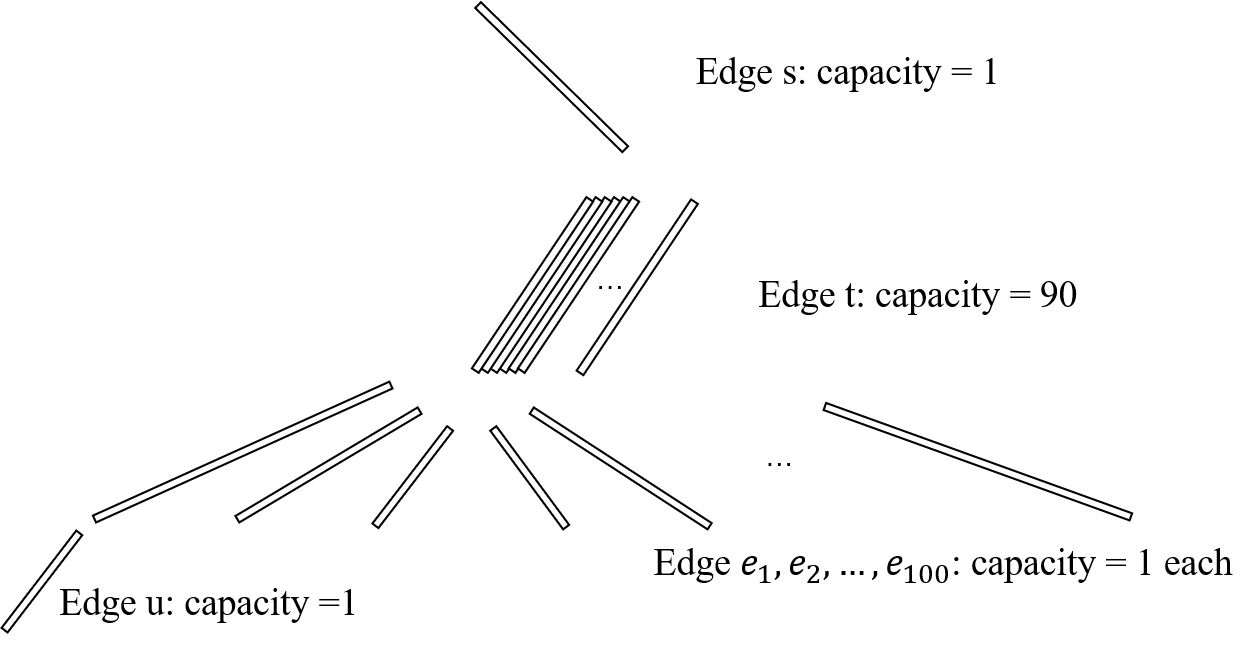}
            \includegraphics[width=0.33\textwidth]{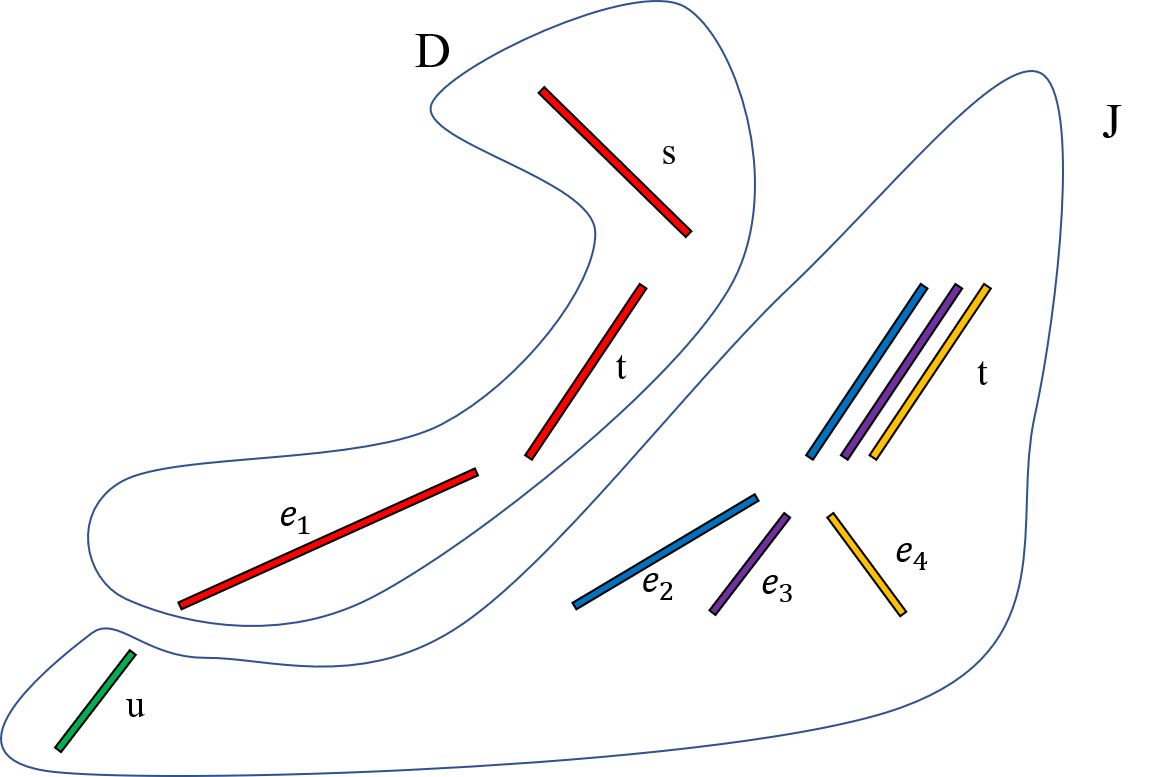}
            \includegraphics[width=0.33\textwidth]{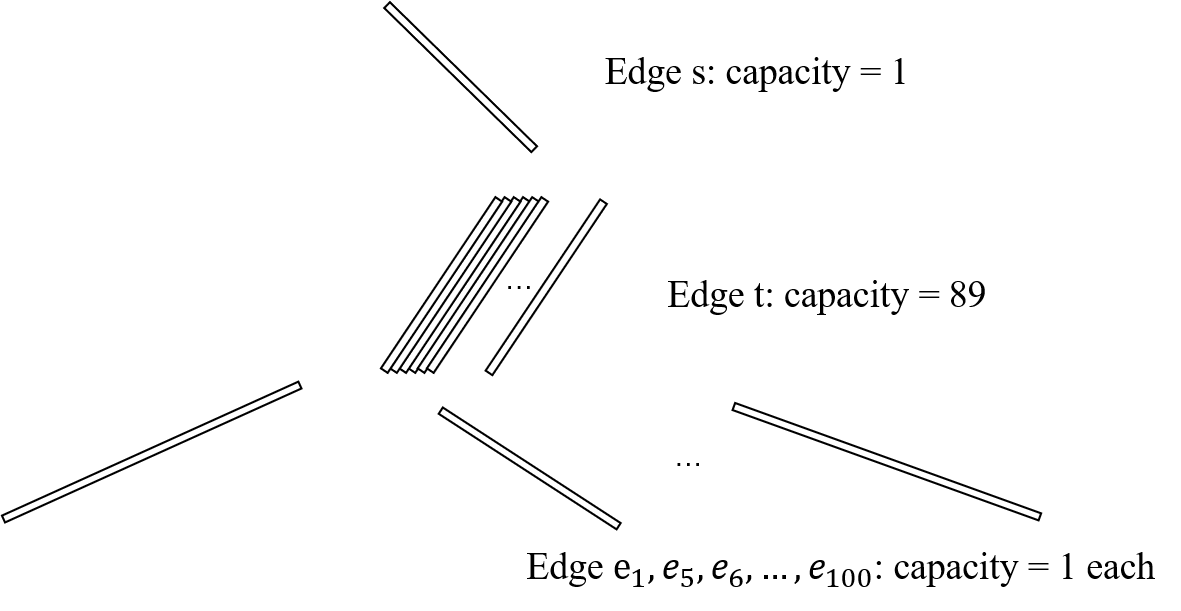}
        \end{minipage}
        }
        \caption{\small{An example for one step of layering
            algorithm. (Left figure:) At the beginning of the layering
            step, there are 101 jobs: $(s-t-e_1)$, $(t-e_2)$,
            $(t-e_3)$, $\cdots$, $(t-e_{100})$, $(u)$. Each job has
            value 1 and allocation 0.9. The capacity of edge t is 90,
            while the capacity of other edges are 1. (Middle figure:)
            In the first step of the layering process, job $(s-t-e_1)$
            is in set $D$, jobs $(t-e_2)$, $(t-e_3)$, $(t-e_3)$ and
            $(u)$ are in set $\tlayer$. The job in set $D$ is deleted,
            while the jobs in set $\tlayer$ have their weight scaled
            down and allocated to one layer. (Right figure:) The
            capacity of edges after the first step of layering
            procedure. The capacity of edges that have positive weight
            in set $\tlayer$ are decreased by 1.}}
    \end{figure}
 
    Then set $\tlayer = \{\armj[j_i] : i = 1, \cdots, k\} \union T_1 \union
    \cdots \union T_m$ and $D = \{\armj[j_i] : i = k+1, \cdots, \ell\}$. For
    each edge $e\in T_\tlayer$, $\fwe(y, \tlayer)\leq 7$. The same as in
    previous case for each edge $e$ that is not below $t$, $\fwe[e](y,
    \tlayer \union
    D)\geq \sum_{i=k}^{\ell}y_{\armj[j_i]}\ge 1$; for each edge $e$ below $t$
    in $T_\tlayer$, $\fwe[e](y, \tlayer \union D)=\fwe(y)$. $\fwt(y_\tlayer) \ge
    \sum_{i=1}^k y_{\armj[j_i]}\ge 2\sum_{i=k+1}^{\ell}
    y_{\armj[j_i]}=\fwt(y_D)$.  Thus every condition of the lemma is satisfied.
\end{proof}

\subsection{Proof of Lemma~\ref{lem:fopt-upperbound-costs}}

\begin{numberedlemma}{\ref{lem:fopt-upperbound-costs}}
    $\fopt_{\costs} \ge \opt_{\costs}$.
\end{numberedlemma}
\begin{proof}
    Let $x^*_j$ be the probability that job $j$ is allocated in optimal offline allocation, where
    the randomness is over all possible realizations of the values. Then $x^*$ is a fractional
    allocation of jobs.
    Let $d_{tr}$ be the probability that the $r$th copy of item $t$ is allocated in 
    offline optimal allocation. Then $0\leq d_{tr}\leq 1$, $b_t(x^*)=\sum_{r}d_{tr}$,
    \begin{equation*}
        \opt_{\costs}=\sum_{j}v_jx^*_j-\sum_{t,r}d_{tr}c_{tr}.
    \end{equation*}
    To prove the lemma, it suffices to show that $\fval(x^*,\costs)\geq\opt_{\costs}$, which is equivalent 
    to proving $\sum_{t,r}b_{tr}c_{tr}\leq\sum_{t,r}d_{tr}c_{tr}$. This is true since for every $t$ and $k$,
    $\sum_{r\geq k}d_{tr}\geq\sum_{r\geq k}b_{tr}$, and 
    \begin{eqnarray*}
        \sum_{r\geq 1}d_{tr}c_{tr}&=&c_{t1}\sum_{r\geq 1}d_{tr}+(c_{t2}-c_{t1})\sum_{r\geq 2}d_{tr}+(c_{t3}-c_{t2})\sum_{r\geq 3}d_{tr}+\cdots\\
        &\geq&c_{t1}\sum_{r\geq 1}b_{tr}+(c_{t2}-c_{t1})\sum_{r\geq 2}b_{tr}+(c_{t3}-c_{t2})\sum_{r\geq 3}b_{tr}+\cdots=\sum_{r\geq 1}b_{tr}c_{tr}.
    \end{eqnarray*} 
\end{proof}

\subsection{Proof of Lemma~\ref{lem:FGL-costs}}

\begin{numberedlemma}{\ref{lem:FGL-costs}}
    For any cost vector $\costs$ and a fractional unit allocation with
    costs, $(x,\tau)$, there exists an anonymous bundle pricing
    $\prices$ such that
    \[
        \sw(\prices) \ge \half \fval(x,\tau,\costs).
    \]
\end{numberedlemma}
\begin{proof}
    We will give a reduction to the fixed-capacity setting without costs and
    apply Lemma~\ref{lem:FGL}.  Recall that in a fractional unit allocation
    with costs, each job $j$ with $x_j > 0$ is explicitly associated with a
    particular bundle, and each bundle is explicitly associated with a
    particular copy of each item contained in the bundle.  Thus, the allocation
    defines a unique charging of the total cost to individual jobs. We
    construct a pricing instance in the fixed-capacity setting by subtracting
    from each job's value the cost of serving it. That is, for every job $j$ in
    the instance with costs, we construct a job $j'$ in the instance without
    costs with value 
    \begin{equation*}
        v_{j'}=v_j-\sum_{t\in I, (t,r)\in \tau_k}c_{tr},
    \end{equation*}
    where $k$ is such that $j \in A_k$ (i.e., the fractional unit allocation
    with costs assigns job $j$ to bundle $k$).  Thus the two instances have
    identical fractional values, i.e.  $\fval'(x)=\fval(x,\tau,\costs)$. By
    Lemma~\ref{lem:FGL}, there exists a static, anonymous bundle pricing
    $\prices'$ such that
    \begin{equation*}
        \sw'(\prices') \ge \half \fval'(x),
    \end{equation*}
    and each set $\tau_k$ is assigned a price $\price'_k$. We use this bundle
    pricing to associate the following prices $\prices$ to the original
    instance with costs. For each bundle $\tau_k$, for each interval $I
    \subseteq T_k$, set price
    \[
        \price_{\tau_k,I} = \price'_k + \sum_{t\in I, (t,r)\in \tau_k} c_{tr}.
    \]
    We refer to the price $\price'_k$ as the base price for bundle $\tau_k$,
    and the sum over costs as the surcharge for interval $I$.

    Our menu lists a price for every subinterval of every bundle, covering
    every copy of every item.\footnote{In practice, of course, for each
    interval the menu can list just the minimum price over all unsold bundles.}
    However, the mechanism sells at most one interval per bundle: when a buyer
    selects her preferred interval, the mechanism removes all intervals
    corresponding to the same bundle from the menu. Thus, our menu is adaptive
    (prices of unsold intervals change in response to demand for other
    intervals), though it remains anonymous.

    We now claim that the social welfare generated by pricing $\prices$ in the
    instance with costs is precisely equal to the social welfare generated by
    pricing $\prices'$ for the new fixed-capacity instance. We prove this
    pointwise by coupling the buyers' values and arrival order in the two
    settings, and then by induction over the number of buyers that have already
    arrived. Our inductive hypothesis is that at any point of time, the set of
    bundles available in the new instance is exactly the same as the set of
    unsold bundles in the original instance.  By the way we
    construct buyers' values in the new instance, buyers' preferences over
    bundles in the two instances are exactly coupled. Thus, the buyers' total
    utility is the same in each instance. Furthermore, the revenue generated by
    the base price of each bundle is equal to the revenue generated in the
    fixed-capacity instance, and the surcharge covers the production costs.

    This concludes the proof of the lemma.
\end{proof}

\subsection{Proof of Lemma~\ref{lem:arbit-cost}}

\begin{numberedlemma}{\ref{lem:arbit-cost}}
    For all non-decreasing costs $\costs$ and every fractional allocation $x$,
    there exists a fractional unit allocation $(x',\tau)$ such that
    \[
        \fval(x,\costs)\le  O(\log L/\log\log L)\fval(x',\tau,\costs).
    \]
\end{numberedlemma}

\begin{proof}
  We partition the multiset of items as well as the fractional
  allocation $x$ into layers using the procedure described in the
  proof of Theorem~\ref{thm:arbit-capacity} in
  Section~\ref{sec:arbit-cap-ub}. Let $Y_r$ denote the $r$th layer and
  $x^{(r)}$ denote the unscaled fractional allocation associated with
  layer $r$, so that $\sum_r x^{(r)} = x$. Let $S^{(r)}$ denote the
  set of jobs associated with $x^{(r)}$, so that
  $x^{(r)} = x_{S^{(r)}}$. Observe that by the manner in which we
  create layers, the sets $S^{(r)}$ partition the set of all jobs.

  Let
  $b'_{tr} = \sum_{j\in S^{(r)}: \intj \ni t}x_j = \sum_{j: \intj \ni
    t} x^{(r)}_j$ be the total fractional weight of jobs containing
  item $t$ in level $r$. Let $\hat{r}_t$ be the maximum layer index
  $r$ such that $b'_{tr}>0$. By the nature of the layering process,
  every layer below $\hat{r}_t$ is ``filled'': $1 \le b'_{tr} < 4$ for
  every $r < \hat{r}_t$. Then
  $\sum_{r\geq 1}b'_{tr}=\sum_{r\geq 1}b_{tr}=b_t$, and
  $\sum_{r\geq k}b'_{tr}\le \sum_{r\geq k}b_{tr}$ for every $k>1$,
  where the latter inequality follows by recalling that $b_{tr}\le 1$
  and $b'_{tr}\ge 1$ for all $r$. If we associate cost $c_{tr}$ with
  item $t$ in all jobs assigned to layer $r$, then the total cost of
  item $t$ is
\begin{eqnarray*}
\sum_{r\geq 1}b'_{tr}c_{tr}&=&c_{t1}\sum_{r\geq 1}b'_{tr}+(c_{t2}-c_{t1})\sum_{r\geq 2}b'_{tr}+(c_{t3}-c_{t2})\sum_{r\geq 3}b'_{tr}+\cdots\\
&\leq&c_{t1}\sum_{r\geq 1}b_{tr}+(c_{t2}-c_{t1})\sum_{r\geq 2}b_{tr}+(c_{t3}-c_{t2})\sum_{r\geq 3}b_{tr}+\cdots=\sum_{r\geq 1}b_{tr}c_{tr},
\end{eqnarray*} 
    with the last term being exactly the cost of item $t$ in
    fractional allocation $x$. 

    Now consider the fractional allocation $\xr$ for each layer
    $\layer$, defined as $\xr=x^{(r)}/4$. This is a supply feasible
    allocation of items in layer $\layer$. Furthermore, summing over
    all layers, the sum of jobs' valuations under this suite of
    allocations is exactly a quarter of that under $x$, whereas the
    cost is exactly a quarter of the cost
    $\sum_{r\geq 1}b'_{tr}c_{tr}$ defined above for the unscaled
    allocations $x^{(r)}$. We therefore conclude that
    \[\sum_r \fval(\xr) = \frac 14\sum_r \left(\sum_{j\in S^{(r)}} v_jx_j - b'_{tr}c_{tr}\right) \ge \frac{1}{4} \fval(x).\]

    We will now convert each $\xr$ into a unit allocation
    corresponding to layer $\layer$ by reducing it to an appropriate
    unit-capacity setting without costs. Observe that we can write
    \[\sum_r \fval(\xr) = \frac 14\sum_r \left(\sum_{j\in S^{(r)}} v_jx_j - b'_{tr}c_{tr}\right) = \frac{1}{4}\sum_{j\in S^{(r)}} \Big(v_j-\sum_{t\in I_j}c_{tr}\Big)x_j.\]

    Define a unit-capacity setting over the set of items $\layer$,
    where the value of a job $j$ is given by
    $v'_j=v_j-\sum_{t\in I_j}c_{tr}$. The allocation $\xr$ is
    demand-feasible for this setting. We can therefore apply
    Theorem~\ref{thm:unit-capacity} to obtain a unit allocation $\xpr$
    for this setting with the property that
    $\fval(\xr)\le O(\log L/\log\log L) \fval(\xpr)$. Let
    $\{T_{1r}, T_{2r}, \cdots\}$ denote the partition of items in
    $\layer$ corresponding to this unit allocation. Set
    $\tau_{kr} = \{(t,r): t\in T_{kr}\}$. Then the pair
    $(\sum_r \xpr,\{\tau_{k,r}\})$ forms a unit allocation for our
    original setting that obtains the claimed welfare bound.
\end{proof}

\section*{Acknowledgements}
We are grateful to Dimitris Paparas for discussions on this work.

\bibliographystyle{plainnat}
\bibliography{newref,welfare,scheduling,agt,bmd,ea}


\end{document}